\documentclass[preprint,10pt]{elsarticle}
\usepackage{lipsum}
\makeatletter
\def\ps@pprintTitle{%
  \let\@oddhead\@empty
  \let\@evenhead\@empty
  \def\@oddfoot{}%
  \let\@evenfoot\@oddfoot}

\usepackage{amssymb}
\usepackage{amsthm}
\usepackage{lineno}
\usepackage{amsfonts}
\usepackage{amsmath}
\usepackage{graphicx}
\usepackage{cases}
\usepackage{epstopdf}
\usepackage{caption}
\usepackage{algorithmic}
\usepackage{mathabx}
\biboptions{sort&compress}

\ifpdf
  \DeclareGraphicsExtensions{.eps,.pdf,.png,.jpg}
\else
  \DeclareGraphicsExtensions{.eps}
\fi

\usepackage{amssymb}
\usepackage{mathtools}

\usepackage[mathscr]{eucal}
\usepackage{color}
\usepackage{geometry}
\usepackage{tikz}
\usetikzlibrary{shapes,arrows}
\usetikzlibrary{decorations.pathreplacing,angles,quotes}

\usepackage{todonotes}

\usepackage{dsfont}
\usepackage{cleveref}
\crefname{assumption}{Assumption}{Assumptions}
\Crefname{assumption}{Assumption}{Assumptions}
\usepackage{verbatim}
\usepackage{graphicx}
\usepackage{subfig}
\usepackage{tikz}
\usepackage{pgfplots}

\newtheorem{theorem}{Theorem}[section]
\newtheorem{lemma}[theorem]{Lemma}
\newtheorem{proposition}[theorem]{Proposition}
\newtheorem{corollary}[theorem]{Corollary}
\numberwithin{equation}{section}
\newtheorem{definition}{Definition}[section]
\newtheorem{remark}{Remark}
\newtheorem{assumption}{Assumption}[section]
\newtheoremstyle{examplestyle}
  {}{}%
  {\normalfont}%
  {}%
  {\itshape}%
  {.}%
  {0.5em}%
  {}

\theoremstyle{examplestyle}
\newtheorem{example}{Example}
\begin{document}
\begin{frontmatter}
\title{A Pricing and Hedging Framework for Infinite-Dimensional Financial Markets}
\author{Hanchao Liu} 
\address{Department of Decision Sciences, HEC Montr\'eal, Montr\'eal, QC, Canada\\
(email: hanchao.liu@hec.ca)}
\begin{abstract}
In this paper, we provide a comprehensive study of how the classical
Black--Scholes-type market framework behaves in an infinite-dimensional
setting. The asset prices are modeled by stochastic evolution equations
in a separable Hilbert space. The evolution of the risk-free asset and
the corresponding discounting operators are governed by a strongly
continuous group of bounded linear operators. Accordingly, we consider
contingent claims with Hilbert-space-valued payoffs, whose values are
discounted through the corresponding operators. Within this framework,
we investigate the pricing and hedging of contingent claims, deriving
the corresponding infinite-dimensional Black--Scholes equations and
establishing exact replication results for strong solutions and
asymptotic replication results for mild solutions through Yosida
approximations. We also study the special case where the infinitesimal
generator is given by $A=rI$, allowing for real-valued contingent claims. We further investigate no-arbitrage and market completeness, introducing
a Hilbert lattice structure to formulate arbitrage and characterizing
attainability through operator equations associated with martingale
representations. Finite-dimensional noise truncations and Yosida
approximations are employed to establish approximate and asymptotic
approximate completeness results. We introduce geometric Brownian motions in Hilbert spaces and derive sufficient conditions for the existence and uniqueness of an equivalent risk-neutral measure, together with explicit pricing formulas for some tractable examples. These results extend key aspects of the classical Black--Scholes framework to markets with infinite-dimensional asset-price dynamics. As a concrete application, the model can be interpreted as a segmented market consisting of a sequence of one-dimensional geometric Brownian motions.
\end{abstract}
\begin{keyword}
 Infinite-dimensional financial markets; Hilbert-space-valued Kolmogorov equations; Hilbert-space-valued contingent claims; Hedging and replication; Market completeness; Geometric Brownian motions in Hilbert spaces; Segment Markets.
\end{keyword}
\end{frontmatter}

\section{Introduction}

Infinite-dimensional stochastic models arise naturally in various areas of
mathematical finance, particularly when the market involves an entire
price curve or an infinite collection of asset prices. Examples include
interest-rate term structures and commodity forward-price curves.
Such curves and collections of asset prices can be represented as elements
of separable Hilbert spaces, with their evolution described by
infinite-dimensional stochastic evolution equations. This provides a
mathematical framework for investigating derivative pricing, hedging,
and other fundamental questions in financial markets.

A substantial part of the literature on infinite-dimensional financial markets has developed along two closely related directions. The first concerns interest-rate and term-structure models, where the infinite-dimensional structure arises naturally within the Heath--Jarrow--Morton (HJM) framework, as the evolution of the entire forward-rate curve across a continuum of maturities is modeled simultaneously. This approach has subsequently been developed using stochastic evolution equations and infinite-dimensional stochastic analysis; see, among others, \cite{heath1992bond,filipovic2001consistency,carmona2006interest,filipovic2010term,goldstein2017semigroups,taflin2005bond,aihara2005stochastic}.

A second direction concerns commodity and energy markets, where the infinite-dimensional structure also arises naturally from the modeling of forward-price curves across a continuum of delivery times. In this setting, the entire forward-price curve is represented as a function-valued stochastic process, and its dynamics can be formulated within a Hilbert-space framework. Such formulations, together with their applications to derivative pricing and hedging, have been studied in \cite{benth2014representation,benth2015derivatives,benth2023stochastic}.

Infinite-dimensional pricing equations may also arise in financial markets with finitely many traded assets when the underlying price dynamics exhibit additional structures, such as delay or hereditary dependence. In such models, the infinite-dimensional state space is introduced to incorporate the historical evolution of asset prices, even though the traded asset prices themselves remain finite-dimensional. In particular, European option pricing under hereditary price
structures was studied in \cite{chang1999european,chang2007infinite},
and the corresponding infinite-dimensional Black--Scholes equation
was derived. Delayed stock price models have also been studied in
\cite{arriojas2007delayed}, where a delayed Black--Scholes pricing formula was derived within an infinite-dimensional state-space framework. Infinite-dimensional stochastic models also find applications in areas
of mathematical finance beyond derivative pricing, such as the modeling of limit order book dynamics; see, for example, \cite{cont2021}.

To motivate our infinite-dimensional formulation, consider the classical
one-dimensional Black--Scholes model under a risk-neutral measure:
\begin{equation}
dX_t=rX_t\,dt+\sigma X_t\,dW_t,
\end{equation}
where $r$ is the risk-free interest rate and $\sigma>0$ is the volatility.
Its solution can be expressed in the mild form
\begin{equation}
X_t=e^{rt}X_0+\int_0^t e^{r(t-s)}\sigma X_s\,dW_s.
\end{equation}
By trivially viewing $\mathbb R$ as a separable Hilbert space $H=\mathbb R$, the classical Black--Scholes model can be formulated as a stochastic evolution equation admitting a mild solution. In particular, the family ${e^{rt}}_{t\in\mathbb R}$ forms a strongly continuous group on $H$. Its restriction to $t\geq 0$ is the semigroup governing the evolution of the risk-free asset, while its restriction to $t\leq 0$ provides the corresponding discounting operators. Moreover, the Brownian motion $W$ can be viewed as a Wiener process on $H$, and a contingent claim can be represented by a suitable payoff function $H\to H$. This observation motivates extending the classical Black--Scholes framework to a general separable Hilbert space $H$ by replacing $\mathbb R$ with $H$ and the corresponding elements of the classical model with their infinite-dimensional counterparts. In particular, the scalar exponential group $\{e^{rt}\}_{t\in\mathbb R}$ is replaced by a strongly continuous group of bounded linear operators on $H$.

In this paper, we provide a comprehensive study of the extension of the classical Black--Scholes-type market framework to an infinite-dimensional setting. Specifically, we consider an $H$-valued asset-price process governed under a risk-neutral measure by
\begin{equation} \label{eqgene}
dX_t=AX_t\,dt+B(X_t)\,dW_t^{\mathbb Q},
\end{equation}
where $A$ is the generator of a strongly continuous group on $H$ and $W^{\mathbb Q}$ is a $Q$-Wiener process. We focus on the existence of equivalent martingale measures, option pricing, hedging, and market completeness for Hilbert-space-valued asset-price processes admitting mild or strong solutions. We establish the regularity of the associated value functions, derive the corresponding infinite-dimensional Kolmogorov equations in Hilbert spaces, and obtain the Black--Scholes equations for Hilbert-space-valued
contingent claims. We also consider the special case $A=rI$, where the classical Black--Scholes pricing framework extends to real-valued contingent claims on the Hilbert state space.

We then investigate hedging, no-arbitrage, and market completeness. For strong solutions, we establish replication results directly, while for mild solutions, we employ Yosida approximations to construct self-financing portfolios in approximate markets. Using an appropriate Hilbert lattice structure, we formulate arbitrage and examine its relationship with equivalent risk-neutral measures. We also characterize market completeness in terms of the driving noise and the volatility operator, and introduce approximate completeness and asymptotic approximate completeness through finite-dimensional noise truncations and Yosida approximations. 

Among the dynamics described by \eqref{eqgene}, a particularly interesting case is what we call geometric Brownian motion in Hilbert space, which provides a natural extension of the classical one-dimensional geometric Brownian motion. This process can be viewed as a single \(H\)-valued asset-price process, while its diagonal structure also allows it to be interpreted as a collection of infinitely many one-dimensional geometric Brownian motions.  We also provide tractable examples of option pricing for both $H$-valued and real-valued contingent claims within this framework,
as well as a possible heuristic financial interpretation of our model
in terms of segmented markets. As a concrete financial interpretation, the model can be viewed as a segmented market consisting of infinitely many one-dimensional markets.

The remainder of the paper is organized as follows.
Section~\ref{phmild} develops the general pricing and hedging framework,
including the infinite-dimensional Black--Scholes equations and
replication through Yosida approximations.
Section~\ref{fundass} examines no-arbitrage and market completeness.
Finally, Section~\ref{GBMsec} studies geometric Brownian motions in
Hilbert space, presents explicit option-pricing examples, and provides
a possible interpretation in terms of segmented markets.

\section{Pricing and Hedging in Markets Driven by General Mild Solutions} \label{phmild}
In this section, we develop a pricing and hedging framework for Hilbert-space-valued asset-price processes admitting mild or strong solutions. As in the classical Black--Scholes model, we focus on contingent claims whose terminal payoffs are functions of the terminal asset price \(X_T\), rather than general terminal claims. We first introduce the market dynamics and examine the existence of an equivalent risk-neutral measure under suitable conditions. We then derive the corresponding infinite-dimensional Black--Scholes equation and investigate the replication of such contingent claims. The attainability of general terminal claims is addressed separately in the subsequent analysis of market completeness.

Let $X=(X_t)_{t\in[0,T]}$ be $H$-valued and satisfy
\begin{equation}
dX_t = \bigl(AX_t+F(X_t)\bigr)\,dt+B(X_t)\,dW_t, \quad X_0 = x\in H,
\end{equation}

where $A:D(A)\subset H\to H$ generates a $C_0$-group
$\{S(t)\}_{t\in\mathbb R}$, $F:H\to H$ is the drift coefficient,
$B:H\to L_2(U_0,H)$ is the volatility coefficient, and $L_2(U_0,H)$ denotes the space of Hilbert--Schmidt operators from $U_0$ to $H$. Moreover, $W$ is a $Q$-Wiener process,
where $Q$ is a positive trace-class operator on $U$ with eigenvalues $\{\lambda_i\}_{i\geq1}$ and corresponding orthonormal eigenvectors $\{f_i\}_{i\geq1}$, satisfying
$\lambda_i>0$ for all $i\geq1$, and
$U_0=Q^{1/2}(U)$. We assume that $(\mathcal F_t)_{t\in[0,T]}$ is the usual
augmentation of the natural filtration generated by $W$.
The corresponding mild solution is
\begin{equation}
X_t
=
S(t)x
+\int_0^t S(t-r)F(X_r)\,dr
+\int_0^t S(t-r)B(X_r)\,dW_r.
\end{equation}

As a special case, suppose that $A\in\mathcal L(H)$. Then $S(t)$ is uniformly continuous in the operator norm, and the mild solution is also a strong solution. The two formulations coincide.
In particular,
\begin{align} \label{xstrong}
X_t
&=
x+\int_0^t\bigl(AX_r+F(X_r)\bigr)\,dr
+\int_0^t B(X_r)\,dW_r
\nonumber\\
&=
S(t)x
+\int_0^t S(t-r)F(X_r)\,dr
+\int_0^t S(t-r)B(X_r)\,dW_r.
\end{align}
In this infinite-dimensional setting, the group $S(t)$ plays the role
of the accumulation factor, analogous to $e^{rt}$ in the classical
scalar framework. Accordingly, its inverse $S(-t)$ can be interpreted
as the corresponding discount operator, and the discounted risky asset
price process is therefore $S(-t)X_t$. 

Suppose that there exists a progressively measurable $U_0$-valued
process $\theta=(\theta_t)_{t\in[0,T]}$ such that
\begin{equation}
F(X_t)=B(X_t)\theta_t,
\qquad \text{a.e. } t\in[0,T],\quad \mathbb{P}\text{-a.s.}
\end{equation}
Assume further that $\theta$ satisfies the conditions required for
Girsanov's theorem; see, e.g., \cite[Theorem~10.14 and
Proposition~10.17]{da2014stochastic}. Define the probability measure
$\mathbb{Q}$ by
\begin{equation}
\frac{d\mathbb{Q}}{d\mathbb{P}}
=
\exp\left(
-\int_0^T \langle \theta_t,dW_t\rangle_{U_0}
-\frac{1}{2}\int_0^T\|\theta_t\|_{U_0}^2\,dt
\right).
\end{equation}
Then, under $\mathbb{Q}$, the process
\begin{equation}
W_t^{\mathbb{Q}}
=
W_t+\int_0^t\theta_s\,ds
\end{equation}
is a Wiener process, and the dynamics of the risky asset become
\begin{equation}
dX_t
=
AX_t\,dt+B(X_t)\,dW_t^{\mathbb{Q}}.
\end{equation}
Consequently, its mild representation under $\mathbb{Q}$ is
\begin{equation} \label{Qmild}
X_t
=
S(t)x
+
\int_0^t S(t-r)B(X_r)\,dW_r^{\mathbb{Q}}.
\end{equation}
Thus, the discounted asset price process is given by
\begin{equation}
S(-t)X_t
=
x+\int_0^t S(-r)B(X_r)\,dW_r^{\mathbb Q},
\end{equation}
and hence, under the usual square-integrability condition, 
$\bigl(S(-t)X_t\bigr)_{t\in[0,T]}$ is an $H$-valued martingale under
$\mathbb Q$. When $A\in\mathcal L(H)$, the mild solution \eqref{Qmild} is also a strong solution and satisfies
\begin{equation}\label{Qstrong}
X_t
=
x
+\int_0^t AX_r\,dr
+\int_0^t B(X_r)\,dW_r^{\mathbb Q}.
\end{equation}
\begin{remark} \label{rnm}
The existence of such a process $\theta$ may be difficult to verify in the
general infinite-dimensional setting. For example, when a suitable
(pseudo-)inverse of $B(x)$ is available, a natural candidate is $\theta_t=B(X_t)^{+}F(X_t)$. However, even when the above expression is well defined, its progressive
measurability and the integrability conditions required for the Girsanov
transformation do not follow automatically from the standard assumptions on
$F$ and $B$ and may require additional structural assumptions.
\end{remark}
\subsection{Infinite-Dimensional Black--Scholes Equation}
We next consider the pricing of a contingent claim with maturity $T$ and
$H$-valued payoff $G(X_T)$, where $G:H\to H$ is a given payoff function.
Denote by $V_t$ the value of the claim at time $t$ and suppose that it admits
the Markovian representation $V_t=V(t,X_t)$ for some
$V:[0,T]\times H\to H$, with terminal condition $V(T,x)=G(x)$. We first introduce the following notation. For
$T\in L_2(U_0,H)$ and
$S\in\mathcal L(H,\mathcal L(H))$, define
\begin{equation}
\operatorname{Tr}_T(S)
:=
\sum_{k=1}^{\infty}
(S(Te_k))(Te_k),
\end{equation}
where $(e_k)_{k\geq1}$ is an arbitrary orthonormal basis of $U_0$. The series converges absolutely and is independent of the choice of orthonormal basis. 

We now informally derive the corresponding pricing equation.\footnote{For the
purpose of this informal derivation, we assume that $X_t$ is a strong
solution of the state equation and that $V(t,x)\in D(A)$.} Let the bank account
be given by $B_t=S(t)e$, where $e\in D(A)$, 
and suppose that the value process can be
represented as
\begin{equation}
V_t=\phi_t X_t+\Xi_tB_t,
\end{equation}
where $\phi_t$ and $\Xi_t$ are operator-valued processes. Assume that the strategy is self-financing, namely,
\begin{equation}
dV_t=\phi_t\,dX_t+\Xi_t\,dB_t.
\end{equation}
Under the risk-neutral measure $\mathbb Q$, the risky asset and the bank
account satisfy
\begin{equation}
dX_t=AX_t\,dt+B(X_t)\,dW_t^{\mathbb Q},
\qquad
dB_t=AB_t\,dt.
\end{equation}
Suppose further that $\phi_t$ and $\Xi_t$ commute with $A$. Then
\begin{equation}
\phi_tAX_t+\Xi_tAB_t
=
A\bigl(\phi_tX_t+\Xi_tB_t\bigr)
=
AV_t,
\end{equation}
and hence the self-financing condition gives
\begin{equation}
dV_t=AV_t\,dt+\phi_tB(X_t)\,dW_t^{\mathbb Q}.
\end{equation}

On the other hand, applying It\^o's formula to $V(t,X_t)$ gives
\begin{align}
dV(t,X_t)
={}&
\Bigg(
\partial_tV(t,X_t)
+
D_xV(t,X_t)AX_t
+
\frac12
\operatorname{Tr}_{B(X_t)}
\left(
D_x^2V(t,X_t)
\right)
\Bigg)dt
\nonumber\\
&+
D_xV(t,X_t)B(X_t)\,dW_t^{\mathbb Q}.
\end{align}
Here, the trace term denotes the contraction of the second
Fr\'echet derivative with the covariance operator induced by the volatility
coefficient. Comparing the drift terms yields the pricing equation
\begin{equation} \label{pde1}
\partial_tV(t,x)
+
D_xV(t,x)[Ax]
+
\frac12
\operatorname{Tr}_{B(x)}
\left(
D_x^2V(t,x)
\right)
-
AV(t,x)
=0,
\end{equation}
with terminal condition $V(T,x)=G(x)$.

As in the classical finite-dimensional setting, the risk-neutral price is
obtained by taking the conditional expectation of the discounted terminal
payoff and then applying the accumulation operator from time $t$ to maturity.
Accordingly, the price of the contingent claim is given by
\begin{equation}\label{prifo1}
V(t,x)
=
S(t-T)\,
\mathbb E^{\mathbb Q}\!\left[
G(X_T)\mid\mathcal F_t
\right]
=
S(t-T)\,
\mathbb E^{\mathbb Q}\!\left[
G(X_T^{t,x})
\right],
\end{equation}
where the second equality follows from the Markov property; see
\cite[Theorem~9.14]{da2014stochastic}.

We point out that, in interest rate models where $X$ represents the
forward rate curve under the Musiela parametrization, the discounted
process $e^{-rt}X_t$ is not required to be a martingale, while the pricing formula is still given by
$e^{-r(T-t)}\mathbb{E}^{\mathbb{Q}}[G(X_T^{t,x})]$.

We now show that, under appropriate assumptions, the pricing formula
\eqref{prifo1} satisfies the pricing equation \eqref{pde1}.
\begin{assumption}\label{ass:pricing_regular}
The following regularity conditions hold:
\begin{enumerate}
    \item For every $(t,x)\in[0,T]\times H$,
    $\mathbb E^{\mathbb Q}\!\left[G(X_T^{t,x})\right]\in D(A)$\footnote{For instance, this condition is satisfied if the range of $G$
    is contained in a closed linear subspace $E$ of $H$ such that
    $E\subset D(A)$.}

    \item The payoff function $G:H\to H$ is twice Fr\'echet differentiable,
    and $G$, $D_xG$, and $D_{xx}^2G$ are bounded and Lipschitz continuous.

\item For every $n\ge1$, $B\in\operatorname{Hom}(A_n,\widetilde A_n)$, where $\widetilde A_n$ denotes the operator induced by $A_n$ on
$L_2(U_0,H)$.
    \item For every $x\in H$ and every $e_k$ in an orthonormal basis of
    $U_0$,
    \begin{equation}
    X_t^{B(x)e_k}=B(X_t^x)e_k,
    \qquad t\in[0,T].
    \end{equation}
\end{enumerate}
\end{assumption}
We first present the following standard lemma without proof.
\begin{lemma}\label[lemma]{comlem}
Let $K\subset\mathbb R$ be compact, and let
$T_n,T:K\to L(E)$ be families of bounded linear operators on a Banach space
$E$ such that
\begin{equation}
\sup_{n\ge1}\sup_{t\in K}\|T_n(t)\|_{L(E)}<\infty,
\qquad
\sup_{t\in K}\|T(t)\|_{L(E)}<\infty,
\end{equation}
and, for every $x\in E$,
\begin{equation}
\sup_{t\in K}
\|T_n(t)x-T(t)x\|_E
\longrightarrow0.
\end{equation}
Then, for every compact subset $C\subset E$,
\begin{equation}
\sup_{t\in K}\sup_{x\in C}
\|T_n(t)x-T(t)x\|_E
\longrightarrow0.
\end{equation}
\end{lemma}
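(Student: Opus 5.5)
The plan is the standard $\varepsilon/3$ argument: combine total boundedness of $C$ with the uniform operator bound in the statement. Set
\begin{equation*}
M:=\sup_{n\ge1}\sup_{t\in K}\|T_n(t)\|_{L(E)}+\sup_{t\in K}\|T(t)\|_{L(E)}<\infty .
\end{equation*}
The point of $M$ is that, for every $t\in K$ and every $n$, the map $x\mapsto T_n(t)x-T(t)x$ is Lipschitz on $E$ with constant at most $M$, uniformly in both $t$ and $n$. This equi-Lipschitz property is what lets us transfer convergence from finitely many points to all of $C$.

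Fix $\varepsilon>0$. Since $C$ is compact, it is totally bounded, so there are finitely many points $x_1,\dots,x_m\in C$ with $C\subset\bigcup_{j=1}^m B(x_j,\varepsilon/(3M))$. (If $M=0$ the claim is trivial.) By hypothesis, for each $j$ we have $\sup_{t\in K}\|T_n(t)x_j-T(t)x_j\|_E\to0$. Since there are only finitely many $j$, we can pick a single $N$ such that
\begin{equation*}
\max_{1\le j\le m}\sup_{t\in K}\|T_n(t)x_j-T(t)x_j\|_E<\varepsilon/3\qquad\text{for all }n\ge N .
\end{equation*}

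Now take any $x\in C$ and choose $j$ with $\|x-x_j\|_E<\varepsilon/(3M)$. For $n\ge N$ and $t\in K$, the triangle inequality gives
\begin{equation*}
\|T_n(t)x-T(t)x\|_E\le \|T_n(t)(x-x_j)\|_E+\|T_n(t)x_j-T(t)x_j\|_E+\|T(t)(x_j-x)\|_E .
\end{equation*}
The middle term is less than $\varepsilon/3$ by the choice of $N$. The first and third terms together are at most $M\|x-x_j\|_E<\varepsilon/3$. Hence $\|T_n(t)x-T(t)x\|_E<\varepsilon$.

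This bound does not depend on $t\in K$ or $x\in C$. Taking the supremum over both gives $\sup_{t\in K}\sup_{x\in C}\|T_n(t)x-T(t)x\|_E\le\varepsilon$ for all $n\ge N$, which proves the lemma.

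The main point requiring care is making $N$ independent of $t$. This works because the pointwise hypothesis is already uniform in $t\in K$, so taking a maximum over the finite net suffices. Compactness of $K$ is not otherwise needed beyond the stated uniform bounds. No earlier result of the paper is required.
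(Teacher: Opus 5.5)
The paper presents this lemma as standard and omits its proof, so there is no argument to compare against. Your $\varepsilon/3$ argument is the standard proof and is correct as written: it uses a finite $\varepsilon/(3M)$-net of $C$, the uniform operator bound $M$, and a single $N$ for the finitely many net points. Your remark that compactness of $K$ plays no role is also right, since the hypotheses are already uniform in $t$.
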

Let $(A_n)_{n\ge1}$ be the Yosida approximations of $A$, and let
$(S_n(t))_{t\ge0}$ be the corresponding semigroups. Then
$S_n(t)\to S(t)$ strongly, uniformly on $[0,T]$. For every $x\in H$, define $y_t^x:=S(-t)X_t^x$. Then $y^x$ satisfies
\begin{equation} \label{dyy}
y_t^x
=
x+\int_0^t S(-s)B(X_s^x)\,dW_s^{\mathbb Q}
=
x+\int_0^t S(-s)B(S(s)y_s^x)\,dW_s^{\mathbb Q}.
\end{equation}
Since $t\mapsto y_t^x=S(-t)X_t^x$ is continuous from $[0,T]$ into
$L^2(\Omega;H)$, its range is compact in $L^2(\Omega;H)$.
Hence, by \Cref{comlem},
\begin{equation} \label{convsn}
\sup_{t\in[0,T]}
\mathbb E^{\mathbb Q}
\left[
\left\|
\bigl(S_n(t)-S(t)\bigr)S(-t)X_t^x
\right\|_H^2
\right]
\longrightarrow 0.
\end{equation}
The following lemma allows us to apply \eqref{convsn} to
$L^2(\Omega;L_2(U_0,H))$-valued random variables.
\begin{lemma}\label[lemma]{conversnl2}
Let $K\subset[0,\infty)$ be compact. Then, for every
$R\in L^2(\Omega;L_2(U_0,H))$,
\begin{equation}
\sup_{t\in K}
\mathbb E
\left[
\|(S_n(t)-S(t))R\|_{L_2(U_0,H)}^2
\right]
\longrightarrow0.
\end{equation}
\end{lemma}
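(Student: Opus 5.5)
The plan is to reduce the statement to \Cref{comlem} together with dominated convergence, working with the Hilbert--Schmidt norm column by column. Fix an orthonormal basis $(e_k)_{k\ge1}$ of $U_0$. For $R\in L^2(\Omega;L_2(U_0,H))$ we write
\begin{equation*}
\|(S_n(t)-S(t))R\|_{L_2(U_0,H)}^2=\sum_{k=1}^\infty \|(S_n(t)-S(t))Re_k\|_H^2 .
\end{equation*}
The two ingredients are a uniform operator bound and pointwise convergence. By the Hille--Yosida theory, there are constants $M\ge1$ and $\omega\ge0$ with $\sup_n\sup_{t\in K}\|S_n(t)\|_{L(H)}\le Me^{2\omega\sup K}$ for $n$ large; the same bound holds for $\|S(t)\|$. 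Hence each summand is dominated by $C\|Re_k\|_H^2$, and $\sum_k C\|Re_k\|^2=C\|R\|_{L_2}^2$ is integrable in $\omega\in\Omega$.

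First, for fixed $\omega$ and $k$, the point $h=R(\omega)e_k$ is a single vector, so $\sup_{t\in K}\|(S_n(t)-S(t))h\|\to0$ by the uniform strong convergence of the Yosida semigroups on compacts. Applying dominated convergence to the sum over $k$, with dominating sequence $C\|R(\omega)e_k\|^2$, gives, for each fixed $\omega$,
\begin{equation*}
f_n(\omega):=\sup_{t\in K}\|(S_n(t)-S(t))R(\omega)\|_{L_2}^2\longrightarrow 0 .
\end{equation*}
Here one uses $\sup_t\sum_k\le\sum_k\sup_t$.

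Second, we have $f_n(\omega)\le C\|R(\omega)\|_{L_2}^2\in L^1(\Omega)$. Dominated convergence on $\Omega$ then gives $\mathbb E[f_n]\to0$, and this dominates $\sup_{t\in K}\mathbb E[\cdot]$. Measurability of $f_n$ is handled by restricting the supremum to rational $t\in K$, which is legitimate because $t\mapsto S_n(t)h$ and $t\mapsto S(t)h$ are continuous.

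The main obstacle is the uniform-in-$n$ operator bound on $K$ and the interchange of the supremum in $t$ with the infinite sum. Both are handled by the domination above. Alternatively, one could mirror the paper's route via \eqref{convsn}: the operator $R\mapsto S_n(t)R$ on the Hilbert space $E=L^2(\Omega;L_2(U_0,H))$ is uniformly bounded, so \Cref{comlem} applies with $C=\{R\}$, and only strong convergence for each fixed $R$ needs to be checked. That check is exactly the dominated convergence argument above.
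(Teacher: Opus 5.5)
Your proposal is correct and follows essentially the same route as the paper: you expand the Hilbert--Schmidt norm column by column, bound $\sup_{t\in K}\sum_k$ by $\sum_k\sup_{t\in K}$, and apply dominated convergence over $k$ for fixed $\omega$ with the uniform bound $C_K\|R(\omega)e_k\|_H^2$. You then apply dominated convergence over $\Omega$ to get $\mathbb E[\sup_{t\in K}\|(S_n(t)-S(t))R\|_{L_2(U_0,H)}^2]\to0$, which dominates $\sup_{t\in K}\mathbb E[\cdot]$. One cosmetic fix to your measurability remark: take the supremum over a countable dense subset of $K$ rather than the rationals in $K$, which need not be dense in $K$; this works because $t\mapsto\|(S_n(t)-S(t))R(\omega)\|_{L_2(U_0,H)}^2$ is lower semicontinuous as a sum of nonnegative continuous functions.
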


\begin{proof}
Let $(e_k)_{k\ge1}$ be an orthonormal basis of $U_0$. For almost every
$\omega$,
\begin{equation}
\sup_{t\in K}
\left\lVert(S_n(t)-S(t))R(\omega)\right\rVert_{L_2(U_0,H)}^2
\le
\sum_{k=1}^\infty
\sup_{t\in K}
\left\lVert(S_n(t)-S(t))R(\omega)e_k\right\rVert_H^2.
\end{equation}
For every $k$, the summand converges to zero, while it is bounded by $C_K\|R(\omega)e_k\|_H^2$, with $C_K$ independent of $n$. Hence, by dominated convergence,
\begin{equation}
\sup_{t\in K}
\|(S_n(t)-S(t))R(\omega)\|_{L_2(U_0,H)}^2
\longrightarrow 0,\quad \text{a.s.}
\end{equation}
Moreover,
\begin{equation}
\sup_{t\in K}
\|(S_n(t)-S(t))R(\omega)\|_{L_2(U_0,H)}^2
\le
C_K\|R(\omega)\|_{L_2(U_0,H)}^2.
\end{equation}
Since $R\in L^2(\Omega;L_2(U_0,H))$, another application of dominated
convergence gives
\begin{equation}
\mathbb E
\left[
\sup_{t\in K}
\|(S_n(t)-S(t))R\|_{L_2(U_0,H)}^2
\right]
\longrightarrow0.
\end{equation}
Therefore,
\begin{equation}
\sup_{t\in K}
\mathbb E
\left[
\|(S_n(t)-S(t))R\|_{L_2(U_0,H)}^2
\right]
\longrightarrow0.
\end{equation}
\end{proof}
\begin{lemma}\label{frecmi}
Suppose that $B\in L(H,L_2(U_0,H))$, and let
\begin{equation}
u(t,x):=\mathbb E^{\mathbb Q}\left[G(X_t^x)\right].
\end{equation}
Assume that $G'$ and $G''$ are Lipschitz continuous. Then, for each
$t\ge0$, the mapping $u(t,\cdot):H\to H$ is twice Fr\'echet
differentiable, and
\begin{equation}
D_xu(t,x)[h]
=
\mathbb E^{\mathbb Q}
\left[
G'(X_t^x)X_t^h
\right],
\qquad h\in H,
\end{equation}
and
\begin{equation}
D_x^2u(t,x)[h]k
=
\mathbb E^{\mathbb Q}
\left[
\bigl(G''(X_t^x)X_t^h\bigr)X_t^k
\right],
\qquad h,k\in H.
\end{equation}
\end{lemma}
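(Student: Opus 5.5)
The approach rests on linearity of the flow. Since $B\in L(H,L_2(U_0,H))$ is linear and there is no drift besides $A$, the mild equation
\begin{equation*}
X_t^x=S(t)x+\int_0^t S(t-r)B(X_r^x)\,dW_r^{\mathbb Q}
\end{equation*}
is linear in $x$. By uniqueness of mild solutions, $x\mapsto X_t^x$ is then a linear map into $L^2(\Omega;H)$, so $X_t^{x+h}=X_t^x+X_t^h$ a.s. The standard Gronwall estimate gives $\mathbb E\|X_t^h\|^p\le C_{p,T}\|h\|^p$ for $p=2,4$; I would state the $p$-th moment version via the Burkholder--Davis--Gundy inequality and $\sup_{[0,T]}\|S(t)\|<\infty$. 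This reduces the differentiability of $u(t,\cdot)$ to a deterministic Taylor expansion of $G$ along the random direction $X_t^h$, followed by moment bounds.

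For the first derivative, fix $t$ and $x$. Using linearity, write
\begin{equation*}
u(t,x+h)-u(t,x)-\mathbb E\bigl[G'(X_t^x)X_t^h\bigr]
=\mathbb E\!\left[\int_0^1\bigl(G'(X_t^x+\tau X_t^h)-G'(X_t^x)\bigr)X_t^h\,d\tau\right].
\end{equation*}
By the Lipschitz property of $G'$, the norm of this is bounded by $\tfrac{L}{2}\mathbb E\|X_t^h\|^2\le C\|h\|^2=o(\|h\|)$. The candidate $h\mapsto \mathbb E[G'(X_t^x)X_t^h]$ is linear in $h$ by linearity of the flow. It is bounded because $G'$ is bounded (Assumption~\ref{ass:pricing_regular}) and $\mathbb E\|X_t^h\|\le C\|h\|$. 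This yields the formula for $D_xu$.

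For the second derivative, I repeat the argument on $x\mapsto D_xu(t,x)\in L(H,H)$. I would show
\begin{equation*}
\sup_{\|k\|\le1}\Bigl\|\mathbb E\Bigl[\bigl(G'(X_t^x+X_t^h)-G'(X_t^x)-G''(X_t^x)X_t^h\bigr)X_t^k\Bigr]\Bigr\|=o(\|h\|).
\end{equation*}
The Lipschitz continuity of $G''$ gives the pointwise bound $\tfrac{L''}{2}\|X_t^h\|^2\|X_t^k\|$ on the integrand. Cauchy--Schwarz then gives $C(\mathbb E\|X_t^h\|^4)^{1/2}(\mathbb E\|X_t^k\|^2)^{1/2}\le C\|h\|^2\|k\|$. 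The candidate bilinear form $(h,k)\mapsto\mathbb E[(G''(X_t^x)X_t^h)X_t^k]$ is bounded and bilinear, again by linearity of the flow and boundedness of $G''$.

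The main obstacle I expect is the uniform-in-$k$ moment control, namely the fourth-moment bound $\mathbb E\|X_t^h\|^4\le C\|h\|^4$ for the mild linear equation. This needs BDG applied to a stochastic convolution. Since $S$ is a group, I would sidestep the convolution issue by working with $y_t^h=S(-t)X_t^h$ from \eqref{dyy}, which is a genuine stochastic integral. BDG then applies directly and gives
\begin{equation*}
\mathbb E\|y_t^h\|^4\le C\|h\|^4+C\,\mathbb E\Bigl(\int_0^t\|S(-s)B\|^2\|S(s)\|^2\|y_s^h\|^2\,ds\Bigr)^2,
\end{equation*}
and Gronwall closes the estimate. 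Measurability of the expectation of operator-valued integrands is handled by noting that all integrands are Bochner integrable in $H$ for fixed $h,k$.
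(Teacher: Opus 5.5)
Your proposal is correct and follows essentially the same route as the paper: linearity of the flow, $X_t^{x+h}=X_t^x+X_t^h$, then a Taylor remainder bounded by $\tfrac{L}{2}\mathbb E\|X_t^h\|^2$ for the first derivative, then the analogous estimate for the second derivative using Cauchy--Schwarz and the fourth-moment bound uniformly over $\|k\|\le1$. Your extra steps (deriving the fourth-moment bound through $y_t^h=S(-t)X_t^h$ with BDG and Gronwall, and checking that the candidate derivatives are bounded) fill in what the paper only asserts; for the boundedness check you do not need Assumption~\ref{ass:pricing_regular}, since the lemma's Lipschitz hypotheses already give linear growth of $G'$ and $G''$, and together with the moment bounds that suffices.
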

\begin{proof}
Since the equation is linear in the initial condition, $X_t^{x+h}=X_t^x+X_t^h$. Moreover, for every $p\ge2$,
\begin{equation}
\mathbb E^{\mathbb Q}
\left[
\lVert X_t^h\rVert_H^p
\right]
\le
C_{p,t}\lVert h\rVert_H^p.
\end{equation}

Let $L_1$ denote the Lipschitz constant of $G'$. By the fundamental
theorem of calculus,
\begin{equation}
G(X_t^x+X_t^h)-G(X_t^x)-G'(X_t^x)X_t^h
=
\int_0^1
\left(
G'(X_t^x+\theta X_t^h)-G'(X_t^x)
\right)X_t^h\,d\theta,
\end{equation}
and hence
\begin{equation}
\left\lVert
G(X_t^x+X_t^h)-G(X_t^x)-G'(X_t^x)X_t^h
\right\rVert_H
\le
\frac{L_1}{2}\left\lVert X_t^h\right\rVert_H^2.
\end{equation}
Therefore,
\begin{equation}
\left\lVert
u(t,x+h)-u(t,x)
-
\mathbb E^{\mathbb Q}
\left[
G'(X_t^x)X_t^h
\right]
\right\rVert_H
\le
\frac{L_1}{2}
\mathbb E^{\mathbb Q}
\left[
\left\lVert X_t^h\right\rVert_H^2
\right]
\le
C_t\left\lVert h\right\rVert_H^2.
\end{equation}
Dividing by $\lVert h\rVert_H$ and letting $h\to0$ gives
\begin{equation}
D_xu(t,x)[h]
=
\mathbb E^{\mathbb Q}
\left[
G'(X_t^x)X_t^h
\right].
\end{equation}

We next differentiate $D_xu(t,\cdot)$. For $h,k\in H$,
\begin{equation}
D_xu(t,x+h)[k]-D_xu(t,x)[k]
=
\mathbb E^{\mathbb Q}
\left[
\left(
G'(X_t^x+X_t^h)-G'(X_t^x)
\right)X_t^k
\right].
\end{equation}
Let $L_2$ denote the Lipschitz constant of $G''$. Again by the
fundamental theorem of calculus,
\begin{equation}
G'(X_t^x+X_t^h)-G'(X_t^x)-G''(X_t^x)X_t^h
=
\int_0^1
\left(
G''(X_t^x+\theta X_t^h)-G''(X_t^x)
\right)X_t^h\,d\theta,
\end{equation}
so that
\begin{equation}
\left\lVert
G'(X_t^x+X_t^h)-G'(X_t^x)-G''(X_t^x)X_t^h
\right\rVert_{\mathcal L(H)}
\le
\frac{L_2}{2}\left\lVert X_t^h\right\rVert_H^2.
\end{equation}
Consequently,
\begin{align}
&
\left\lVert
D_xu(t,x+h)-D_xu(t,x)
-
\mathbb E^{\mathbb Q}
\left[
G''(X_t^x)X_t^h X_t^{(\cdot)}
\right]
\right\rVert_{\mathcal L(H)}
\nonumber\\
\le{}&
\frac{L_2}{2}
\sup_{\left\lVert k\right\rVert_H\le1}
\mathbb E^{\mathbb Q}
\left[
\left\lVert X_t^h\right\rVert_H^2
\left\lVert X_t^k\right\rVert_H
\right]
\nonumber\\
\le{}&
\frac{L_2}{2}
\left(
\mathbb E^{\mathbb Q}
\left[
\left\lVert X_t^h\right\rVert_H^4
\right]
\right)^{1/2}
\sup_{\left\lVert k\right\rVert_H\le1}
\left(
\mathbb E^{\mathbb Q}
\left[
\left\lVert X_t^k\right\rVert_H^2
\right]
\right)^{1/2}
\nonumber\\
\le{}&
C_t\left\lVert h\right\rVert_H^2.
\end{align}
Dividing by $\left\lVert h\right\rVert_H$ and letting $h\to0$ proves that
$D_xu(t,\cdot)$ is Fr\'echet differentiable and
\begin{equation}
D_x^2u(t,x)[h]k
=
\mathbb E^{\mathbb Q}
\left[
\bigl(G''(X_t^x)X_t^h\bigr)X_t^k
\right].
\end{equation}
\end{proof}
The above lemma also remains valid for any mapping $G:H\to E$, where $E$ is a separable Banach space, under the same smoothness assumptions on $G$. We next establish the Kolmogorov equation associated with the value function. The following result provides the required regularity and characterizes the value function as a classical solution. Our proof takes advantage of the fact that $S(t)$ forms a strongly continuous group. We expect the result to remain valid for general strongly continuous semigroups and that Assumption \ref{ass:pricing_regular} (3)--(4) may also be weakened or modified, although the proof would require more involved arguments; see, for instance, \cite{da2007kolmogorov} for the real-valued Kolmogorov equation in Hilbert spaces.
\begin{theorem}\label{lem:kolmogorov}
Under Assumption~\ref{ass:pricing_regular}, let
\begin{equation}
u(t,x):=\mathbb E^{\mathbb Q}\!\left[G(X_t^x)\right].
\end{equation}
Then, for $(t,x)\in(0,T)\times D(A)$,
\begin{equation}\label{kolmogorov1}
\partial_tu(t,x)
=
D_xu(t,x)[Ax]
+
\frac{1}{2}
\operatorname{Tr}_{B(x)}
\left(D_x^2u(t,x)\right),
\end{equation}
with initial condition $u(0,x)=G(x)$.
\end{theorem}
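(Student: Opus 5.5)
We will prove \eqref{kolmogorov1} through the Yosida approximations. For each $n$, the operator $A_n$ is bounded, so the approximate equation $dX^{n}_t=A_nX^{n}_t\,dt+B(X^{n}_t)\,dW_t^{\mathbb Q}$ has a strong solution of the form \eqref{Qstrong}. We set $u_n(t,x):=\mathbb E^{\mathbb Q}[G(X_t^{n,x})]$.

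\textbf{Step 1 (approximate equation).} By \Cref{frecmi}, applied with $A_n$ in place of $A$ (the linearity of $B$ makes the equation linear in the initial datum), $u_n(t,\cdot)$ is twice Fr\'echet differentiable, with
\begin{equation*}
D_xu_n(t,x)[h]=\mathbb E^{\mathbb Q}\bigl[G'(X^{n,x}_t)X^{n,h}_t\bigr],
\qquad
D_x^2u_n(t,x)[h]k=\mathbb E^{\mathbb Q}\bigl[(G''(X^{n,x}_t)X^{n,h}_t)X^{n,k}_t\bigr].
\end{equation*}
Apply It\^o's formula to $G(X^{n,x}_s)$ on $[0,t]$. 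Since the stochastic integral has zero mean (here $G'$ is bounded), we obtain
\begin{equation*}
u_n(t,x)=G(x)+\int_0^t\mathbb E^{\mathbb Q}\Bigl[G'(X^{n,x}_s)A_nX^{n,x}_s+\tfrac12\operatorname{Tr}_{B(X^{n,x}_s)}G''(X^{n,x}_s)\Bigr]ds .
\end{equation*}
Next we convert the integrand into derivatives of $u_n$ at the initial point $x$. Assumption~\ref{ass:pricing_regular}(3) says that $B$ intertwines $A_n$ and $\widetilde A_n$, so $S_n(s)$ commutes through $B$. Together with uniqueness, this gives $A_nX^{n,x}_s=X^{n,A_nx}_s$ (both processes solve the same linear equation started from $A_nx$). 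Likewise, Assumption~\ref{ass:pricing_regular}(4), transferred to the approximate flow, gives $B(X^{n,x}_s)e_k=X^{n,B(x)e_k}_s$. Substituting these identities into the expectation and using the formulas above, the integrand equals $D_xu_n(s,x)[A_nx]+\tfrac12\operatorname{Tr}_{B(x)}D_x^2u_n(s,x)$. Because this expression is continuous in $s$, we get $\partial_tu_n(t,x)=D_xu_n(t,x)[A_nx]+\frac12\operatorname{Tr}_{B(x)}D_x^2u_n(t,x)$. Series convergence of the trace follows from $\sum_k\|B(x)e_k\|^2<\infty$ and the moment bounds $\mathbb E\|X^{n,h}_s\|^2\le C\|h\|^2$, which are uniform in $n$ because $\sup_n\|S_n(s)\|$ is bounded on $[0,T]$.

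\textbf{Step 2 (passage to the limit).} We write everything in integrated form, $u_n(t,x)-G(x)=\int_0^t(\cdots)\,ds$, and let $n\to\infty$ for $x\in D(A)$, where $A_nx\to Ax$. The key convergence is $\sup_{s\le T}\mathbb E\|X^{n,y}_s-X^{y}_s\|^2\to0$ for every $y$. To prove it, work with $y^n_s=S_n(-s)X^{n}_s$ and $y_s=S(-s)X_s$ in the form \eqref{dyy}, and split the difference into a Gronwall term and a remainder controlled by \eqref{convsn} and \Cref{conversnl2}. The latter is applied to $R=B(X_s)$; the fact that $S_n(-s)$ is available at all relies on the group structure.

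Given this convergence, the boundedness and Lipschitz continuity of $G,G',G''$ give:
\begin{itemize}
\item $u_n\to u$;
\item $D_xu_n(s,x)[A_nx]\to D_xu(s,x)[Ax]$, splitting $A_nx=Ax+(A_nx-Ax)$;
\item termwise convergence of the trace series, with a summable dominating sequence $C\|B(x)e_k\|^2$, so dominated convergence in $k$ and in $s$ applies.
\end{itemize}
This yields $u(t,x)=G(x)+\int_0^t\bigl(D_xu(s,x)[Ax]+\frac12\operatorname{Tr}_{B(x)}D_x^2u(s,x)\bigr)ds$. We then differentiate in $t$, after checking that the integrand is continuous in $s$; this uses mean-square continuity of $s\mapsto X^{y}_s$ and again the Lipschitz bounds. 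Assumption~\ref{ass:pricing_regular}(1) is not needed for the identity itself, only later, when one applies $A$ to $u$.

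\textbf{Main obstacle.} I expect the hardest part to be justifying the commutation identities $A_nX^{n,x}_s=X^{n,A_nx}_s$ and $B(X^{n,x}_s)e_k=X^{n,B(x)e_k}_s$ for the approximating flows. These are what move the drift $A_nX_s$ and the noise $B(X_s)$ from the running state back to the initial point. The approach I would try first is to verify directly that both sides solve the same linear SDE with the same initial datum, using the $\operatorname{Hom}(A_n,\widetilde A_n)$ condition, and then invoke pathwise uniqueness. If Assumption~\ref{ass:pricing_regular}(4) is stated only for the limit flow, I would instead derive the identity in the limit by the $L^2$ convergence of Step 2, applying the Yosida limit only at the end.
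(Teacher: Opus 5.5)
Your argument is sound in its overall architecture, provided you use your fallback for Assumption~\ref{ass:pricing_regular}(4). It is, however, a genuinely different route from the paper's.

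\textbf{The paper's route.} The paper never introduces the Yosida-approximated SDE in this proof. It uses the group to form the It\^o process $y^x_t=S(-t)X^x_t$, approximates $X^x_t=S(t)y^x_t$ by $S_n(t)y^x_t$, and sets $u_n(t,x)=\mathbb E^{\mathbb Q}[G(S_n(t)y^x_t)]$. Convergence of $S_n(t)y^x_t$ to $X^x_t$ then needs only strong convergence of $S_n$ and the compactness argument (\Cref{comlem}, \eqref{convsn}), with no Gronwall estimate. Assumption~\ref{ass:pricing_regular}(3) enters as $A_nS_n(t)y^x_t=S_n(t)y^{A_nx}_t$, and (4) is invoked only for the true flow, after the limit. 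The price is threefold. The diffusion coefficient $\Gamma^n_t=S_n(t)S(-t)B(X^x_t)$ is not $B$ evaluated at the state, so there is no Kolmogorov equation at level $n$. The second-order term requires trilinear trace estimates plus compactness of $\{S(-t)B(X^x_t)\}_{t\in[0,T]}$ in $L^2(\Omega;L_2(U_0,H))$ and \Cref{conversnl2}. And the group property is indispensable.

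\textbf{Your route.} You work with a genuine approximating flow, so the level-$n$ identity is an honest Kolmogorov equation with $A_n$. The trace term converges directly from linearity and boundedness of $B$ and the moment bounds. Only forward operators $S_n(t-s)$, $t\ge s$, are needed, so the group structure plays no role. That is exactly the extension to $C_0$-semigroups the paper conjectures just before the theorem. What you pay is the standard convergence $\sup_{s\le T}\mathbb E^{\mathbb Q}\|X^{n,y}_s-X^y_s\|_H^2\to0$, plus some care with (4).

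\textbf{Two points in your sketch need fixing.}

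First, do not prove that convergence through $y^n_s=S_n(-s)X^n_s$. The operator $S_n(-s)=e^{-sA_n}$ exists for any bounded $A_n$, group or not, but it is not bounded uniformly in $n$. For the translation group on $L^2(\mathbb R)$ one computes $\|S_n(-s)\|=e^{sn}$. Moreover, $S_n(-s)x\to S(-s)x$ fails in general, as the paper remarks after \Cref{cor:asymptotic_completeness}. So neither the Gronwall term nor the remainder in a split of $y^n-y$ is controlled. Subtract the mild formulas instead,
\begin{align*}
X^{n,y}_t-X^y_t={}&(S_n(t)-S(t))y+\int_0^tS_n(t-s)\bigl(B(X^{n,y}_s)-B(X^y_s)\bigr)\,dW^{\mathbb Q}_s\\
&+\int_0^t\bigl(S_n(t-s)-S(t-s)\bigr)B(X^y_s)\,dW^{\mathbb Q}_s,
\end{align*}
and conclude with \Cref{conversnl2}, dominated convergence in $s$, and Gronwall. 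This is exactly the argument in the proof of the asymptotic approximate completeness theorem in Section~\ref{fundass}.

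Second, the identity $B(X^{n,x}_s)e_k=X^{n,B(x)e_k}_s$ is not what Assumption~\ref{ass:pricing_regular}(4) says. It can be derived:
\begin{itemize}
\item taking expectations in (4) shows that $C_k:=B(\cdot)e_k$ commutes with $S(t)$, hence with $A_n$;
\item comparing the stochastic convolutions in (4) and using invertibility of $S$ gives $C_kB(y)=B(C_ky)$;
\item so $C_kX^{n,x}$ and $X^{n,C_kx}$ solve the same equation.
\end{itemize}
But this is unnecessary. Keep $\mathbb E^{\mathbb Q}[\operatorname{Tr}_{B(X^{n,x}_s)}G''(X^{n,x}_s)]$ as it stands at level $n$, pass to the limit, and apply (4) only to $X^x$, which is what the paper does. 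This also keeps your argument free of the group.
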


\begin{proof}
For the time derivative, set
\begin{equation}
\Gamma_t^n:=S_n(t)S(-t)B(X_t^x).
\end{equation}
By \eqref{dyy}, $y^x$ is an $H$-valued diffusion, and since
$A_n\in L(H)$, the process $S_n(t)y_t^x$ satisfies
\begin{equation}
d\bigl(S_n(t)y_t^x\bigr)
=
A_nS_n(t)y_t^x\,dt
+
S_n(t)S(-t)B(X_t^x)\,dW_t^{\mathbb Q}
=
A_nS_n(t)y_t^x\,dt+\Gamma_t^n\,dW_t^{\mathbb Q}.
\end{equation}
Therefore, applying It\^{o}'s formula to $G(S_n(s)y_s^x)$ on
$[t,t+h]$ and taking expectations, the stochastic integral vanishes,
and hence
\begin{align}
u_n(t+h,x)-u_n(t,x)
={}&
\int_t^{t+h}
\mathbb E^{\mathbb Q}
\left[
G'(S_n(s)y_s^x)A_nS_n(s)y_s^x
\right]\,ds
\nonumber\\
&+
\frac12
\int_t^{t+h}
\mathbb E^{\mathbb Q}
\left[
\operatorname{Tr}_{\Gamma_s^n}
\left(G''(S_n(s)y_s^x)\right)
\right]\,ds,
\end{align}
where $u_n(t,x):=
\mathbb E^{\mathbb Q}
\left[
G(S_n(t)y_t^x)
\right]$. Therefore,
\begin{equation}\label{pden}
\partial_tu_n(t,x)
=
\mathbb E^{\mathbb Q}
\left[
G'(S_n(t)y_t^x)A_nS_n(t)y_t^x
\right]
+
\frac12
\mathbb E^{\mathbb Q}
\left[
\operatorname{Tr}_{\Gamma_t^n}
\left(G''(S_n(t)y_t^x)\right)
\right].
\end{equation}

By Lemma~\ref{frecmi}, we have
\begin{equation}
D_xu(t,x)[h]
=
\mathbb E^{\mathbb Q}\!\left[
G'(X_t^x)X_t^h
\right],
\qquad h\in H,
\end{equation}
and
\begin{equation}
\bigl(D_x^2u(t,x)[h]\bigr)k
=
\mathbb E^{\mathbb Q}\!\left[
\bigl(G''(X_t^x)[X_t^h]\bigr)X_t^k
\right],
\qquad h,k\in H.
\end{equation}

Let $x\in D(A)$. Since $A_nx\to Ax$ in $H$, we write
\begin{equation}
S_n(t)y_t^{A_nx}-S(t)y_t^{Ax}
=
S_n(t)\left(y_t^{A_nx}-y_t^{Ax}\right)
+
\left(S_n(t)-S(t)\right)y_t^{Ax}.
\end{equation}
Hence,
\begin{align}
\sup_{t\in[0,T]}
\mathbb E^{\mathbb Q}
\left[
\left\lVert
S_n(t)y_t^{A_nx}-S(t)y_t^{Ax}
\right\rVert_H^2
\right]
\le
&2\sup_{t\in[0,T]}
\mathbb E^{\mathbb Q}
\left[
\left\lVert
S_n(t)\left(y_t^{A_nx}-y_t^{Ax}\right)
\right\rVert_H^2
\right]
\notag\\ &+
2\sup_{t\in[0,T]}
\mathbb E^{\mathbb Q}
\left[
\left\lVert
\left(S_n(t)-S(t)\right)y_t^{Ax}
\right\rVert_H^2
\right].
\end{align}
The first term converges to zero by the standard uniform $L^2$-estimate,
$A_nx\to Ax$, and the uniform boundedness of $S_n(t)$, while
the second term converges to zero by \eqref{convsn}, applied to the fixed
initial condition $Ax$. Therefore,
\begin{equation}
\sup_{t\in[0,T]}
\mathbb E^{\mathbb Q}
\left[
\left\|
S_n(t)y_t^{A_nx}-S(t)y_t^{Ax}
\right\|_H^2
\right]
\longrightarrow0.
\end{equation}
Moreover, since $x\in D(A)$, the sequence $\{A_nx\}_{n\ge1}$ is bounded
in $H$, and hence
\begin{equation}
\sup_{n\ge1}\sup_{t\in[0,T]}
\mathbb E^{\mathbb Q}
\left[
\left\|
S_n(t)y_t^{A_nx}
\right\|_H^2
\right]
<\infty.
\end{equation}
By Assumption~\ref{ass:pricing_regular}(3),
$A_nS_n(t)y_t^x=S_n(t)y_t^{A_nx}$, and hence
\begin{align}
&
\sup_{t\in[0,T]}
\left\|
\mathbb E^{\mathbb Q}
\left[
G'(S_n(t)y_t^x)S_n(t)y_t^{A_nx}
\right]
-
\mathbb E^{\mathbb Q}
\left[
G'(S(t)y_t^x)S(t)y_t^{Ax}
\right]
\right\|_H
\nonumber\\
\le{}&
L_{G'}
\left(
\sup_{t\in[0,T]}
\mathbb E^{\mathbb Q}
\|S_n(t)y_t^x-S(t)y_t^x\|_H^2
\right)^{1/2}
\left(
\sup_{t\in[0,T]}
\mathbb E^{\mathbb Q}
\|S_n(t)y_t^{A_nx}\|_H^2
\right)^{1/2}
\nonumber\\
&+
\|G'\|_\infty
\left(
\sup_{t\in[0,T]}
\mathbb E^{\mathbb Q}
\|S_n(t)y_t^{A_nx}-S(t)y_t^{Ax}\|_H^2
\right)^{1/2}
\longrightarrow0.
\end{align}
Therefore,
\begin{equation}\label{converfir}
D_xu_n(t,x)[A_nx]
=
\mathbb E^{\mathbb Q}
\left[
G'\left(S_n(t)y_t^x\right)
S_n(t)y_t^{A_nx}
\right]
\longrightarrow
\mathbb E^{\mathbb Q}
\left[
G'\left(S(t)y_t^x\right)
S(t)y_t^{Ax}
\right]
=
D_xu(t,x)[Ax]
\end{equation}
uniformly for $t\in[0,T]$.

For the second-order term, since
\begin{equation}
\Gamma_t^n-B(X_t^x)
=
\bigl(S_n(t)-S(t)\bigr)S(-t)B(X_t^x),
\end{equation}
and the mapping
$t\mapsto S(-t)B(X_t^x)$ is continuous from $[0,T]$ into
$L^2(\Omega;L_2(U_0,H))$, its range over $[0,T]$ is compact in
$L^2(\Omega;L_2(U_0,H))$. 
By \eqref{convsn}, \Cref{comlem}, and \Cref{conversnl2},
\begin{equation}
\sup_{t\in[0,T]}
\mathbb E^{\mathbb Q}
\left[
\Vert\Gamma_t^n-B(X_t^x)\Vert_{L_2(U_0,H)}^2
\right]
\longrightarrow0.
\end{equation}
Moreover,
\begin{equation}
\sup_{n\ge1}\sup_{t\in[0,T]}
\mathbb E^{\mathbb Q}
\left[
\|\Gamma_t^n\|_{L_2(U_0,H)}^2
\right]
<\infty.
\end{equation}

Using the trilinearity of
$(T,R,S)\mapsto\operatorname{Tr}_{R,S}(T)$, we have
\begin{align}
&
\operatorname{Tr}_{\Gamma_t^n}
\left(G''(S_n(t)y_t^x)\right)
-
\operatorname{Tr}_{B(X_t^x)}
\left(G''(X_t^x)\right)
\nonumber\\
={}&
\operatorname{Tr}_{\Gamma_t^n,\Gamma_t^n}
\left(G''(S_n(t)y_t^x)-G''(X_t^x)\right)
+
\operatorname{Tr}_{\Gamma_t^n,\Gamma_t^n-B(X_t^x)}
\left(G''(X_t^x)\right)
\nonumber\\
&+
\operatorname{Tr}_{\Gamma_t^n-B(X_t^x),B(X_t^x)}
\left(G''(X_t^x)\right).
\end{align}
Hence,
\begin{align}
&
\sup_{t\in[0,T]}
\mathbb E^{\mathbb Q}\!\left[
\left\|
\operatorname{Tr}_{\Gamma_t^n}
\left(
G''(S_n(t)y_t^x)
\right)
-
\operatorname{Tr}_{B(X_t^x)}
\left(
G''(X_t^x)
\right)
\right\|_H
\right]
\nonumber\\
\le{}&
\sup_{t\in[0,T]}
\mathbb E^{\mathbb Q}\!\left[
\left\|
G''(S_n(t)y_t^x)-G''(X_t^x)
\right\|
\|\Gamma_t^n\|_{L_2(U_0,H)}^2
\right]
\nonumber\\
&+
\sup_{t\in[0,T]}
\mathbb E^{\mathbb Q}\!\left[
\|G''(X_t^x)\|
\|\Gamma_t^n-B(X_t^x)\|_{L_2(U_0,H)}
\|\Gamma_t^n\|_{L_2(U_0,H)}
\right]
\nonumber\\
&+
\sup_{t\in[0,T]}
\mathbb E^{\mathbb Q}\!\left[
\|G''(X_t^x)\|
\|\Gamma_t^n-B(X_t^x)\|_{L_2(U_0,H)}
\|B(X_t^x)\|_{L_2(U_0,H)}
\right].
\end{align}

Since $G''$ is Lipschitz and
\begin{equation}
\|\Gamma_t^n\|_{L_2(U_0,H)}
\le C_T\|B(X_t^x)\|_{L_2(U_0,H)}
\le C_T\|y_t^x\|_H,
\end{equation}
the first term is bounded by
\begin{equation}
C_T
\left(
\sup_{t\in[0,T]}
\mathbb E^{\mathbb Q}
\left[
\|S_n(t)y_t^x-S(t)y_t^x\|_H^2
\right]
\right)^{1/2}
\left(
\sup_{t\in[0,T]}
\mathbb E^{\mathbb Q}
\left[
\|y_t^x\|_H^4
\right]
\right)^{1/2},
\end{equation}
which converges to zero by \eqref{convsn} and the fourth-moment estimate.

For the remaining two terms, the boundedness of $G''$ and the
Cauchy--Schwarz inequality give
\begin{align}
&
\|G''\|_\infty
\left(
\sup_{t\in[0,T]}
\mathbb E^{\mathbb Q}
\|\Gamma_t^n-B(X_t^x)\|_{L_2(U_0,H)}^2
\right)^{1/2}
\nonumber\\
&\qquad\times
\left[
\left(
\sup_{t\in[0,T]}
\mathbb E^{\mathbb Q}
\|\Gamma_t^n\|_{L_2(U_0,H)}^2
\right)^{1/2}
+
\left(
\sup_{t\in[0,T]}
\mathbb E^{\mathbb Q}
\|B(X_t^x)\|_{L_2(U_0,H)}^2
\right)^{1/2}
\right],
\end{align}
which also converges to zero. Consequently,
\begin{equation}
\sup_{t\in[0,T]}
\mathbb E^{\mathbb Q}\!\left[
\left\|
\operatorname{Tr}_{\Gamma_t^n}
\left(
G''(S_n(t)y_t^x)
\right)
-
\operatorname{Tr}_{B(X_t^x)}
\left(
G''(X_t^x)
\right)
\right\|_H
\right]
\longrightarrow0.
\end{equation}
Therefore,
\begin{equation}\label{convsec}
\frac12
\mathbb E^{\mathbb Q}
\left[
\operatorname{Tr}_{\Gamma_t^n}
\left(
G''(S_n(t)y_t^x)
\right)
\right]
\longrightarrow
\frac12
\mathbb E^{\mathbb Q}
\left[
\operatorname{Tr}_{B(X_t^x)}
\left(
G''(X_t^x)
\right)
\right]
=
\frac12
\operatorname{Tr}_{B(x)}
\left(
D_x^2u(t,x)
\right),
\end{equation}
where the last equality follows from
Assumption~\ref{ass:pricing_regular}(4). The convergence is uniform
for $t\in[0,T]$.

Moreover, by \eqref{convsn} and the Lipschitz continuity of $G$,
$u_n(\cdot,x)\to u(\cdot,x)$ uniformly on $[0,T]$. Thus, combining
\eqref{converfir}, \eqref{convsec}, and \eqref{pden},
$\partial_tu_n(\cdot,x)$ converges uniformly on compact subsets of
$(0,T)$. Hence, by the standard theorem on uniform convergence of
derivatives, we obtain \eqref{kolmogorov1}.
\end{proof}
As a direct consequence, we obtain the following infinite-dimensional Black--Scholes equation.
\begin{corollary}\label{cor:black_scholes}
Under Assumption~\ref{ass:pricing_regular}, define
\begin{equation}
v(t,x):=
\mathbb E^{\mathbb Q}
\left[
G(X_T^{t,x})
\right],
\qquad (t,x)\in[0,T]\times H,
\end{equation}
and
\begin{equation}
V(t,x):=S(t-T)v(t,x).
\end{equation}
Then, for $(t,x)\in(0,T)\times D(A)$,
\begin{equation}\label{black_scholes}
\partial_tV(t,x)
+
D_xV(t,x)[Ax]
+
\frac12
\operatorname{Tr}_{B(x)}
\left(
D_x^2V(t,x)
\right)
-
AV(t,x)
=0,
\end{equation}
with terminal condition $V(T,x)=G(x)$.
\end{corollary}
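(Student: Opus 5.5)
The plan is to reduce the corollary to Theorem~\ref{lem:kolmogorov} by time reversal, and then handle the discounting factor $S(t-T)$ through the group property and the fact that $A$ commutes with $S(\cdot)$ on $D(A)$.

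\textbf{Step 1: time reversal.} The coefficients of the state equation under $\mathbb Q$ are time-homogeneous. By the Markov property and uniqueness in law of the mild solution, $X_T^{t,x}$ has the same law as $X_{T-t}^x$. Hence
\begin{equation*}
v(t,x)=u(T-t,x),
\end{equation*}
with $u$ as in Theorem~\ref{lem:kolmogorov}. For $(t,x)\in(0,T)\times D(A)$ we have $T-t\in(0,T)$, so Theorem~\ref{lem:kolmogorov} and Lemma~\ref{frecmi} give
\begin{equation*}
\partial_t v(t,x)=-D_xv(t,x)[Ax]-\tfrac12\operatorname{Tr}_{B(x)}\bigl(D_x^2v(t,x)\bigr),
\end{equation*}
with $v(T,x)=u(0,x)=G(x)$. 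In particular $V(T,x)=S(0)G(x)=G(x)$.

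\textbf{Step 2: spatial derivatives of $V$.} The operator $S(t-T)$ is bounded and linear and does not depend on $x$. Therefore
\begin{equation*}
D_xV(t,x)[h]=S(t-T)D_xv(t,x)[h],\qquad D_x^2V(t,x)[h]k=S(t-T)\bigl(D_x^2v(t,x)[h]k\bigr).
\end{equation*}
Since $\operatorname{Tr}_{B(x)}$ is an absolutely convergent series in $H$, the bounded operator passes through it:
\begin{equation*}
\operatorname{Tr}_{B(x)}\bigl(D_x^2V(t,x)\bigr)=S(t-T)\operatorname{Tr}_{B(x)}\bigl(D_x^2v(t,x)\bigr).
\end{equation*}

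\textbf{Step 3: time derivative (the main obstacle).} We must differentiate the product $t\mapsto S(t-T)v(t,x)$. By Assumption~\ref{ass:pricing_regular}(1), $v(t,x)=u(T-t,x)\in D(A)$. I would write
\begin{equation*}
\frac{V(t+h,x)-V(t,x)}{h}=S(t+h-T)\,\frac{v(t+h,x)-v(t,x)}{h}+\frac{S(t+h-T)-S(t-T)}{h}\,v(t,x).
\end{equation*}
The second term tends to $S(t-T)Av(t,x)=AS(t-T)v(t,x)=AV(t,x)$, because $v(t,x)\in D(A)$ and $D(A)$ is invariant under the group. For the first term, use the local uniform bound $\sup_{|s|\le T}\|S(s)\|<\infty$ together with strong continuity. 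Then
\begin{equation*}
S(t+h-T)\,\frac{v(t+h,x)-v(t,x)}{h}-S(t-T)\,\partial_tv(t,x)\to0,
\end{equation*}
by splitting it as $S(t+h-T)\bigl[\text{difference quotient}-\partial_tv\bigr]+\bigl[S(t+h-T)-S(t-T)\bigr]\partial_tv$.

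One subtlety is the existence of $\partial_tv$ itself. The conclusion of Theorem~\ref{lem:kolmogorov} gives the derivative of $u$ in $t$ (via uniform convergence of $\partial_tu_n$), so differentiability is available on $(0,T)$. I would only check that the theorem asserts genuine (two-sided) differentiability rather than a one-sided one.

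\textbf{Step 4: assemble.} Steps 1--3 give
\begin{equation*}
\partial_tV=S(t-T)\partial_tv+AV=-D_xV(t,x)[Ax]-\tfrac12\operatorname{Tr}_{B(x)}\bigl(D_x^2V(t,x)\bigr)+AV(t,x).
\end{equation*}
Rearranging yields \eqref{black_scholes}, with the terminal condition from Step 1.
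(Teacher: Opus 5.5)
Your proposal is correct and follows essentially the same route as the paper. The paper also identifies $v(t,x)=u(T-t,x)$ by time-homogeneity (citing Corollary~9.16 of Da Prato--Zabczyk), applies Theorem~\ref{lem:kolmogorov}, and then uses Assumption~\ref{ass:pricing_regular}(1) to differentiate $S(t-T)v(t,x)$ and pick up the $AV$ term; your Steps~2--3 just spell out the commutation and product-rule details that the paper leaves implicit.
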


\begin{proof}
Since the coefficients of the state equation do not depend explicitly
on time, Corollary~9.16 of \cite{da2014stochastic}\footnote{Although
Corollary~9.16 of \cite{da2014stochastic} is stated for bounded
real-valued functions, the identity extends directly to bounded
$H$-valued functions. Indeed, for an orthonormal basis
$\{e_k\}_{k\ge1}$ of $H$, write
$G(x)=\sum_{k\ge1}\langle G(x),e_k\rangle_H e_k$ and apply the scalar
identity to each bounded real-valued function
$x\mapsto\langle G(x),e_k\rangle_H$. The passage to the series follows
from the Bochner dominated convergence theorem.} gives
\begin{equation}
v(t,x)
=
\mathbb E^{\mathbb Q}
\left[
G(X_T^{t,x})
\right]
=
\mathbb E^{\mathbb Q}
\left[
G(S_{T-t}^{x})
\right]
=
u(T-t,x).
\end{equation}
Therefore, \Cref{lem:kolmogorov} gives
\begin{align}
\partial_tv(t,x)
&=
-\partial_tu(T-t,x)
\nonumber\\
&=
-
D_xu(T-t,x)[Ax]
-
\frac12
\operatorname{Tr}_{B(x)}
\left(
D_x^2u(T-t,x)
\right)
\nonumber\\
&=
-
D_xv(t,x)[Ax]
-
\frac12
\operatorname{Tr}_{B(x)}
\left(
D_x^2v(t,x)
\right).
\end{align}
Hence,
\begin{equation}
\partial_tv(t,x)
+
D_xv(t,x)[Ax]
+
\frac12
\operatorname{Tr}_{B(x)}
\left(
D_x^2v(t,x)
\right)
=0.
\end{equation}

Since $V(t,x)=S(t-T)v(t,x)$, Assumption~\ref{ass:pricing_regular}(1)
gives
\begin{align}
\partial_tV(t,x)
={}&
AV(t,x)
+
S(t-T)\partial_tv(t,x)
\nonumber\\
={}&
AV(t,x)
-
D_xV(t,x)[Ax]
-
\frac12
\operatorname{Tr}_{B(x)}
\left(
D_x^2V(t,x)
\right),
\end{align}
which yields \eqref{black_scholes}. Moreover, $V(T,x)=G(x)$.
\end{proof}
We next consider the hedging problem when the asset price process $X$
follows the strong dynamics \eqref{xstrong}. In contrast to the mild
setting, the strong formulation provides the differential $dX_t$
directly, and hence the self-financing condition can be defined in the
usual differential form. 
\begin{definition}[\textit{Admissible strategy}]\label{selfstrong}
Let $X$ satisfy the strong dynamics \eqref{Qstrong}. For a portfolio
$(\phi_t,\theta_t)\in\mathcal{L}(H)\times\mathcal{L}(H)$
with value process
\begin{equation}\label{portde}
V_t=\phi_tX_t+\theta_tB_t,
\end{equation}
we say that $(\phi,\theta)$ is an \emph{admissible trading strategy}
if the following conditions hold:
\begin{enumerate}
    \item The operator-valued processes $\phi$ and $\theta$
    are progressively measurable.\footnote{Measurability is understood in the sense of operator-valued
functions, as discussed in \Cref{diunin}. For further details, see \cite{hytonen2016analysis}.}

    \item The portfolio is self-financing, i.e.,
    \begin{equation}
    dV_t=\phi_t\,dX_t+\theta_t\,dB_t.
    \end{equation}

    \item The discounted value process
    $\{S(-t)V_t\}_{t\in[0,T]}$ is an $H$-valued
    $\mathbb Q$-martingale.
\end{enumerate}
\end{definition}

Notice that the martingale condition implies $
\phi_tAX_t+\theta_tAB_t=AV_t$ for almost every $(t,\omega)$. This definition relies on the differential
dynamics of the asset price process and is therefore directly available
when $X$ is a strong solution. For a general mild solution, such a
differential representation need not be available, and the
self-financing condition cannot in general be formulated in this form.
This is the reason for introducing the strong Yosida approximate markets
in the preceding analysis.

We now introduce the approximate markets associated with the Yosida
approximations of $A$. For each $n\geq1$, let $A_n$ be the Yosida
approximation of $A$, and let $(S_n(t))_{t\in\mathbb R}$ be the
corresponding $C_0$-group. The asset price process in the $n$-th
approximate market is defined by
\begin{equation} \label{Qyosi}
dX_t^{n,x}
=
A_nX_t^{n,x}\,dt
+
B(X_t^{n,x})\,dW_t^{\mathbb Q},
\qquad
X_0^{n,x}=x.
\end{equation}
Since $A_n$ is bounded, $X^{n,x}$ is a strong solution. Equivalently,
\begin{equation}
X_t^{n,x}
=
S_n(t)x
+
\int_0^t
S_n(t-s)B(X_s^{n,x})\,dW_s^{\mathbb Q}.
\end{equation}

The corresponding time-$t$ price of a contingent claim with payoff
$G(X_t^{n,x})$ is
\begin{equation}\label{maryosi}
V_n(t,x)
=
S_n(t-T)
\mathbb E^{\mathbb Q}
\left[
G(X_{T-t}^{n,x})
\right].
\end{equation}

These approximate markets are introduced solely for approximation
purposes and do not represent the actual market. Suppose that the state process is given by the mild formulation \eqref{Qmild}.
A claim $\xi\in L^2(\Omega,\mathcal F_T,\mathbb Q;H)$ is said to be
\emph{asymptotically attainable} if there exists a sequence of
admissible portfolios
$\{(\phi^n,\Xi^n)\}_{n\geq1}$ in the corresponding approximate markets such that
\begin{equation}
\phi_T^n X_T^{n,x}+\Xi_T^n B_n(T)
\longrightarrow \xi
\qquad\text{in }L^2(\Omega;H).
\end{equation}
where $B_n(t)=S_n(t)e$ for every $t\in[0,T]$. If the state process is instead given by the strong formulation
\eqref{Qstrong}, the claim $\xi$ is said to be \emph{attainable} if there
exists an admissible portfolio $(\phi,\Xi)$ such that
\begin{equation}
\phi_T X_T^x+\Xi_T S(T)e=\xi
\qquad\text{a.s.}
\end{equation}

For the contingent claim $G(X_t^x)$, the preceding construction yields such
a sequence. We assume that $A_ne\notin\operatorname{span}\{e\}, n\geq1,\quad Ae\notin\operatorname{span}\{e\}$. This condition is not required in the special case $A=rI$, where the portfolio is defined differently.
\begin{theorem}\label{thm:replication-yn}
Under Assumption~\ref{ass:pricing_regular}, let $V^N$ be given by
\eqref{maryosi}. Then, for every $n\geq1$, there exist admissible
strategies $(\phi^n,\theta^n)$ such that
\begin{equation}
V_n(t,X_t^{n,x})
=
\phi^n(t)X_t^{n,x}+\theta^n(t)B_t^n,
\qquad
dV_n(t,X_t^{n,x})
=
\phi^n(t)\,dX_t^{n,x}+\theta^n(t)\,dB_t^n.
\end{equation}
In particular, $G(X_t^x)$ is asymptotically attainable.
\end{theorem}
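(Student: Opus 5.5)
Fix $n\ge1$ and work in the $n$-th approximate market \eqref{Qyosi}, where $A_n\in\mathcal L(H)$, so $X^{n,x}$ is a strong solution. The first step is to show that $V_n$ satisfies the Black--Scholes equation \eqref{black_scholes} with $A$ replaced by $A_n$, for all $(t,x)\in(0,T)\times H$. For this I would rerun the proof of \Cref{lem:kolmogorov} and \Cref{cor:black_scholes} with $A_n$ in place of $A$. Since $A_n$ is bounded, $D(A_n)=H$, and no further Yosida approximation is needed: Itô's formula applies directly to $G(X^{n,x}_s)$. Lemma~\ref{frecmi} gives the derivative formulas. Assumption~\ref{ass:pricing_regular}(4), stated for the $n$-th equation, identifies the trace term, and Assumption~\ref{ass:pricing_regular}(1) is automatic because $A_n$ is bounded.

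Next I apply Itô's formula to $V_n(t,X^{n,x}_t)$. Using the PDE to cancel the $dt$-terms gives
\begin{equation*}
dV_n(t,X_t^{n,x})=A_nV_n(t,X_t^{n,x})\,dt+D_xV_n(t,X_t^{n,x})B(X_t^{n,x})\,dW_t^{\mathbb Q}.
\end{equation*}
Equivalently, $S_n(-t)V_n(t,X^{n,x}_t)=S_n(-T)\mathbb E^{\mathbb Q}[G(X_T^{n,x})\mid\mathcal F_t]$ by the Markov property. This is an $H$-valued $\mathbb Q$-martingale, which gives condition (3) of Definition~\ref{selfstrong}.

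I then match this dynamic with $\phi\,dX+\theta\,dB$. I take $\phi^n(t):=D_xV_n(t,X^{n,x}_t)$, so the diffusion terms agree. This candidate must commute with $A_n$. Assumption~\ref{ass:pricing_regular}(3), with $B\in\operatorname{Hom}(A_n,\widetilde A_n)$, gives $X_t^{A_nh}=A_nX_t^h$ in the $n$-th market, so $D_xu_n[A_nh]=\mathbb E[G'(X)A_nX^h]$. To get $A_n$ on the left I would also need $G'$ to commute with $A_n$. If this does not follow from the assumptions, I would instead fix the residual through $\theta$, as follows.

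Set $R_t:=V_n(t,X_t^{n,x})-\phi^n(t)X^{n,x}_t$. Self-financing then requires $\theta^n(t)B^n_t=R_t$ and $\theta^n(t)A_nB_t^n=A_nV_n-\phi^n A_nX^{n,x}_t$. Since $B^n_t=S_n(t)e$ and $A_nB_t^n=S_n(t)A_ne$ are linearly independent (this is where the hypothesis $A_ne\notin\operatorname{span}\{e\}$ enters), a bounded operator with these two prescribed values exists. Concretely, on the two-dimensional span I would use a Gram--Schmidt construction of the form $\theta=a\otimes f_1+b\otimes f_2$ with dual vectors $f_1,f_2$. These depend continuously on $(t,\omega)$, so progressive measurability follows. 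With this choice the value identity $V_n=\phi^nX+\theta^nB^n$ holds, and the drifts match:
\begin{equation*}
\phi^nA_nX+\theta^nA_nB^n=A_nV_n.
\end{equation*}
Hence $dV_n=\phi^n dX^{n,x}+\theta^n dB^n$. I expect this step to be the main obstacle, namely reconciling the drift without assuming commutation, together with verifying measurability.

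Finally, at $t=T$ we have $\phi^n_TX_T^{n,x}+\theta^n_TB_n(T)=G(X_T^{n,x})$. It remains to show $G(X_T^{n,x})\to G(X_T^x)$ in $L^2(\Omega;H)$. Since $G$ is Lipschitz, it suffices to prove $\mathbb E\|X_T^{n,x}-X_T^x\|^2\to0$. I would split the difference as
\begin{equation*}
\bigl(S_n(T)-S(T)\bigr)x+\int_0^T\bigl(S_n(T-s)-S(T-s)\bigr)B(X_s^x)\,dW_s^{\mathbb Q}+\int_0^T S_n(T-s)\bigl(B(X^{n,x}_s)-B(X^x_s)\bigr)\,dW_s^{\mathbb Q}.
\end{equation*}
The first term vanishes by strong convergence of $S_n$. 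The second vanishes by Lemma~\ref{conversnl2} and dominated convergence. The third is controlled by Gronwall's inequality, using the Lipschitz continuity of $B$ and the uniform bound on $\|S_n\|$. This yields asymptotic attainability.
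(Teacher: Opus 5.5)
Your proposal is correct and follows essentially the same route as the paper. Both take $\phi^n=D_xV_n(t,X_t^{n,x})$, prescribe $\theta^n$ on the two-dimensional span of $B_t^n$ and $A_nB_t^n$ (using $A_ne\notin\operatorname{span}\{e\}$) and extend it by the orthogonal projection, get admissibility from the Markov-property martingale $S_n(-t)V_n(t,X_t^{n,x})$, and conclude from the Lipschitz continuity of $G$ together with $X_T^{n,x}\to X_T^x$ in $L^2$, a convergence you justify in more detail than the paper does.

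The only real difference is the second prescribed value of $\theta^n$. The paper sets $\theta^n(t)A_nB_t^n=Q^n(t):=\partial_tV_n+\frac12\operatorname{Tr}_{B(X_t^{n,x})}D_x^2V_n$, the residual drift from It\^o's formula, so self-financing holds by construction without first proving the Black--Scholes equation for $V_n$. That is the step for which you had to postulate Assumption~\ref{ass:pricing_regular}(4) for the $n$th equation. Your choice $A_nV_n-\phi^nA_nX_t^{n,x}$ coincides with $Q^n$ once that equation holds along the path, and this also follows from the vanishing drift of the martingale $S_n(-t)V_n(t,X_t^{n,x})$.
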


\begin{proof}
Since $A_n\in\mathcal L(H)$, $X^{n,x}$ is a strong solution. Moreover,
$dB_t^n=A_nB_t^n\,dt$. Since $S_n(t)$ is invertible and
$A_ne\notin\operatorname{span}\{e\}$, $B_t^n$ and $A_nB_t^n$ are
linearly independent. Hence
$M_t^n:=\operatorname{span}\{B_t^n,A_nB_t^n\}$ is a two-dimensional
subspace of $H$.
Recall that $V_n$ is the pricing function defined in \eqref{maryosi}
for the $n$th approximate market. By It\^o's formula,
\begin{align}
dV_n(t,X_t^{n,x})
={}&
\Bigg[
\partial_tV_n(t,X_t^{n,x})
+
D_xV_n(t,X_t^{n,x})[A_nX_t^{n,x}]
+
\frac12
\operatorname{Tr}_{B(X_t^{n,x})}
D_x^2V_n(t,X_t^{n,x})
\Bigg]dt
\nonumber\\
&+
D_xV_n(t,X_t^{n,x})
B(X_t^{n,x})\,dW_t^{\mathbb Q}.
\label{itoVnmarket}
\end{align}
Set $\phi^n(t)=D_xV_n(t,X_t^{n,x})$. For
$(t,\omega)\in[0,T]\times\Omega$, define
\begin{align}
R^n(t,\omega)
&=
V_n(t,X_t^{n,x}(\omega))
-
D_xV_n(t,X_t^{n,x}(\omega))X_t^{n,x}(\omega),
\\
Q^n(t,\omega)
&=
\partial_tV_n(t,X_t^{n,x}(\omega))
+
\frac12
\operatorname{Tr}_{B(X_t^{n,x}(\omega))}
D_x^2V_n(t,X_t^{n,x}(\omega)).
\end{align}
Define
\begin{equation}
\widetilde\theta^n(t,\omega)
\bigl(\alpha B_t^n+\beta A_nB_t^n\bigr)
=
\frac{\langle \alpha B_t^n,B_t^n\rangle_H}
{\lVert B_t^n\rVert_H^2}R^n(t,\omega)
+
\frac{\langle \beta A_nB_t^n,A_nB_t^n\rangle_H}
{\lVert A_nB_t^n\rVert_H^2}Q^n(t,\omega),
\qquad
\alpha,\beta\in\mathbb R.
\label{thetatilden}
\end{equation}
Equivalently,
$\widetilde\theta^n(t,\omega)
(\alpha B_t^n+\beta A_nB_t^n)
=\alpha R^n(t,\omega)+\beta Q^n(t,\omega)$.
Since $B_t^n$ and $A_nB_t^n$ are linearly independent, this defines a
unique linear operator
$\widetilde\theta^n(t,\omega)\in\mathcal L(M_t^n,H)$, where boundedness
follows from the finite dimensionality of $M_t^n$. Moreover, for every
$\alpha B_t^n+\beta A_nB_t^n\in M_t^n$, the measurability of
$\widetilde\theta^n(t,\omega)(\alpha B_t^n+\beta A_nB_t^n)$ follows
directly from that of $R^n(t,\omega)$ and $Q^n(t,\omega)$.

Let $P_{M_t^n}$ denote the orthogonal projection of $H$ onto $M_t^n$ and
define $\theta^n(t,\omega)=\widetilde\theta^n(t,\omega)P_{M_t^n}$. Then
\begin{equation}
\theta^n(t)B_t^n=R^n(t),
\qquad
\theta^n(t)A_nB_t^n=Q^n(t).
\label{thetaidentitiesn}
\end{equation}
The first identity gives
\begin{equation}
V_n(t,X_t^{n,x})
=
\phi^n(t)X_t^{n,x}+\theta^n(t)B_t^n.
\end{equation}

On the other hand, by \eqref{itoVnmarket} and the definition of
$\phi^n$,
\begin{equation}
dV_n(t,X_t^{n,x})
-
\phi^n(t)\,dX_t^{n,x}
=
\left[
\partial_tV_n(t,X_t^{n,x})
+
\frac12
\operatorname{Tr}_{B(X_t^{n,x})}
D_x^2V_n(t,X_t^{n,x})
\right]dt
=
Q^n(t)\,dt.
\end{equation}
Using \eqref{thetaidentitiesn} and $dB_t^n=A_nB_t^n\,dt$,
\begin{equation}
Q^n(t)\,dt
=
\theta^n(t)A_nB_t^n\,dt
=
\theta^n(t)\,dB_t^n.
\end{equation}
Consequently,
\begin{equation}
dV_n(t,X_t^{n,x})
=
\phi^n(t)\,dX_t^{n,x}
+
\theta^n(t)\,dB_t^n,
\end{equation}
so $(\phi^n,\theta^n)$ is $A_n$-self-financing.

Finally, since $V_n(T,x)=G(x)$,
\begin{equation}
\phi^n(T)X_T^{n,x}
+
\theta^n(T)B_T^n
=
V_n(T,X_T^{n,x})
=
G(X_T^{n,x}).
\end{equation}
Moreover, by \eqref{maryosi} and the Markov property,
\begin{equation}
S_n(-t)V_n(t,X_t^{n,x})
=
S_n(-T)\mathbb E^{\mathbb Q}
[G(X_T^{n,x})\mid\mathcal F_t].
\end{equation}
Since $G$ is bounded, the right-hand side is an
$H$-valued $\mathbb Q$-martingale. Hence,
$(\phi^n,\theta^n)$ is admissible in the $n$th approximate market.
By the convergence of the Yosida approximations,
$X_T^{n,x}\to X_T^x$ in $L^2(\Omega;H)$. Since $G$ is Lipschitz
continuous,
\begin{align}
&
\mathbb E^{\mathbb Q}
\left[
\left\lVert
\phi^n(T)X_T^{n,x}
+
\theta^n(T)B_T^n
-
G(X_T^x)
\right\rVert_H^2
\right]
\nonumber\\
&=
\mathbb E^{\mathbb Q}
\left[
\left\lVert
G(X_T^{n,x})-G(X_T^x)
\right\rVert_H^2
\right]
\le
L_G^2
\mathbb E^{\mathbb Q}
\left[
\left\lVert
X_T^{n,x}-X_T^x
\right\rVert_H^2
\right]
\longrightarrow0.
\end{align}
Hence $G(X_t^x)$ is asymptotically attainable.
\end{proof}

When the asset-price process $X$ satisfies \eqref{Qstrong}, the contingent claim $G(X_t)$ is attainable by an admissible portfolio $(\phi,\theta)$, and hence is exactly attainable without passing to the Yosida approximations.

\subsection{The Case $S(t)=e^{rt}I$}\label{realsemi}
We now consider the special case in which $S(t)=e^{rt}I$ for some
$r\in\mathbb R$. 
The state process admits the strong solution
\begin{equation}\label{Qreal}
X_t^x
=
x
+
\int_0^t rX_s^x\,ds
+
\int_0^t B(X_s^x)\,dW_s^{\mathbb Q}.
\end{equation}
In this setting, the accumulation operator reduces to
scalar multiplication, which allows us to consider real-valued contingent
claims. In particular, we take the payoff function to be $G:H\to\mathbb R$.
Notice that the scalar exponential group $(e^{rt})_{t\in\mathbb R}$ can
be viewed both as a strongly continuous group on $\mathbb R$ and, through
scalar multiplication, as a strongly continuous group on $H$. Thus,
when the payoff is real-valued, the corresponding discounting is given
by the usual scalar factor $e^{-r(T-t)}$. Accordingly, the time-$t$
price of the option with payoff $G(X_t)$ is defined by
\begin{equation}
V(t,x)
=
e^{-r(T-t)}
\mathbb E^{\mathbb Q}
\left[
G(X_{T-t}^x)
\right],
\qquad (t,x)\in[0,T]\times H.
\end{equation}
Equivalently, defining
\begin{equation}
u(t,x)
=
\mathbb E^{\mathbb Q}
\left[
G(X_t^x)
\right],
\end{equation}
we have
\begin{equation}
V(t,x)=e^{-r(T-t)}u(T-t,x).
\end{equation}
\begin{assumption}\label{ass:pricing_regular_real}
The following regularity conditions hold:
\begin{enumerate}
    \item The payoff function $G:H\to\mathbb R$ is twice Fr\'echet
    differentiable, and $G$, $D_xG$, and $D_{xx}^2G$ are bounded and
    Lipschitz continuous.

    \item For every $x\in H$ and every $e_k$ in an orthonormal basis of
    $U_0$,
    \begin{equation}
    X_t^{B(x)e_k}=B(X_t^x)e_k,
    \qquad t\in[0,T].
    \end{equation}
\end{enumerate}
\end{assumption}
\begin{theorem}\label{thm:bX_real}
Suppose that $S(t)=e^{rt}I$ and that
Assumption~\ref{ass:pricing_regular_real} holds. Then the option price
\begin{equation}
V(t,x)
=
e^{-r(T-t)}
\mathbb E^{\mathbb Q}
\left[
G(X_{T-t}^x)
\right],
\qquad (t,x)\in[0,T]\times H,
\end{equation}
is a classical solution of the infinite-dimensional Black--Scholes
equation
\begin{equation}
\partial_t V(t,x)
+
rD_xV(t,x)[x]
+
\frac{1}{2}
\operatorname{Tr}_{B(x)}D_x^2V(t,x)
-rV(t,x)
=
0,
\qquad (t,x)\in[0,T)\times H,
\end{equation}
with terminal condition $V(T,x)=G(x),
 x\in H$.
\end{theorem}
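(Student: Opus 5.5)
The plan is to reduce the statement to a real-valued Kolmogorov equation for $u(t,x)=\mathbb E^{\mathbb Q}[G(X_t^x)]$ and then perform the scalar discounting computation, exactly as in the proof of \Cref{cor:black_scholes}. Here the situation is simpler because $A=rI$ is bounded, so $D(A)=H$ and $X^x$ is a strong solution given by \eqref{Qreal}. No Yosida approximation is needed, and the equation should hold for every $x\in H$.

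\emph{Step 1 (spatial regularity).} Apply \Cref{frecmi} with $E=\mathbb R$, as permitted by the remark following it. We need $B\in L(H,L_2(U_0,H))$; I take this as implicit in the linear setting, since \Cref{frecmi} already uses it (the linearity $X^{x+h}=X^x+X^h$ requires it). This gives $D_xu(t,x)[h]=\mathbb E^{\mathbb Q}[G'(X_t^x)X_t^h]$ and $D_x^2u(t,x)[h,k]=\mathbb E^{\mathbb Q}[G''(X_t^x)(X_t^h,X_t^k)]$.

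\emph{Step 2 (time derivative).} Apply It\^o's formula directly to $G(X_s^x)$ on $[t,t+h]$; this is legitimate since $G\in C^2_b$ with Lipschitz derivatives. Take expectations; the stochastic integral is a true martingale because $G'$ is bounded and $\mathbb E\int\|B(X_s)\|^2_{L_2}ds<\infty$. This yields
\begin{equation*}
u(t+h,x)-u(t,x)=\int_t^{t+h}\mathbb E^{\mathbb Q}\Big[rG'(X_s^x)X_s^x+\tfrac12\operatorname{Tr}_{B(X_s^x)}G''(X_s^x)\Big]ds .
\end{equation*}
The integrand is continuous in $s$, because $X^x$ is $L^2$-continuous, $G',G''$ are bounded Lipschitz, and we have moment bounds up to order four (bounding the trace by $\|G''\|\,\|B(X_s)\|^2_{L_2}\le C\|X_s\|^2$). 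So $u(\cdot,x)$ is $C^1$ on $[0,T]$ with the integrand as its derivative.

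\emph{Step 3 (identification).} This is the main point. Linearity gives $X_t^{x}$ as the solution started from $x$, so $rG'(X_t^x)X_t^x=rG'(X_t^x)X_t^{h}|_{h=x}$, and hence $\mathbb E[rG'(X_t^x)X_t^x]=rD_xu(t,x)[x]$. For the trace term, expand in the basis $(e_k)$ of $U_0$ and use Assumption~\ref{ass:pricing_regular_real}(2), which states $B(X_t^x)e_k=X_t^{B(x)e_k}$. Then
\begin{equation*}
\mathbb E\operatorname{Tr}_{B(X_t^x)}G''(X_t^x)=\sum_k \mathbb E\,G''(X_t^x)(X_t^{B(x)e_k},X_t^{B(x)e_k})=\sum_k D_x^2u(t,x)[B(x)e_k,B(x)e_k]=\operatorname{Tr}_{B(x)}D_x^2u(t,x).
\end{equation*}
Exchanging sum and expectation is justified by Fubini, using $\sum_k\mathbb E\|X_t^{B(x)e_k}\|^2\le C\sum_k\|B(x)e_k\|^2<\infty$. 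Hence $\partial_tu=rD_xu[x]+\frac12\operatorname{Tr}_{B(x)}D_x^2u$ on $[0,T]\times H$, with $u(0,\cdot)=G$.

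\emph{Step 4 (discounting).} Set $V(t,x)=e^{-r(T-t)}u(T-t,x)$. Then
\begin{equation*}
\partial_tV=rV-e^{-r(T-t)}\partial_tu(T-t,x)=rV-rD_xV[x]-\tfrac12\operatorname{Tr}_{B(x)}D_x^2V,
\end{equation*}
since the scalar factor commutes with $D_x$ and $D_x^2$. This is the claimed equation, and $V(T,x)=u(0,x)=G(x)$. Classical regularity also follows: $V$ is $C^1$ in $t$, twice Fr\'echet differentiable in $x$, and $\partial_tV$ is continuous.

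I expect the main obstacle to be Step 3: justifying that the expected trace equals the trace of $D_x^2u$. This is precisely where Assumption~\ref{ass:pricing_regular_real}(2) and the linearity of the flow in the initial datum are essential. A secondary obstacle is verifying the moment and continuity bounds needed for the martingale property and for differentiating under the integral in Step 2.
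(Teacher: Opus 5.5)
Your proposal is correct and follows essentially the same route as the paper. Both arguments obtain the spatial derivatives from Lemma~\ref{frecmi} with real-valued codomain, apply It\^o's formula directly to $G(X_s^x)$ since $A=rI$ is bounded, identify the drift term by linearity in the initial datum and the trace term via Assumption~\ref{ass:pricing_regular_real}(2), and finish with the scalar discounting computation for $V(t,x)=e^{-r(T-t)}u(T-t,x)$; your extra justifications (continuity of the integrand in $s$ to get a genuine two-sided time derivative, and the Fubini bound for interchanging the trace sum with the expectation) simply fill in details the paper leaves implicit.
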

\begin{proof}
Define
\begin{equation}
u(t,x)
=
\mathbb E^{\mathbb Q}
\left[
G(X_t^x)
\right].
\end{equation}
By Lemma~\ref{frecmi} and the discussion following it, $u(t,\cdot)$ is
twice Fr\'echet differentiable and, since $G$ is real-valued,
\begin{equation}
D_xu(t,x)[h]
=
\mathbb E^{\mathbb Q}
\left[
\left\langle G'(X_t^x),X_t^h\right\rangle_H
\right],
\qquad
\bigl(D_x^2u(t,x)[h]\bigr)k
=
\mathbb E^{\mathbb Q}
\left[
\left\langle
G''(X_t^x)X_t^h,X_t^k
\right\rangle_H
\right].
\end{equation}

Since $S(t)=e^{rt}I$, the process $S^x$ is a strong solution of
\begin{equation}
dX_t^x
=
rX_t^x\,dt+B(X_t^x)\,dW_t^{\mathbb Q}.
\end{equation}
Applying It\^o's formula to $G(X_s^x)$ on $[t,t+h]$ and taking
expectations yield
\begin{align}
u(t+h,x)-u(t,x)
={}&
r\int_t^{t+h}
\mathbb E^{\mathbb Q}
\left[
\left\langle G'(X_s^x),X_s^x\right\rangle_H
\right]ds
\nonumber\\
&+
\frac12
\int_t^{t+h}
\mathbb E^{\mathbb Q}
\left[
\operatorname{Tr}_{B(X_s^x)}
G''(X_s^x)
\right]ds.
\end{align}
Dividing by $h$ and letting $h\downarrow0$ gives
\begin{equation}
\partial_tu(t,x)
=
r\mathbb E^{\mathbb Q}
\left[
\left\langle G'(X_t^x),X_t^x\right\rangle_H
\right]
+
\frac12
\mathbb E^{\mathbb Q}
\left[
\operatorname{Tr}_{B(X_t^x)}
G''(X_t^x)
\right].
\end{equation}

By the linearity of the solution with respect to its initial condition,
\begin{equation}
D_xu(t,x)[x]
=
\mathbb E^{\mathbb Q}
\left[
\left\langle G'(X_t^x),X_t^x\right\rangle_H
\right].
\end{equation}
Moreover, if $(e_k)_{k\ge1}$ is an orthonormal basis of $U_0$, then
Assumption~\ref{ass:pricing_regular_real} and Lemma~\ref{frecmi} imply
\begin{align}
\operatorname{Tr}_{B(x)}D_x^2u(t,x)
&=
\sum_{k\ge1}
\mathbb E^{\mathbb Q}
\left[
\left\langle
G''(X_t^x)X_t^{B(x)e_k},
X_t^{B(x)e_k}
\right\rangle_H
\right]
\nonumber\\
&=
\mathbb E^{\mathbb Q}
\left[
\operatorname{Tr}_{B(X_t^x)}
G''(X_t^x)
\right].
\end{align}
Consequently,
\begin{equation}
\partial_tu(t,x)
=
rD_xu(t,x)[x]
+
\frac12
\operatorname{Tr}_{B(x)}D_x^2u(t,x).
\end{equation}

Finally, $V(t,x)=e^{-r(T-t)}u(T-t,x)$. Therefore,
\begin{align}
\partial_tV(t,x)
&=
rV(t,x)-e^{-r(T-t)}\partial_tu(T-t,x),
\nonumber\\
D_xV(t,x)
&=
e^{-r(T-t)}D_xu(T-t,x),
\qquad
D_x^2V(t,x)
=
e^{-r(T-t)}D_x^2u(T-t,x).
\end{align}
Substituting the equation for $u$ gives
\begin{equation}
\partial_t V(t,x)
+
rD_xV(t,x)[x]
+
\frac12
\operatorname{Tr}_{B(x)}D_x^2V(t,x)
-rV(t,x)
=
0.
\end{equation}
Since $u(0,x)=G(x)$, we also have $V(T,x)=G(x)$. This completes the proof.
\end{proof}
We now consider the corresponding hedging problem. In the present
real-valued setting, a portfolio is a pair $(\phi,\theta)$, where
$\phi_t\in H$ denotes the position in the risky asset and
$\theta_t\in\mathbb R$ denotes the position in the bank account. The
bank account is given by
\begin{equation}
\beta_t=e^{rt},
\qquad
d\beta_t=r\beta_t\,dt,
\qquad
\beta_0=1.
\end{equation}
The value of the portfolio is
\begin{equation}
V_t
=
\left\langle \phi_t,X_t^x\right\rangle_H
+
\theta_t\beta_t.
\end{equation}
The portfolio is said to be self-financing if
\begin{equation}
dV_t
=
\left\langle \phi_t,dX_t^x\right\rangle_H
+
\theta_t\,d\beta_t.
\end{equation}
In this case, the discounted value process
$\{e^{-rt}V_t\}_{t\in[0,T]}$ of every self-financing portfolio is
automatically a local $\mathbb Q$-martingale.
\begin{proposition}\label{prop:real_hedging}
Suppose that the assumptions of Theorem~\ref{thm:bX_real} hold, and let
$V$ be the option price given therein. Define
\begin{equation}
\phi_t=D_xV(t,X_t^x),
\qquad
\theta_t
=
\frac{
V(t,X_t^x)-\left\langle D_xV(t,X_t^x),X_t^x\right\rangle_H
}{\beta_t},
\qquad
\beta_t=e^{rt}.
\end{equation}
Then the portfolio $(\phi,\theta)$ is self-financing and replicates the
claim $G(X_t^x)$.
\end{proposition}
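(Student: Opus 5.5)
The argument is the classical delta-hedging one, with Theorem~\ref{thm:bX_real} used to cancel the drift. Since Theorem~\ref{thm:bX_real} says $V$ is a classical solution, i.e. $C^{1,2}$ on $[0,T)\times H$, and $X^x$ is the strong solution \eqref{Qreal}, we may apply Itô's formula for real-valued functions of an $H$-valued Itô process to $V(t,X_t^x)$. This gives
\begin{equation*}
dV(t,X_t^x)=\Bigl[\partial_tV+rD_xV[X_t^x]+\tfrac12\operatorname{Tr}_{B(X_t^x)}D_x^2V\Bigr](t,X_t^x)\,dt+\bigl\langle D_xV(t,X_t^x),B(X_t^x)\,dW_t^{\mathbb Q}\bigr\rangle_H .
\end{equation*}
We identify $D_xV(t,x)\in\mathcal L(H,\mathbb R)$ with an element of $H$ via Riesz, which is how $\phi_t$ is read in the statement.

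Next we use the Black--Scholes equation of Theorem~\ref{thm:bX_real} to replace the bracketed drift by $rV(t,X_t^x)$, so that
\begin{equation*}
dV(t,X_t^x)=rV(t,X_t^x)\,dt+\langle\phi_t,B(X_t^x)\,dW_t^{\mathbb Q}\rangle_H .
\end{equation*}
On the other hand, $\langle\phi_t,dX_t^x\rangle_H+\theta_t\,d\beta_t=r\langle\phi_t,X_t^x\rangle_H\,dt+\langle\phi_t,B(X_t^x)\,dW_t^{\mathbb Q}\rangle_H+r\theta_t\beta_t\,dt$. By the definition of $\theta_t$ we have $\theta_t\beta_t=V(t,X_t^x)-\langle\phi_t,X_t^x\rangle_H$. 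The two $dt$-terms therefore sum to $rV(t,X_t^x)\,dt$, and the two differentials coincide.

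The same definition of $\theta_t$ gives the value identity $V_t:=\langle\phi_t,X_t^x\rangle_H+\theta_t\beta_t=V(t,X_t^x)$ directly. Combined with the previous step, this shows the portfolio is self-financing. Replication then follows from the terminal condition $V(T,x)=G(x)$, which gives $V_T=G(X_T^x)$.

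A few technical points remain, and I expect the main obstacle to lie here.
\begin{itemize}
\item \emph{Measurability.} $\phi$ and $\theta$ are progressively measurable because $V$ and $D_xV$ are continuous and $X^x$ is continuous and adapted.
\item \emph{Integrability of the hedge.} We need $\int_0^T\|B(X_t^x)^*\phi_t\|^2_{U_0}\,dt<\infty$. This follows from the bound $\|D_xV\|\le C\,\mathbb E\|G'\|_\infty\|X_t^h\|$ obtained in Lemma~\ref{frecmi}, together with the second-moment bounds on $X^x$.
\item \emph{Itô's formula up to $T$.} Theorem~\ref{thm:bX_real} asserts the equation only on $[0,T)$. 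Itô's formula requires $\partial_tV$, $D_xV$ and $D_x^2V$ to be uniformly continuous on bounded sets, which is guaranteed by the bounded Lipschitz assumptions on $G,G',G''$. I would work on $[0,T-\varepsilon]$ and let $\varepsilon\downarrow0$, using the continuity of $t\mapsto V(t,X_t^x)$ in $L^2$, which follows from the Lipschitz property of $G$.
\end{itemize}
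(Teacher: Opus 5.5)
Your proposal is correct and follows the paper's proof essentially step for step. The paper applies It\^o's formula together with the Black--Scholes equation of Theorem~\ref{thm:bX_real} to reduce the drift of $V(t,X_t^x)$ to $rV(t,X_t^x)$. It then expands $\langle\phi_t,dX_t^x\rangle_H+\theta_t\,d\beta_t$ using $\theta_t\beta_t=V(t,X_t^x)-\langle\phi_t,X_t^x\rangle_H$, and obtains replication from $V(T,x)=G(x)$. Your extra remarks on measurability, integrability of the hedge, and localizing to $[0,T-\varepsilon]$ cover technical points the paper leaves implicit.
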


\begin{proof}
By It\^o's formula and Theorem~\ref{thm:bX_real},
\begin{equation}
dV(t,X_t^x)
=
rV(t,X_t^x)\,dt
+
\left\langle
D_xV(t,X_t^x),
B(X_t^x)\,dW_t^{\mathbb Q}
\right\rangle_H.
\end{equation}
On the other hand, since
\begin{equation}
dX_t^x=rX_t^x\,dt+B(X_t^x)\,dW_t^{\mathbb Q},
\qquad
d\beta_t=r\beta_t\,dt,
\end{equation}
the definition of $(\phi,\theta)$ yields
\begin{align}
\left\langle\phi_t,dX_t^x\right\rangle_H+\theta_t\,d\beta_t
={}&
r\left\langle D_xV(t,X_t^x),X_t^x\right\rangle_H\,dt
+
\left\langle
D_xV(t,X_t^x),
B(X_t^x)\,dW_t^{\mathbb Q}
\right\rangle_H
\nonumber\\
&+
r\left(
V(t,X_t^x)
-
\left\langle D_xV(t,X_t^x),X_t^x\right\rangle_H
\right)dt
=
dV(t,X_t^x).
\end{align}
Moreover,
\begin{equation}
V(t,X_t^x)
=
\left\langle\phi_t,X_t^x\right\rangle_H+\theta_t\beta_t,
\end{equation}
so the portfolio is self-financing. Finally,
$V(T,X_T^x)=G(X_T^x)$, and hence $(\phi,\theta)$ replicates the claim.
\end{proof}
\subsection{Some Weak Versions of the Black–Scholes Equation}

Due to the infinite-dimensional nature of the state space, option prices may fail to possess the Fréchet differentiability required by the classical Black–Scholes equation. Nevertheless, they may satisfy weaker versions of this equation. Such formulations can take different forms depending on the structure of the option price. Here, we briefly consider two examples.

**Hilbert-space-valued options.** Let \((e_i)_{i\ge1}\) be an orthonormal basis of \(H\), and write
\[
V(t,x)=\sum_{i=1}^{\infty}V_i(t,x)e_i,
\qquad
V_i(t,x)=\langle V(t,x),e_i\rangle_H.
\]
Suppose that each \(V_i\) is continuously differentiable in time and twice continuously Fréchet differentiable in space, and that the following series converges in \(H\), locally uniformly in \(t\in(0,T)\) for each fixed \(x\in D(A)\). We may consider the coordinatewise weak Black–Scholes equation
\begin{align} \label{wbsh}
\partial_t V(t,x)
&+\sum_{i=1}^{\infty}D_xV_i(t,x)[Ax]e_i
+\frac12\sum_{i,j=1}^{\infty}\lambda_j
D_{xx}V_i(t,x)[B(x)e_j,B(x)e_j]e_i
-AV(t,x)=0.
\end{align}
Unlike the classical formulation, this equation does not require the existence of the Fréchet derivatives of the \(H\)-valued function \(V\).

**Real-valued options.** For \(V:(0,T)\times H\to\mathbb R\), let
\[
P_Nx=\sum_{i=1}^Nx_ie_i,
\qquad
V^N(t,x)=V(t,P_Nx).
\]
For each $N$, suppose that $V^N$ satisfies the
$N$-dimensional Black--Scholes equation
\begin{equation}
\partial_t V^N(t,x)+r\sum_{i=1}^N x_i\partial_i V^N(t,x)+\frac12\sum_{i=1}^N\lambda_i x_i^2\partial_{ii}V^N(t,x)-rV^N(t,x)=0.
\end{equation}
Provided that both $V^N(t,x)\to V(t,x)$ and
$\partial_t V^N(t,x)$ converge locally uniformly
in $t$ for each fixed $x$, passing to the limit
as $N\to\infty$ yields
\begin{equation}\label{wbsr}
\partial_t V(t,x)+\lim_{N\to\infty}\left[r\sum_{i=1}^N x_i\partial_i V^N(t,x)+\frac12\sum_{i=1}^N\lambda_i x_i^2\partial_{ii}V^N(t,x)\right]-rV(t,x)=0.
\end{equation}
\section{No-Arbitrage and Market Completeness}\label{fundass}
As in the standard finite-dimensional market setting, we now turn to the fundamental questions of no-arbitrage and market completeness. In particular, we investigate their relationship with the existence and uniqueness of an equivalent risk-neutral measure in the present infinite-dimensional framework. The preceding hedging results concern contingent claims
of the form $G(X_t)$. In this subsection, we consider general contingent
claims in $L^2(\Omega,\mathcal F_T,\mathbb Q;H)$ and establish the
infinite-dimensional counterparts of the first and second fundamental
theorems of asset pricing. We treat separately the mild formulation
\eqref{Qmild}, through the corresponding approximate markets, and the
strong formulation \eqref{Qstrong}. 
\subsection{No-Arbitrage and Risk-Neutral Measures}
Since portfolio values are $H$-valued, we first introduce an order structure on $H$ to capture the notion of a nonnegative payoff and, consequently, of an arbitrage opportunity. We equip $H$ with an order $\preceq$ such that $H$ becomes a Banach lattice
(and hence a Hilbert lattice; see \cite{meyer2012banach}), and denote the
resulting lattice by $(H,\preceq)$. In particular, we write $x\succeq 0$
if $x$ belongs to the positive cone $H_+$, and $x\succ 0$ if
$x\in H_+\setminus\{0\}$.

Suppose first that the asset price process is given by the strong
formulation \eqref{Qstrong}. For an admissible portfolio
$(\phi,\theta)$ with value process
$V_t=\phi(t)X_t+\theta(t)B_t$, we say that a $\preceq$-arbitrage occurs
if 
\begin{equation} \label{arbi}
V_0=0,\qquad 0\preceq V_T\quad\text{a.s.},
\qquad \mathbb P(0\prec V_T)>0.
\end{equation}
The market is said to be free of $\preceq$-arbitrage if no such
portfolio exists.

For the mild formulation \eqref{Qmild}, we use the corresponding
sequence of approximate markets. 
We say that a $\preceq$-arbitrage
occurs if there exist a sequence of admissible portfolios
$\{(\phi^n,\theta^n)\}_{n\geq1}$, with
$V_t^n=\phi^n(t)X_t^{n,x}+\theta^n(t)B_n(t)$,
and $N\geq1$ such that, for every $n\geq N$,
\begin{equation} \label{arbiapp}
V_0^n=0,\qquad 0\preceq V_T^n\quad\text{a.s.},
\qquad \mathbb P(0\prec V_T^n)>0.
\end{equation}
The sequence of approximate markets is said to be free of
$\preceq$-arbitrage if no such sequence of portfolios exists.
The requirement that the arbitrage property hold for all sufficiently
large approximate markets is motivated by the limiting interpretation
of the mild model. 
In particular, suppose that there exists a closed set $C\subset H$
such that $
\mathbb E^{\mathbb Q}[V_T^n]\in C,
n\geq N$. If $V_T^n\to V_T$ in $L^2(\Omega;H)$, then
$\mathbb E^{\mathbb Q}[V_T^n]\to\mathbb E^{\mathbb Q}[V_T]$ in $H$.
Since $C$ is closed, it follows that $
\mathbb E^{\mathbb Q}[V_T]\in C$. Thus, any closed-set constraint on the expected terminal values
is preserved in the limit.
\begin{proposition}\label{prop:no_arbitrage}
Let $E$ be a separable Banach lattice, let
$\{T(t)\}_{t\in\mathbb R}$ be a $C_0$-group on $E$, and let
$\mathbb Q$ be a probability measure equivalent to $\mathbb P$. Suppose
that $V$ is an $E$-valued process with $V_0=0$ such that
$\{T(-t)V_t\}_{t\in[0,T]}$ is an $E$-valued $\mathbb Q$-martingale.
Then there does not exist $t\in(0,T]$ such that
\begin{equation}
0\preceq V_t\quad\text{a.s.},
\qquad
\mathbb P(0\prec V_t)>0.
\end{equation}
\end{proposition}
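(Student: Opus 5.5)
The plan is to take expectations in the martingale property, use invertibility of the group to get $\mathbb E^{\mathbb Q}[V_t]=0$, and then use a countable family of positive functionals to conclude that a $\preceq$-nonnegative random variable with zero mean vanishes a.s.

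\emph{Step 1: the mean of $V_t$ vanishes.} Fix $t\in(0,T]$ and suppose, for contradiction, that $0\preceq V_t$ a.s. and $\mathbb P(0\prec V_t)>0$. Since $\{T(-s)V_s\}$ is an $E$-valued $\mathbb Q$-martingale, $T(-t)V_t$ is Bochner integrable under $\mathbb Q$. Because $V_t=T(t)\bigl(T(-t)V_t\bigr)$ with $T(t)$ bounded, $V_t\in L^1(\Omega,\mathbb Q;E)$ as well. The martingale property together with $V_0=0$ gives
\begin{equation}
T(-t)\,\mathbb E^{\mathbb Q}[V_t]=\mathbb E^{\mathbb Q}[T(-t)V_t]=\mathbb E^{\mathbb Q}[T(-0)V_0]=0,
\end{equation}
where bounded operators commute with the Bochner integral. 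Since $T(-t)$ is invertible with inverse $T(t)$, we obtain $\mathbb E^{\mathbb Q}[V_t]=0$ in $E$.

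\emph{Step 2: positive functionals that detect nonzero positive elements.} I would use the standard Banach lattice identity
\begin{equation}
\|x\|=\sup\{f(|x|):\ f\in E'_+,\ \|f\|\le1\}.
\end{equation}
In particular, for every $x\succ0$ there is a positive functional $f$ with $f(x)>0$. By separability of $E$, the dual unit ball is weak$^*$ metrizable and separable, so its positive part $\{f\in E'_+:\|f\|\le 1\}$ is weak$^*$-separable as well. Choose a countable weak$^*$-dense subset $\{f_m\}_{m\ge1}$ of it. If $x\succeq0$ and $f_m(x)=0$ for all $m$, then by density $f(x)=0$ for every positive $f$ in the ball. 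Hence $\|x\|=0$.

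\emph{Step 3: contradiction.} For each $m$, the real random variable $f_m(V_t)$ is $\ge0$ $\mathbb Q$-a.s. ($\mathbb Q\sim\mathbb P$). Its mean is $f_m(\mathbb E^{\mathbb Q}[V_t])=0$, so $f_m(V_t)=0$ $\mathbb Q$-a.s. Intersecting these countably many full-measure events, $f_m(V_t)=0$ for all $m$ a.s. By Step 2, $V_t=0$ $\mathbb Q$-a.s., hence $\mathbb P$-a.s. by equivalence of the measures. This contradicts $\mathbb P(0\prec V_t)>0$.

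The main obstacle is Step 2, namely replacing an uncountable separating family of positive functionals by a countable one. Separability of $E$ is used exactly there. Everything else is routine: commuting bounded operators with Bochner expectations, and using that the positive cone is closed and preserved under positive functionals.
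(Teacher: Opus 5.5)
Your proof is correct and follows the same route as the paper: the martingale property and invertibility of $T(-t)$ give $\mathbb E^{\mathbb Q}[V_t]=0$, and testing against positive functionals then gives the contradiction. The only difference is that the paper cites, without proof, the existence of a single strictly positive functional $\varphi\in E^*$ on a separable Banach lattice, whereas your countable weak$^*$-dense family $\{f_m\}$ is exactly what constructs such a functional ($\varphi=\sum_m 2^{-m}f_m$), so your Step~2 just makes that fact self-contained.
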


\begin{proof}
Suppose, to the contrary, that for some $t\in(0,T]$,
\begin{equation}
0\preceq V_t\quad\text{a.s.},
\qquad
\mathbb P(0\prec V_t)>0.
\end{equation}
Since $\mathbb Q$ is equivalent to $\mathbb P$,
$\mathbb Q(0\prec V_t)>0$. Since $E$ is a separable Banach lattice,
there exists a strictly positive continuous linear functional
$\varphi\in E^*$. Hence,
\begin{equation}
\varphi(V_t)\geq0\quad\text{a.s.},
\qquad
\mathbb Q\bigl(\varphi(V_t)>0\bigr)>0,
\end{equation}
and therefore
\begin{equation}
\varphi\left(\mathbb E^{\mathbb Q}[V_t]\right)
=
\mathbb E^{\mathbb Q}\left[\varphi(V_t)\right]>0.
\end{equation}
Since the positive cone of $E$ is closed and convex,
$\mathbb E^{\mathbb Q}[V_t]$ belongs to the positive cone. Moreover, the
strict positivity of $\varphi$ implies
$\varphi(\mathbb E^{\mathbb Q}[V_t])>0$, and hence
$0\prec\mathbb E^{\mathbb Q}[V_t]$.

On the other hand, since $T(-t)V_t$ is a $\mathbb Q$-martingale and
$V_0=0$,
\begin{equation}
T(-t)\mathbb E^{\mathbb Q}[V_t]
=
\mathbb E^{\mathbb Q}[T(-t)V_t]
=0.
\end{equation}
Since $T(-t)$ is invertible, it follows that
$\mathbb E^{\mathbb Q}[V_t]=0$, a contradiction.
\end{proof}
Applying Proposition~\ref{prop:no_arbitrage} with $E=H$ yields the following corollary.
\begin{corollary}\label{cor:no_arbitrage}
Suppose that there exists a risk-neutral probability measure
$\mathbb Q$ equivalent to $\mathbb P$. If the asset price process is
given by the strong formulation \eqref{Qstrong}, then the market is free
of $\preceq$-arbitrage.

If the asset price process is given by the mild formulation
\eqref{Qmild}, then the sequence of corresponding approximate markets
is free of $\preceq$-arbitrage in the sense of
\eqref{arbiapp}.
\end{corollary}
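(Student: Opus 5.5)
The plan is to reduce both parts of the corollary to Proposition~\ref{prop:no_arbitrage} with $E=H$ equipped with the lattice order $\preceq$. For the strong formulation \eqref{Qstrong}, take $T(t)=S(t)$, which is a $C_0$-group on $H$. Let $(\phi,\theta)$ be an admissible portfolio with value process $V_t=\phi(t)X_t+\theta(t)B_t$ and $V_0=0$. Condition (3) of Definition~\ref{selfstrong} states that $\{S(-t)V_t\}_{t\in[0,T]}$ is an $H$-valued $\mathbb Q$-martingale. Since $\mathbb Q\sim\mathbb P$, Proposition~\ref{prop:no_arbitrage} applied at $t=T$ rules out
$0\preceq V_T$ a.s. together with $\mathbb P(0\prec V_T)>0$. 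Hence no portfolio satisfies \eqref{arbi}, and the strong market is free of $\preceq$-arbitrage.

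For the mild formulation \eqref{Qmild}, argue by contradiction. Suppose a sequence of admissible portfolios $(\phi^n,\theta^n)$ in the approximate markets and an index $N$ satisfy \eqref{arbiapp} for every $n\ge N$. Fix any such $n$, for instance $n=N$. In the $n$th approximate market, $A_n\in\mathcal L(H)$ generates the group $\{S_n(t)\}_{t\in\mathbb R}$. Admissibility there, understood as Definition~\ref{selfstrong} with $(A,S,B_t)$ replaced by $(A_n,S_n,B_n(t))$, means exactly that $\{S_n(-t)V_t^n\}$ is a $\mathbb Q$-martingale. The Wiener process $W^{\mathbb Q}$ and the measure $\mathbb Q$ are the same for every $n$, since all approximate markets are defined under $\mathbb Q$ in \eqref{Qyosi}.

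Applying Proposition~\ref{prop:no_arbitrage} with $T(t)=S_n(t)$ and $V=V^n$ at time $T$ then contradicts \eqref{arbiapp} for this single $n$. This is stronger than needed, since the definition requires the arbitrage property for all $n\ge N$ and we only used one index. It follows that the sequence of approximate markets is free of $\preceq$-arbitrage.

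The only point requiring care is the mild case. One must check that "admissible in the approximate market" indeed carries the martingale property with respect to the approximating discount group $S_n(-t)$ and the same equivalent measure $\mathbb Q$. One must also check that $H$, being a Hilbert lattice, is a separable Banach lattice, so that Proposition~\ref{prop:no_arbitrage} applies. Both follow directly from the definitions, so I expect no genuine obstacle beyond stating the $n$-dependent version of Definition~\ref{selfstrong} explicitly.
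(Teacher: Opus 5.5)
Your proposal is correct and takes the paper's route: the paper obtains this corollary by applying Proposition~\ref{prop:no_arbitrage} with $E=H$, using the martingale condition built into admissibility, with the group $S$ in the strong case and $S_n$ in each approximate market. Your remark that contradicting \eqref{arbiapp} for a single index $n\ge N$ already suffices is exactly the step the paper leaves implicit in the mild case.
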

We now consider the setting of Section~\ref{realsemi}, where both the payoff and the portfolio value are real-valued. In this case, the definition of arbitrage is analogous to that in the standard finite-dimensional setting.
Suppose that $A=rI$ for some $r\in\mathbb R$. For a real-valued
self-financing portfolio with value process $V$, we say that an
\emph{arbitrage} occurs if
\begin{equation} \label{arbireal}
V_0=0,\qquad V_T\geq0\quad\text{a.s.},
\qquad \mathbb P(V_T>0)>0.
\end{equation}
The market is said to be free of arbitrage if no such portfolio exists.
Applying Proposition~\ref{prop:no_arbitrage} with $E=\mathbb{R}$ yields the following corollary.
\begin{corollary}\label{cor:no_arbitrage_real}
Suppose that $A=rI$ for some $r\in\mathbb R$ and that there exists a
risk-neutral probability measure $\mathbb Q$ equivalent to $\mathbb P$.
Then the market is free of arbitrage in the sense of
\eqref{arbireal}.
\end{corollary}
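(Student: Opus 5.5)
The natural route is to apply Proposition~\ref{prop:no_arbitrage} with $E=\mathbb R$, equipped with its usual order, and with the group $T(t)=e^{rt}$. The argument has three steps, carried out in order.

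\emph{Step 1 (setting up the Banach lattice).} $\mathbb R$ is a separable Banach lattice whose positive cone is $[0,\infty)$. With this order, $0\preceq V_T$ means $V_T\ge0$, and $0\prec V_T$ means $V_T>0$. Hence \eqref{arbireal} is exactly the condition excluded by Proposition~\ref{prop:no_arbitrage} at $t=T$. The family $\{e^{rt}\}_{t\in\mathbb R}$ is trivially a $C_0$-group on $\mathbb R$, and $T(-t)V_t=e^{-rt}V_t$.

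\emph{Step 2 (the martingale property).} It remains to check that $e^{-rt}V_t$ is a $\mathbb Q$-martingale for any self-financing real-valued portfolio. First write the value as $V_t=\langle\phi_t,X_t^x\rangle_H+\theta_t\beta_t$. The self-financing condition, together with the dynamics \eqref{Qreal} and $d\beta_t=r\beta_t\,dt$, gives
\begin{equation*}
dV_t=rV_t\,dt+\langle\phi_t,B(X_t^x)\,dW_t^{\mathbb Q}\rangle_H .
\end{equation*}
By It\^o's product rule this becomes
\begin{equation*}
d(e^{-rt}V_t)=e^{-rt}\langle\phi_t,B(X_t^x)\,dW_t^{\mathbb Q}\rangle_H ,
\end{equation*}
which is a stochastic integral and therefore a local martingale, as already noted before Proposition~\ref{prop:real_hedging}.

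\emph{Step 3 (the main obstacle).} Local martingality is not enough; Proposition~\ref{prop:no_arbitrage} needs a true martingale. I would resolve this in one of two ways.
\begin{itemize}
\item[(a)] Build it into the notion of admissibility, in parallel with Definition~\ref{selfstrong}(3): require $\mathbb E^{\mathbb Q}\int_0^T\|B(X_t^x)^*\phi_t\|_{U_0}^2\,dt<\infty$, so the integral is a square-integrable martingale.
\item[(b)] Use the standard supermartingale argument. Under a lower-boundedness (admissibility) condition, the local martingale $e^{-rt}V_t$ is a supermartingale by Fatou's lemma. Then $\mathbb E^{\mathbb Q}[e^{-rT}V_T]\le V_0=0$. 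Combined with $V_T\ge0$, this forces $V_T=0$ $\mathbb Q$-a.s., and hence $\mathbb P$-a.s. by equivalence, contradicting $\mathbb P(V_T>0)>0$.
\end{itemize}
With either resolution, the conclusion follows directly from Proposition~\ref{prop:no_arbitrage}, or from the one-line expectation argument in its proof with $\varphi=\mathrm{id}$.
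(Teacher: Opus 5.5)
Your proposal is correct and follows the paper's route, since the paper's proof is just the application of Proposition~\ref{prop:no_arbitrage} with $E=\mathbb R$ and $T(t)=e^{rt}$. Your Step~3 is a genuine improvement rather than a detour: the paper only asserts that $e^{-rt}V_t$ is a local $\mathbb Q$-martingale for self-financing portfolios, then silently uses the true martingale property the proposition requires, and because doubling-type strategies (e.g.\ via Dudley's representation theorem) can produce self-financing portfolios with $V_0=0$ and $V_T=1$, an admissibility condition like your (a) or (b) is actually necessary for the corollary to hold.
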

\subsection{Market Completeness}
We now consider the analogous notion of market completeness. Our analysis of market completeness is primarily operator-theoretic, focusing on the solvability of the operator equations arising from the martingale representation of contingent claims. As discussed in Remark~\ref{rnm}, the existence of a risk-neutral measure is generally difficult to verify in the present infinite-dimensional setting, and we therefore study completeness under a given risk-neutral measure $\mathbb Q$. When the state process admits the strong formulation \eqref{Qstrong}, the market is called \emph{complete} if every contingent claim in
$L^2(\Omega,\mathcal F_T,\mathbb Q;H)$ is attainable by an
admissible portfolio. For the real-valued setting of Section~\ref{realsemi}, the market is
called \emph{complete} if every real-valued contingent claim in
$L^2(\Omega,\mathcal F_T,\mathbb Q)$ is attainable by a
self-financing portfolio. In the real-valued case with $A=rI$, our attainability results
coincide with those obtained in \cite{taflin2005bond} for bond markets,
although the two approaches are motivated by different problems. In the context of this paper, one may also investigate the closures of the sets of admissible trading strategies or attainable contingent claims in suitable function spaces.

Let $\xi\in L^2(\Omega,\mathcal F_T,\mathbb Q;H)$ be a contingent
claim, and define
\begin{equation}
M_t
=
\mathbb E^{\mathbb Q}
\left[
S(-T)\xi\mid\mathcal F_t
\right].
\end{equation}
By the martingale representation theorem, there exists a
progressively measurable process
$Z:[0,T]\times\Omega\to L_2(U_0,H)$ satisfying
\begin{equation}
\mathbb E^{\mathbb Q}
\int_0^T
\|Z_t\|_{L_2(U_0,H)}^2\,dt<\infty,
\end{equation}
such that
\begin{equation}
M_t=M_0+\int_0^t Z_s\,dW_s^{\mathbb Q}.
\end{equation}

\begin{theorem}\label{thm:general_replication}
Suppose that the market is driven by \eqref{Qstrong}.
Then $\xi$ is attainable if and only if there exists a
progressively measurable process
$\phi:[0,T]\times\Omega\to\mathcal L(H)$ satisfying
\begin{equation}
\phi(t)B(X_t)=S(t)Z_t
\end{equation}
for almost every $(t,\omega)$.
\end{theorem}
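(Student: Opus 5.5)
The plan is to reduce attainability to an identity between the discounted value process and the martingale $M$, and then compare the diffusion parts.

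\emph{Necessity.} Let $(\phi,\theta)$ be admissible with $V_T=\xi$. By Definition~\ref{selfstrong}, the discounted value $S(-t)V_t$ is a $\mathbb Q$-martingale. Its terminal value is $S(-T)\xi$, so $S(-t)V_t=M_t$ for all $t$. I would then compute $d(S(-t)V_t)$ using the product rule for the bounded group, which is legitimate since $A\in\mathcal L(H)$ in the strong setting. The self-financing condition together with \eqref{Qstrong} and $dB_t=AB_t\,dt$ give
\begin{equation*}
d\bigl(S(-t)V_t\bigr)=S(-t)\bigl(\phi_tAX_t+\theta_tAB_t-AV_t\bigr)\,dt+S(-t)\phi_tB(X_t)\,dW_t^{\mathbb Q}.
\end{equation*}
Since this process equals $M$, its finite-variation part vanishes, which recovers the remark after Definition~\ref{selfstrong}. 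Uniqueness of the integrand in the martingale representation (an It\^o isometry argument) then gives $S(-t)\phi_tB(X_t)=Z_t$ for a.e.\ $(t,\omega)$. Equivalently, $\phi(t)B(X_t)=S(t)Z_t$.

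\emph{Sufficiency.} Given $\phi$ with $\phi(t)B(X_t)=S(t)Z_t$, I would set $V_t:=S(t)M_t$. Then $V_T=\xi$, $S(-t)V_t=M_t$ is a martingale, and by the product rule
\begin{equation*}
dV_t=AV_t\,dt+S(t)Z_t\,dW_t^{\mathbb Q}=AV_t\,dt+\phi_tB(X_t)\,dW_t^{\mathbb Q}.
\end{equation*}
It remains to find a bank position $\theta_t$ with two properties. First, $\theta_tB_t=V_t-\phi_tX_t=:R_t$, which gives the value identity \eqref{portde}. Second, $\theta_tAB_t=AV_t-\phi_tAX_t=:Q_t$, which gives the self-financing condition: $\phi_t\,dX_t+\theta_t\,dB_t=(\phi_tAX_t+Q_t)\,dt+\phi_tB(X_t)\,dW_t^{\mathbb Q}=dV_t$.

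For this I would reuse the construction from the proof of Theorem~\ref{thm:replication-yn}. By the standing assumption $Ae\notin\operatorname{span}\{e\}$ and the invertibility of $S(t)$, the vectors $B_t=S(t)e$ and $AB_t$ are linearly independent. Define $\widetilde\theta_t$ on $M_t=\operatorname{span}\{B_t,AB_t\}$ by $\alpha B_t+\beta AB_t\mapsto\alpha R_t+\beta Q_t$, and set $\theta_t:=\widetilde\theta_tP_{M_t}$.

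The main obstacle I expect is measurability and integrability of $\theta$. Progressive measurability should follow as in Theorem~\ref{thm:replication-yn}, since $R,Q$ are progressively measurable and $t\mapsto P_{M_t}$ is continuous in operator norm: the Gram matrix of $B_t,AB_t$ is continuous and invertible. One also needs the stochastic integrals in the self-financing identity to be well defined. For $\phi_tB(X_t)=S(t)Z_t$ this holds because $Z$ is square integrable. For $\phi_tX_t$ and $\phi_tAX_t$ inside $R$ and $Q$, I would only require pathwise local integrability, possibly adding a mild integrability condition on $\phi$ implicit in admissibility. The $dt$-terms then match identically, so the self-financing identity holds pathwise.
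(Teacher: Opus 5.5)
Your proposal follows the paper's proof essentially step for step: necessity identifies $S(-t)V_t$ with $M_t$ and uses uniqueness of the martingale-representation integrand to get $S(-t)\phi(t)B(X_t)=Z_t$, and sufficiency sets $V_t=S(t)M_t$ and builds $\theta$ with $\theta(t)B_t=V_t-\phi(t)X_t$ and $\theta(t)AB_t=AV_t-\phi(t)AX_t$, exactly as in Theorem~\ref{thm:replication-yn}. Your extra remarks are correct and fill in details the paper leaves implicit, namely the linear independence of $B_t=S(t)e$ and $AB_t=S(t)Ae$ and the measurability of $\theta$ via the continuous, invertible Gram matrix.
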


\begin{proof}
Suppose first that $\xi$ is attainable, and let $(\phi,\theta)$ be an
admissible portfolio such that $V_T=\xi$. Since
$\{S(-t)V_t\}_{t\in[0,T]}$ is an $H$-valued $\mathbb Q$-martingale,
\begin{equation}
S(-t)V_t
=
\mathbb E^{\mathbb Q}
\left[
S(-T)\xi\mid\mathcal F_t
\right]
=
M_t.
\end{equation}
By the self-financing condition and the martingale property of
$S(-t)V_t$,
\begin{equation}
S(-t)V_t
=
V_0+\int_0^tS(-s)\phi(s)B(X_s)\,dW_s^{\mathbb Q}.
\end{equation}
Comparing this with
$M_t=M_0+\int_0^t Z_s\,dW_s^{\mathbb Q}$, the uniqueness of the
martingale representation yields
$S(-t)\phi(t)B(X_t)=Z_t$ almost everywhere, and hence
$\phi(t)B(X_t)=S(t)Z_t$.

Conversely, suppose that there exists $\phi(t)\in\mathcal L(H)$ such
that $\phi(t)B(X_t)=S(t)Z_t$, and set $V_t=S(t)M_t$. Then
$V_T=\xi$ and
\begin{equation}
dV_t
=
AV_t\,dt+S(t)Z_t\,dW_t^{\mathbb Q}.
\end{equation}

Choose $\theta(t)\in\mathcal L(H)$, as in
\Cref{thm:replication-yn}, such that
$\theta(t)B_t=V_t-\phi(t)X_t$ and
$\theta(t)AB_t=AV_t-\phi(t)AX_t$. It follows that $V_t=\phi(t)X_t+\theta(t)B_t$, and
\begin{align}
\phi(t)\,dX_t+\theta(t)\,dB_t
&=
\left[
\phi(t)AX_t+\theta(t)AB_t
\right]dt
+
\phi(t)B(X_t)\,dW_t^{\mathbb Q}
\nonumber\\
&=
AV_t\,dt+S(t)Z_t\,dW_t^{\mathbb Q}
=
dV_t.
\end{align}
Moreover, $S(-t)V_t=M_t$, which is an $H$-valued
$\mathbb Q$-martingale. Thus $(\phi,\theta)$ is
admissible, and since $V_T=\xi$, the claim $\xi$ is
attainable.
\end{proof}

Applying the same approach with $A=rI$ and a real-valued
contingent claim $\xi\in L^2(\Omega,\mathcal F_T,\mathbb Q)$,
and denoting by $Z$ the associated $U_0$-valued martingale representation process, we obtain the following result.

\begin{corollary}\label{cor:real_replication}
The claim $\xi\in L^2(\Omega,\mathcal F_T,\mathbb Q)$ is attainable if and only if there exists a
progressively measurable process
$\phi:[0,T]\times\Omega\to H$ satisfying
\begin{equation}
B(X_t^x)^*\phi_t=e^{rt}Z_t
\end{equation}
for almost every $(t,\omega)$.
\end{corollary}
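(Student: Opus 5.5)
The plan is to mirror the proof of \Cref{thm:general_replication}, replacing operator-valued positions by $H$-valued ones and the group $S(t)$ by the scalar factor $e^{rt}$. Set $M_t=\mathbb E^{\mathbb Q}[e^{-rT}\xi\mid\mathcal F_t]$, a real-valued square-integrable martingale. The martingale representation theorem for the $Q$-Wiener process gives a progressively measurable $U_0$-valued $Z$ with $\mathbb E^{\mathbb Q}\int_0^T\|Z_t\|_{U_0}^2dt<\infty$ and
\begin{equation*}
M_t=M_0+\int_0^t\langle Z_s,dW_s^{\mathbb Q}\rangle_{U_0}.
\end{equation*}
Here the real-valued stochastic integral $\int\Psi\,dW$ with $\Psi\in L_2(U_0,\mathbb R)$ is identified, via Riesz, with $\langle Z,dW\rangle_{U_0}$.

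For necessity, I take a self-financing $(\phi,\theta)$ with $V_T=\xi$. The self-financing condition and $dX_t=rX_tdt+B(X_t)dW_t^{\mathbb Q}$ give
\begin{equation*}
d(e^{-rt}V_t)=e^{-rt}\langle\phi_t,B(X_t)\,dW_t^{\mathbb Q}\rangle_H=e^{-rt}\langle B(X_t)^*\phi_t,dW_t^{\mathbb Q}\rangle_{U_0},
\end{equation*}
where $B(X_t)^*:H\to U_0$ is the Hilbert adjoint. The first task is then to show that $e^{-rt}V_t$ is a true martingale and not merely local. I will impose this as part of admissibility, in analogy with \Cref{selfstrong}(3), since the definition of self-financing in Section~\ref{realsemi} only gives a local martingale. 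With that in hand, $e^{-rt}V_t=\mathbb E^{\mathbb Q}[e^{-rT}\xi\mid\mathcal F_t]=M_t$, and uniqueness of the martingale representation in $L^2$ yields $e^{-rt}B(X_t)^*\phi_t=Z_t$ for a.e.\ $(t,\omega)$, which is the stated identity.

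For sufficiency, given $\phi$ with $B(X_t)^*\phi_t=e^{rt}Z_t$, I define $V_t=e^{rt}M_t$ and $\theta_t=(V_t-\langle\phi_t,X_t\rangle_H)/\beta_t$, as in Proposition~\ref{prop:real_hedging}. Then $V_t=\langle\phi_t,X_t\rangle_H+\theta_t\beta_t$ and $V_T=\xi$. It\^o's product rule gives
\begin{equation*}
dV_t=rV_tdt+e^{rt}\langle Z_t,dW_t^{\mathbb Q}\rangle_{U_0}.
\end{equation*}
On the other side,
\begin{equation*}
\langle\phi_t,dX_t\rangle_H+\theta_td\beta_t=r\langle\phi_t,X_t\rangle_Hdt+\langle B(X_t)^*\phi_t,dW_t^{\mathbb Q}\rangle_{U_0}+r(V_t-\langle\phi_t,X_t\rangle_H)dt.
\end{equation*}
This equals $rV_tdt+e^{rt}\langle Z_t,dW_t^{\mathbb Q}\rangle_{U_0}=dV_t$, so the portfolio is self-financing. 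Progressive measurability of $\theta$ follows from that of $\phi$, $X$ and $M$.

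The main obstacle is the integrability and martingale issue in the necessity direction, together with correctly identifying the stochastic integral $\langle\phi,B\,dW\rangle_H$ with $\langle B^*\phi,dW\rangle_{U_0}$. The identification holds because $\Psi_t=\phi_t^*B(X_t)$, i.e.\ $u\mapsto\langle\phi_t,B(X_t)u\rangle_H$, is Hilbert--Schmidt into $\mathbb R$, with norm $\|B(X_t)^*\phi_t\|_{U_0}$. I would note explicitly that the martingale requirement is part of attainability, and that the integrability of $\phi$ needed for the stochastic integral is then automatic from $\mathbb E\int\|Z\|^2<\infty$.
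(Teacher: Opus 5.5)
Your proposal is correct and is essentially the paper's argument, which transposes the proof of \Cref{thm:general_replication} to $A=rI$. Necessity identifies $e^{-rt}V_t$ with $M_t$ and uses uniqueness of the martingale representation, via the Riesz identification of $\langle\phi_t,B(X_t)\,dW_t^{\mathbb Q}\rangle_H$ with $\langle B(X_t)^*\phi_t,dW_t^{\mathbb Q}\rangle_{U_0}$; sufficiency sets $V_t=e^{rt}M_t$ with the scalar bank position $\theta_t=(V_t-\langle\phi_t,X_t\rangle_H)/\beta_t$ as in \Cref{prop:real_hedging}. Your explicit requirement that $e^{-rt}V_t$ be a true $\mathbb Q$-martingale is the analogue of Definition~\ref{selfstrong}(3), which the paper's ``same approach'' tacitly carries over, and it is genuinely needed for the ``only if'' direction: otherwise Dudley-type strategies, whose discounted value is only a local martingale, can attain claims whose integrand $Z$ does not lie in the range of $B(X_t^x)^*$.
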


We next discuss asymptotic completeness in the general setting.

\begin{corollary}\label{cor:asymptotic_completeness}
Let
$\mathcal Z
=
L_{\mathcal P}^2
\bigl(\Omega\times[0,T];L_2(U_0,H)\bigr)$.
Suppose that, for every $Z\in\mathcal Z$, there exists a sequence of
progressively measurable processes $\phi^n(t)\in\mathcal L(H)$ such that
\begin{equation}
S_n(-t)\phi^n(t)B(X_t^{n,x})
\longrightarrow Z_t
\end{equation}
in $\mathcal Z$. Then the market is asymptotically complete.
\end{corollary}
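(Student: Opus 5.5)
Given a claim $\xi\in L^2(\Omega,\mathcal F_T,\mathbb Q;H)$, I will construct, for each $n$, an admissible portfolio in the $n$th approximate market whose terminal value converges to $\xi$ in $L^2(\Omega;H)$. The construction imitates the converse direction of Theorem~\ref{thm:general_replication}, carried out in the strong market driven by $A_n$. Two things are needed from the martingale representation $M_t=M_0+\int_0^tZ_s\,dW_s^{\mathbb Q}$ of $M_t=\mathbb E^{\mathbb Q}[S(-T)\xi\mid\mathcal F_t]$: the integrand $Z\in\mathcal Z$, and the constant $M_0$.

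By hypothesis there are progressively measurable $\phi^n(t)\in\mathcal L(H)$ with $Z^n_t:=S_n(-t)\phi^n(t)B(X_t^{n,x})\to Z$ in $\mathcal Z$. I will define the $n$th candidate discounted value by
\begin{equation*}
M^n_t:=M_0+\int_0^t Z^n_s\,dW_s^{\mathbb Q},\qquad V^n_t:=S_n(t)M^n_t .
\end{equation*}
Because $A_n$ is bounded, It\^o's formula gives
\begin{equation*}
dV^n_t=A_nV^n_t\,dt+\phi^n(t)B(X_t^{n,x})\,dW_t^{\mathbb Q}.
\end{equation*}
Next, exactly as in the proofs of Theorem~\ref{thm:replication-yn} and Theorem~\ref{thm:general_replication}, I will choose $\theta^n(t)\in\mathcal L(H)$ on the two-dimensional space $\operatorname{span}\{B^n_t,A_nB^n_t\}$, extended by the orthogonal projection. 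This uses $A_ne\notin\operatorname{span}\{e\}$. The two defining conditions are
\begin{equation*}
\theta^n(t)B^n_t=V^n_t-\phi^n(t)X^{n,x}_t,\qquad \theta^n(t)A_nB^n_t=A_nV^n_t-\phi^n(t)A_nX^{n,x}_t .
\end{equation*}
The first condition gives $V^n_t=\phi^n(t)X^{n,x}_t+\theta^n(t)B^n_t$. Combined with the dynamics $dX^{n,x}_t=A_nX^{n,x}_t\,dt+B(X^{n,x}_t)\,dW^{\mathbb Q}_t$ and $dB^n_t=A_nB^n_t\,dt$, the second condition gives the self-financing identity. Progressive measurability of $\theta^n$ follows from that of its defining data. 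The discounted value $S_n(-t)V^n_t=M^n_t$ is a square-integrable $\mathbb Q$-martingale, because $Z^n\in\mathcal Z$ (at least for large $n$, since $Z^n$ converges in $\mathcal Z$). Hence $(\phi^n,\theta^n)$ is admissible.

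It remains to show $V^n_T=S_n(T)M^n_T\to S(T)M_T=\xi$ in $L^2(\Omega;H)$. I will split the difference as
\begin{equation*}
S_n(T)(M^n_T-M_T)+(S_n(T)-S(T))M_T .
\end{equation*}
For the first term, the It\^o isometry gives
\begin{equation*}
\mathbb E\|M^n_T-M_T\|^2=\|Z^n-Z\|_{\mathcal Z}^2\to0,
\end{equation*}
and $\sup_n\|S_n(T)\|<\infty$, so this term tends to zero. For the second term, $M_T=S(-T)\xi$ is a fixed element of $L^2(\Omega;H)$. Then $\|(S_n(T)-S(T))M_T\|_H\to0$ pointwise in $\omega$ by strong convergence of the Yosida groups, and it is dominated by $C\|M_T\|_H$. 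Dominated convergence therefore yields convergence in $L^2$, as in Lemma~\ref{conversnl2}.

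The delicate point is the use of the Yosida groups for negative times, that is, $S_n(-t)\to S(-t)$ with uniform bounds. The paper asserts this when introducing the approximate markets as $C_0$-groups. For generators of groups, one applies the Yosida approximation to both $A$ and $-A$; this is what guarantees that the $S_n(t)$ are uniformly bounded on $[-T,T]$. I will invoke that fact rather than reprove it. A second point is that the hypothesis only gives convergence in $\mathcal Z$, so $Z^n$ need not equal $Z$ for any $n$. This is exactly why the initial value $M_0$ is shared across all $n$ and the error is controlled purely through the isometry.
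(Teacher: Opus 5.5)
Your proposal is correct and follows the paper's proof. It uses the same candidate $\xi_n=S_n(T)\bigl(M_0+\int_0^T Z^n_t\,dW^{\mathbb Q}_t\bigr)$, attainability in the $n$th approximate market via the converse direction of Theorem~\ref{thm:general_replication} (which you simply re-derive), and the same $L^2$ estimate relying only on $\sup_n\|S_n(T)\|_{\mathcal L(H)}<\infty$ and $S_n(T)\to S(T)$ strongly.

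You should drop your closing remark about negative times. The argument uses $S_n(-t)$ only through the algebraic identity $S_n(-t)S_n(t)=I$. Moreover, the facts you assert there are false in general for the groups $e^{tA_n}$ used in the approximate markets: for the translation group on $L^2(\mathbb R)$ one gets $\|S_n(-t)\|=e^{nt}$, and the paper itself notes right after the corollary that $S_n(-t)x\to S(-t)x$ fails in general.
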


\begin{proof}
Let $\xi\in L^2(\Omega,\mathcal F_T,\mathbb Q;H)$ and write
\begin{equation}
S(-T)\xi
=
m+\int_0^T Z_t\,dW_t^{\mathbb Q},
\end{equation}
where $m\in H$ and $Z\in\mathcal Z$. Choose $\phi^n$ as in the
statement and set
$Z_t^n=S_n(-t)\phi^n(t)B(X_t^{n,x})$. Define
\begin{equation}
\xi_n
=
S_n(T)
\left(
m+\int_0^T Z_t^n\,dW_t^{\mathbb Q}
\right).
\end{equation}
Since $\phi^n(t)B(X_t^{n,x})=S_n(t)Z_t^n$,
Theorem~\ref{thm:general_replication} implies that $\xi_n$ is
attainable in the $n$th approximate market. Since $Z^n\to Z$ in $\mathcal Z$, $S_n(T)\to S(T)$ strongly,
and $\sup_n\|S_n(T)\|_{\mathcal L(H)}<\infty$, we have
$\xi_n\to\xi$ in $L^2(\Omega;H)$. Thus $\xi$ is asymptotically
attainable. Since $\xi$ was arbitrary, the market is asymptotically
complete.
\end{proof}
A necessary condition for asymptotic completeness in the mild
setting remains open. In particular, the preceding argument cannot
in general be reversed, since $S_n(-t)$ is generated by $-A_n$ and
is not the Yosida approximation of the semigroup generated by $-A$.
Consequently, the convergence $S_n(-t)x\longrightarrow S(-t)x$
does not hold in general.
\begin{lemma}\label{lem:measurable_nonrange}
Let $H$ and $U$ be separable Hilbert spaces, with $U$
infinite-dimensional. There exists a bounded Borel measurable mapping
$F:L_2(H,U)\to U$ such that
$F(K)\notin\mathcal R(K)$ for every $K\in L_2(H,U)$.
\end{lemma}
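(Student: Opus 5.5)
We construct $F$ explicitly from the singular values of $K$. The key observation is that every $K\in L_2(H,U)$ is compact, so its range cannot contain any vector whose coordinates along the image directions of $K$ decay too slowly relative to the singular values. When $K$ has finite rank, the range is finite-dimensional and we simply pick a unit vector orthogonal to it.

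The general case reduces to a single fixed orthonormal basis. Fix an orthonormal basis $(u_j)_{j\ge1}$ of $U$. Define
\begin{equation*}
F(K)=\sum_{j\ge1} c_j(K)\,u_j ,
\end{equation*}
where the coefficients are chosen so that $F(K)\notin\mathcal R(K)$. The difficulty is that $\mathcal R(K)$ need not be closed, and it may also be dense. To handle this, we use the polar/SVD structure: $\mathcal R(K)=\mathcal R\bigl((KK^*)^{1/2}\bigr)$. Writing $KK^*=\sum_i \sigma_i^2\langle\cdot,v_i\rangle v_i$ with $\sum\sigma_i^2=\|K\|_{L_2}^2<\infty$, a vector $y$ lies in $\mathcal R(K)$ if and only if $y\perp\ker K^*$ and $\sum_i \sigma_i^{-2}|\langle y,v_i\rangle|^2<\infty$.

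We split into two cases.

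\emph{Case 1: $\ker K^*\neq\{0\}$.} Take $F(K)$ to be a unit vector in $\ker K^*$. A measurable choice is obtained by applying the projection onto $\ker K^*$ to the first $u_j$ not annihilated by it, and normalizing.

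\emph{Case 2: $\ker K^*=\{0\}$.} In this case $(v_i)$ is an orthonormal basis of $U$, because $U$ is infinite-dimensional. We set
\begin{equation*}
F(K)=\sum_i \frac{\sigma_i}{\|K\|_{L_2}}\,v_i .
\end{equation*}
This vector has norm $1$. Its membership test gives $\sum_i \sigma_i^{-2}\sigma_i^2/\|K\|_{L_2}^2=\infty$, so $F(K)\notin\mathcal R(K)$.

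To make the construction basis-free, which is needed to avoid ambiguity from eigenvector choices, we rewrite Case 2 as $F(K)=(KK^*)^{1/2}w$, normalized, for a suitable $w$. However, $(KK^*)^{1/2}w\in\mathcal R(K)$ always holds, so this rewriting fails. Instead we use the fact that spectral projections of $KK^*$ onto eigenspaces are Borel functions of $K$. Within each finite-dimensional eigenspace $E_\lambda$, take the normalized projection of $u_{j(\lambda)}$, where $j(\lambda)$ is the first index with nonzero projection; call this unit vector $v_\lambda$. Then set
\begin{equation*}
F(K)=\sum_{\lambda}\sqrt{\lambda}\,v_\lambda\Big/\Bigl(\textstyle\sum_\lambda \lambda\Bigr)^{1/2} .
\end{equation*}
The same divergence argument applies, since only one vector per eigenspace is used: the sum $\sum_\lambda 1$ diverges because there are infinitely many eigenvalues when $\ker K^*=0$ and $\dim U=\infty$.

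Boundedness holds since $\|F(K)\|\le1$.

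The main obstacle is Borel measurability. We need the following ingredients:
\begin{enumerate}
\item The map $K\mapsto KK^*$ is continuous into trace class operators.
\item The sets $\{\ker K^*\ne0\}$ are Borel. This follows by writing the condition as $\inf$ over a countable dense set of unit vectors of $\|K^*y\|$ equal to zero, restricted to finite-dimensional spans, or via eigenvalue counting functions.
\item The eigenvalues, listed in decreasing order with multiplicity, are continuous.
\item The spectral projections $\mathbf 1_{\{\lambda\}}(KK^*)$ can be obtained as limits of continuous functional calculus, and are therefore Borel.
\end{enumerate}
We would organize the construction as a countable composition of Borel operations, such as "first index $j$ with nonzero projection". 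The result is then Borel by standard arguments, in the spirit of Kuratowski–Ryll-Nardzewski measurable selection if needed.
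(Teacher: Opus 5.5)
Your non-range argument is correct in both cases. A nonzero vector of $\ker K^*=\mathcal R(K)^\perp$ cannot lie in $\mathcal R(K)$. When $\ker K^*=\{0\}$, the operator $KK^*$ has infinitely many distinct positive eigenvalues, so your normalized vector violates the Picard condition for $\mathcal R(K)=\mathcal R\bigl((KK^*)^{1/2}\bigr)$.

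The weak point is Borel measurability. You flag it as the main obstacle but only sketch it, and one of your justifications fails. In item 2, the infimum of $\lVert K^*y\rVert_H$ over unit vectors $y$ is $0$ for every $K$, because $K^*$ is compact and $\dim U=\infty$. A kernel vector also need not lie in any finite span of the $u_j$. Eigenvalue counting only sees $\operatorname{rank}(KK^*)$, which is infinite both for injective $K^*$ and for $K^*$ with a nontrivial finite-dimensional kernel.

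A correct argument: for each $u\in U$, $P_{\ker K^*}u=\lim_{k\to\infty}(I+kKK^*)^{-1}u$. So $K\mapsto P_{\ker K^*}u_j$ is a pointwise limit of continuous maps, and $\{K:\ker K^*\neq\{0\}\}=\bigcup_j\{K:P_{\ker K^*}u_j\neq0\}$ is Borel. This also makes your Case 1 selection measurable.

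In item 4 the eigenvalue itself depends on $K$. You must first label the distinct positive eigenvalues $\mu_1(K)>\mu_2(K)>\cdots$ Borel-measurably. They are Borel, though not continuous, functions of the continuous ordered eigenvalues counted with multiplicity. Then use $\mathbf 1_{\{\mu_m(K)\}}(KK^*)u=\lim_{k\to\infty}\bigl(I+k(KK^*-\mu_m(K)I)^2\bigr)^{-1}u$. With these repairs your construction is a valid proof, and no measurable-selection theorem is needed.

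The paper's proof takes a different and considerably lighter route. It fixes an orthonormal basis $(e_n)$ of $U$ and uses only the continuous functions $f_n(K)=\lVert K^*e_n\rVert_H$, which tend to $0$ since $\sum_n f_n(K)^2=\lVert K\rVert_{L_2(H,U)}^2$. It greedily chooses indices $n_1(K)<n_2(K)<\cdots$ with $m\,f_{n_m(K)}(K)<2^{-m-1}$ and sets $F(K)=\sum_m 2^{-m}e_{n_m(K)}$. If $F(K)=Kx$, then for a positive integer $m\ge\lVert x\rVert_H$ one gets $2^{-m}=\lvert\langle x,K^*e_{n_m(K)}\rangle\rvert\le m\,f_{n_m(K)}(K)<2^{-m-1}$, a contradiction.

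This treats all $K$ uniformly, with no case split and no spectral theory. Measurability reduces to the Borel measurability of integer-valued maps defined by countably many strict inequalities between continuous functions. Your construction instead yields a unit vector adapted to the singular system of $K$ and makes the Picard obstruction explicit. The cost is measurable functional calculus and eigenvalue bookkeeping.
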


\begin{proof}
Let $\{e_n\}_{n\geq1}$ be an orthonormal basis of $U$. For each
$n\geq1$, define
$f_n(K)=\lVert K^*e_n\rVert_H$, $K\in L_2(H,U)$.
Each $f_n$ is continuous and $f_n(K)\to0$. Set $n_0(K)=0$. For
$m\geq1$, define $n_m(K)$ to be the smallest integer strictly larger
than $n_{m-1}(K)$ such that
$m f_{n_m(K)}(K)<2^{-m-1}$. The preceding convergence shows that
$n_m(K)$ is well defined for every $K$ and every $m$.

We next show inductively that each mapping
$n_m:L_2(H,U)\to\mathbb N$ is Borel measurable. For $m=1$ and
$j\geq1$,
\begin{equation}
\{K:n_1(K)=j\}
=
\{K:f_j(K)<2^{-2}\}
\cap
\bigcap_{k=1}^{j-1}\{K:f_k(K)\geq2^{-2}\},
\end{equation}
which is Borel since each $f_k$ is continuous. Suppose now that
$n_{m-1}$ is Borel measurable. For every $j\geq1$,
\begin{align}
\{K:n_m(K)=j\}
=
\bigcup_{i=0}^{j-1}
\Bigg(
&
\{K:n_{m-1}(K)=i\}
\cap
\{K:m f_j(K)<2^{-m-1}\}
\nonumber\\
&\cap
\bigcap_{k=i+1}^{j-1}
\{K:m f_k(K)\geq2^{-m-1}\}
\Bigg).
\end{align}
Thus every fiber $\{K:n_m(K)=j\}$ is Borel, and hence $n_m$ is Borel
measurable.

Define
$F(K)=\sum_{m=1}^{\infty}2^{-m}e_{n_m(K)}$ and
$F_N(K)=\sum_{m=1}^{N}2^{-m}e_{n_m(K)}$.
Since the mappings $n_m$ are Borel measurable, each $F_N$ is Borel
measurable. Moreover, since $n_1(K)<n_2(K)<\cdots$, the vectors
$\{e_{n_m(K)}\}_{m\geq1}$ are orthonormal, and hence
\begin{equation}
\lVert F(K)-F_N(K)\rVert_U^2
=
\sum_{m=N+1}^{\infty}4^{-m}
\longrightarrow0.
\end{equation}
Therefore $F$ is Borel measurable.

It remains to show that $F(K)\notin\mathcal R(K)$. Suppose, to the
contrary, that $F(K)=Kx$ for some $x\in H$, and choose an integer
$m\geq\lVert x\rVert_H$. By the definition of $F(K)$,
$\langle F(K),e_{n_m(K)}\rangle_U=2^{-m}$, whereas
\begin{align}
\left|
\left\langle Kx,e_{n_m(K)}\right\rangle_U
\right|
&=
\left|
\left\langle x,K^*e_{n_m(K)}\right\rangle_H
\right|
\nonumber\\
&\leq
\lVert x\rVert_H
\lVert K^*e_{n_m(K)}\rVert_H
\leq
m f_{n_m(K)}(K)
<
2^{-m-1},
\end{align}
a contradiction. Therefore $F(K)\notin\mathcal R(K)$.
\end{proof}

\begin{proposition}\label{lem:measurable_factorization}
Let $(S,\mathcal A,\mu)$ be a finite measure space, let $U,H,V$ be separable
Hilbert spaces with $V\neq\{0\}$, and let
$R\in L^2(S;L_2(U,H))$. Then the following statements are equivalent:
\begin{enumerate}
\item For every $Z\in L^2(S;L_2(U,V))$, there exists an
$\mathcal A$-measurable mapping $T:S\to L_2(H,V)$ such that
\begin{equation}
T(s)R(s)=Z(s)
\end{equation}
for $\mu$-almost every $s\in S$.
\item The space $U$ is finite-dimensional and
$\ker R(s)=\{0\}$ for $\mu$-almost every $s\in S$.
\end{enumerate}
\end{proposition}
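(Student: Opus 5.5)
The proof splits into the two implications, with the construction of a measurable left inverse of $R(s)$ carrying (2)$\Rightarrow$(1) and \Cref{lem:measurable_nonrange} driving (1)$\Rightarrow$(2).

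\emph{(2)$\Rightarrow$(1).} Suppose $\dim U=d<\infty$ and $\ker R(s)=\{0\}$ a.e. Then $R(s)^*R(s)\in\mathcal L(U)$ is an invertible $d\times d$ positive matrix for a.e.\ $s$. Define
\begin{equation*}
T(s):=Z(s)\bigl(R(s)^*R(s)\bigr)^{-1}R(s)^*
\end{equation*}
on the full-measure set where $R(s)^*R(s)$ is invertible, and $T(s)=0$ elsewhere. Then $T(s)R(s)=Z(s)$. Since $T(s)$ factors through the finite-dimensional space $U$, it is finite rank and hence lies in $L_2(H,V)$. For measurability, the maps $s\mapsto R(s)^*R(s)$ and $s\mapsto R(s)^*$ are measurable as compositions of a measurable map with continuous maps. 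Inversion is continuous on the open set of invertible operators on $U$, and composition is continuous on bounded sets. The only routine point is to check that the exceptional set $\{\det R(s)^*R(s)=0\}$ is measurable.

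\emph{(1)$\Rightarrow$(2), injectivity.} Fix a unit vector $v\in V$ and choose the process $Z$ explicitly. Suppose $\mu(\{s:\ker R(s)\neq\{0\}\})>0$. Since $\ker R(s)$ is the kernel of $R(s)^*R(s)$, we can measurably select a unit vector $u(s)\in\ker R(s)$ on this set. One way is to take the projection $P(s)$ onto $\ker R(s)$ as a strong limit of $(I+nR(s)^*R(s))^{-1}$, which is measurable, and set $u(s)=P(s)f_{k(s)}/\|P(s)f_{k(s)}\|$, where $k(s)$ is the first basis index with $P(s)f_k\neq0$. Then put $Z(s)=v\otimes u(s)$, that is $Z(s)w=\langle w,u(s)\rangle v$, which is bounded and hence in $L^2(S;L_2(U,V))$. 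Any factorization gives $Z(s)u(s)=T(s)R(s)u(s)=0$ on a set of positive measure, while $Z(s)u(s)=v\neq0$, a contradiction.

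\emph{(1)$\Rightarrow$(2), finite dimension of $U$.} This is the main obstacle, and I would use \Cref{lem:measurable_nonrange}. Suppose $U$ is infinite-dimensional. Applied to $K=R(s)^*\in L_2(H,U)$, the lemma gives a bounded Borel map $F$ with $g(s):=F(R(s)^*)\notin\mathcal R(R(s)^*)$ for every $s$, and $g$ is measurable and bounded. Set $Z(s)=v\otimes g(s)$, so $Z(s)w=\langle w,g(s)\rangle v$ and $Z(s)^*v=g(s)$. If $T(s)R(s)=Z(s)$, then taking adjoints gives $R(s)^*T(s)^*v=Z(s)^*v=g(s)$. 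This places $g(s)\in\mathcal R(R(s)^*)$ for a.e.\ $s$, contradicting the choice of $F$ since $\mu(S)>0$. If $\mu(S)=0$ the statement is degenerate and can be handled separately.

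The delicate points are the measurability of the selections in the injectivity step and, above all, the use of Hilbert--Schmidt adjoints to convert solvability of $T(s)R(s)=Z(s)$ into the range condition that the lemma rules out. The Hilbert--Schmidt requirement on $T$ is only needed in the sufficiency direction, where it holds automatically because $T(s)$ has finite rank.
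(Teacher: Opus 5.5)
Your proposal is correct and follows essentially the paper's proof. It uses the same three ingredients: the left inverse $Z(s)(R(s)^*R(s))^{-1}R(s)^*$ on the measurable set where $R(s)^*R(s)$ is invertible; the adjoint argument with \Cref{lem:measurable_nonrange} applied to $K=R(s)^*$ for the finite-dimensionality of $U$; and the resolvent limit $(I+kR(s)^*R(s))^{-1}\to P_{\ker R(s)}$ for injectivity. In that last step you normalize a measurably selected kernel vector, whereas the paper fixes one basis index $j$ with $\mu(D_j)>0$ and uses $Z(s)w=\langle w,P(s)e_j\rangle v$.

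One correction: the case $\mu(S)=0$ cannot be ``handled separately''. There (1) holds vacuously while (2) fails for infinite-dimensional $U$, so the statement tacitly requires $\mu(S)>0$, exactly as the paper's contradiction arguments do.
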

\begin{proof}
Suppose that (2) holds, and let
$Z\in L^2(S;L_2(U,V))$. Define
\begin{equation}
G(s)=R(s)^*R(s), \qquad s\in S.
\end{equation}
Since $U$ is finite-dimensional,
$L_2(U,H)=\mathcal L(U,H)$ with equivalent norms, and hence $R$ is
also measurable as an $\mathcal L(U,H)$-valued mapping. Since the
adjoint map is continuous in the operator norm,
$s\mapsto R(s)^*$ is measurable as an $\mathcal L(H,U)$-valued
mapping. Hence $s\mapsto(R(s)^*,R(s))$ is measurable. Since the
composition mapping
\begin{equation}
\mathcal L(H,U)\times\mathcal L(U,H)\ni(K_1,K_2)
\mapsto K_1K_2\in\mathcal L(U)
\end{equation}
is continuous, $G:S\to\mathcal L(U)$ is measurable. Let
\begin{equation}
GL(U)=\{A\in\mathcal L(U):A\text{ is a bijection on }U\} \end{equation}
 Since $U$ is finite-dimensional, $\mathcal R(R(s))$ is closed for
every $s\in S$. Thus, if $R(s)$ is injective, we have
$\mathcal R(R(s)^*)=(\ker R(s))^\perp=U$.
Hence $R(s)^*R(s)$ is a bijection on $U$ \cite{desoer1963note}, and therefore
$(R(s)^*R(s))^{-1}\in\mathcal L(U)$. Conversely, if
$G(s)\in GL(U)$, then $\ker R(s)=\ker G(s)=\{0\}$. Therefore,
\begin{equation}
D:=\{s\in S:R(s)\text{ is injective}\}
=G^{-1}(GL(U)).
\end{equation}
Since $GL(U)$ is open in $\mathcal L(U)$, we have $D\in\mathcal A$,
and by assumption $\mu(S\setminus D)=0$.

Equipping $D$ with the trace $\sigma$-algebra $\mathcal A_D$, the
restriction $G|_D:(D,\mathcal A_D)\to
(GL(U),\mathcal B(GL(U)))$ is measurable. Since the inversion map from $GL(U)$ to $\mathcal L(U)$ is continuous in the operator norm, $s\mapsto G(s)^{-1}$ is measurable on $D$ as an $\mathcal L(U)$-valued mapping. Moreover, the restrictions of $Z$ and $R^*$ to
$D$ are measurable. Consequently,
$s\mapsto Z(s)G(s)^{-1}R(s)^*$ is an $L_2(H,V)$-valued measurable
mapping on $D$.

Define
\begin{equation}
T(s)=
\begin{cases}
Z(s)(R(s)^*R(s))^{-1}R(s)^*, & s\in D,\\
0, & s\in D^c.
\end{cases}
\end{equation}
Since $D\in\mathcal A$, $T:S\to L_2(H,V)$ is measurable. For
$s\in D$,
\begin{equation}
T(s)R(s)
=
Z(s)(R(s)^*R(s))^{-1}R(s)^*R(s)
=
Z(s).
\end{equation}
Since $\mu(S\setminus D)=0$, it follows that
$T(s)R(s)=Z(s)$ for $\mu$-almost every $s\in S$.

Conversely, suppose that (1) holds. We first show that $U$ is
finite-dimensional. Otherwise, by \Cref{lem:measurable_nonrange},
there exists a bounded Borel measurable mapping
$F:L_2(H,U)\to U$ such that
$F(K)\notin\mathcal R(K)$ for every $K\in L_2(H,U)$.
Fix $v\in V$ with $\lVert v\rVert_V=1$ and define
\begin{equation}
Z(s)u
=
\langle u,F(R(s)^*)\rangle_U v.
\end{equation}
Then $Z\in L^2(S;L_2(U,V))$. By (1), there exists a measurable
$T:S\to L_2(H,V)$ such that $T(s)R(s)=Z(s)$ almost everywhere.
Taking adjoints and applying both sides to $v$ gives
\begin{equation}
F(R(s)^*)
=
R(s)^*T(s)^*v
\in\mathcal R(R(s)^*)
\end{equation}
almost everywhere, a contradiction. Hence $U$ is finite-dimensional.

With $G$ and $D$ as above, it remains to show that $\mu(D^c)=0$.
For each $k\geq1$, since $G(s)\geq0$, we have
$I+kG(s)\in GL(U)$ for every $s\in S$. Since
$s\mapsto I+kG(s)$ is measurable and takes values in $GL(U)$, and
the inversion map on $GL(U)$ is continuous in the operator norm, $
s\mapsto\bigl(I+kG(s)\bigr)^{-1}$ is measurable. Since $U$ is finite-dimensional and $G(s)\geq0$, it is not difficult
to see that
\begin{equation}
P(s)
:=
\lim_{k\to\infty}
\bigl(I+kG(s)\bigr)^{-1}
\end{equation}
exists as the orthogonal projection
$\ker G(s)=\ker R(s)$. Therefore, as a pointwise limit of measurable
$\mathcal L(U)$-valued mappings, $P:S\to\mathcal L(U)$ is measurable.
Let $\{e_1,\ldots,e_d\}$ be an orthonormal basis of $U$. Since
$P(s)\neq0$ on $D^c$,
\begin{equation}
D^c
=
\bigcup_{j=1}^d
\{s\in D^c:P(s)e_j\neq0\}.
\end{equation}
If $\mu(D^c)>0$, there exists $j$ such that
$D_j:=\{s\in D^c:P(s)e_j\neq0\}$ has positive measure. Fix $v\in V$ with $\lVert v\rVert_V=1$ and let
\begin{equation} \label{Pej}
Z(s)w
=
\langle w,P(s)e_j\rangle_U v.
\end{equation}
Then $Z\in L^2(S;L_2(U,V))$. By (1),
$T(s)R(s)=Z(s)$ almost everywhere for some measurable
$T:S\to L_2(H,V)$. Hence, for almost every $s\in D_j$,
\begin{equation}
0
=
T(s)R(s)P(s)e_j
=
Z(s)P(s)e_j
=
\lVert P(s)e_j\rVert_U^2v,
\qquad \text{a.e.},
\end{equation}
a contradiction since $P(s)e_j\neq0$ on $D_j$. Therefore
$\mu(D^c)=0$, and (2) follows.
\end{proof}
Based on \Cref{lem:measurable_nonrange,lem:measurable_factorization},
the markets described by \eqref{Qstrong} and \eqref{Qreal} are not
complete in the classical sense when $U_0$ is infinite-dimensional;
that is, not every square-integrable contingent claim is attainable.
However, we may derive an asymptotic completeness result by considering
a sequence of markets with finite-dimensional noise. More precisely,
let $\{e_k\}_{k\geq1}$ be an orthonormal basis of $U_0$, set $
U_0^n=\operatorname{span}\{e_1,\ldots,e_n\}$, and let $P_n:U_0\to U_0^n$ denote the orthogonal projection. Setting
$W^{n,\mathbb Q}:=P_nW^{\mathbb Q}$, we consider the truncated system
\begin{equation}\label{Qtruncated}
X_t^{n,x}
=
x
+
\int_0^t AX_s^{n,x}\,ds
+
\int_0^t B(X_s^{n,x})P_n\,dW_s^{\mathbb Q}.
\end{equation}
A contingent claim $\xi\in L^2(\Omega,\mathcal F_T,\mathbb Q;H)$, or
$\xi\in L^2(\Omega,\mathcal F_T,\mathbb Q)$ in the real-valued case
$A=rI$, is said to be \emph{approximately attainable} with respect to
the truncated markets if there exists a sequence of admissible
portfolios $\{(\phi^n,\Xi^n)\}_{n\geq1}$ in the corresponding
truncated markets such that
\begin{equation}
\phi_T^n X_T^{n,x}+\Xi_T^n S(T)e
\longrightarrow \xi
\qquad\text{in }L^2(\Omega;H),
\end{equation}
or
\begin{equation}
\phi_T^n(X_T^{n,x})+\Xi_T^n e^{rT}
\longrightarrow \xi
\qquad\text{in }L^2(\Omega),
\end{equation}
respectively.

If the market is described by the strong formulation \eqref{Qstrong} and
$B(X_t^{n,x})|_{U_0^n}$ is injective almost everywhere
for every $n$, then
every square-integrable contingent claim is approximately attainable.
Indeed, for any such claim $\xi$, let
$\xi^n=\mathbb E^{\mathbb Q}[\xi\mid\mathcal F_T^n]$, where
$\mathcal F_T^n$ is the sigma-algebra generated by $W^{n,\mathbb Q}$
up to time $T$, augmented by the $\mathbb Q$-null sets.
Then $\xi^n$ is attainable in the $n$th truncated market, while the
martingale convergence theorem yields $\xi^n\to\xi$ in $L^2$,
provided that $\mathcal F_T$ is the $\mathbb Q$-augmentation of
the sigma-algebra generated by $W^{\mathbb Q}$ up to time $T$.
Hence, the terminal values of attainable portfolios in the truncated
markets converge in $L^2$ to $\xi$. In this case, we call the market
\emph{approximately complete}.

If the market is described by the mild formulation \eqref{Qmild}, the
Yosida approximation and the finite-dimensional truncation together
give rise to a doubly indexed family of approximate markets. More
precisely, for $n,m\geq1$, let $X^{n,m,x}$ denote the state process
obtained by replacing $A$ with its Yosida approximation $A_n$ and the
driving Wiener process by its $m$-dimensional truncation
$W^{m,\mathbb Q}$. Thus,
\begin{equation}\label{yositr}
dX_t^{n,m,x}
=
A_nX_t^{n,m,x}\,dt
+
B(X_t^{n,m,x})\,dW_t^{m,\mathbb Q},
\qquad
X_0^{n,m,x}=x.
\end{equation}
We refer to the market associated with $X^{n,m,x}$ as the $(n,m)$th
approximate market. A portfolio $(\phi^{n,m},\Xi^{n,m})$ in this market
is called \emph{admissible} if its value process
\begin{equation}
V_t^{n,m}
=
\phi_t^{n,m}X_t^{n,m,x}
+
\Xi_t^{n,m}B_n(t)
\end{equation}
satisfies
\begin{equation}
dV_t^{n,m}
=
\phi_t^{n,m}\,dX_t^{n,m,x}
+
\Xi_t^{n,m}\,dB_n(t),
\end{equation}
and $\{S_n(-t)V_t^{n,m}\}_{t\in[0,T]}$ is an $H$-valued $\mathbb Q$-martingale. This doubly indexed family of approximate markets leads to the following
weaker notion of completeness. We say that the market is
\emph{asymptotically approximately complete} if, for every
square-integrable contingent claim $\xi$, every $c>0$, and every
$N\geq1$, there exist $n,m\geq N$ and an
admissible portfolio
$(\phi^{n,m},\Xi^{n,m})$ in the $(n,m)$th approximate market such that
\begin{equation}
\left\lVert V_T^{n,m}-\xi\right\rVert_{L^2(\Omega;H)}
+
\left\lVert X_T^{n,m,x}-X_T^x\right\rVert_{L^2(\Omega;H)}
<c.
\end{equation}
We then obtain the following result for the market driven by the mild solution.
\begin{theorem}
Suppose that every approximate market \eqref{yositr} is complete. Then, for every
$\xi\in L^2(\Omega,\mathcal F_T,\mathbb Q;H)$,
there exists a family of admissible portfolios
$(\phi^{n,m},\Xi^{n,m})$ such that
\begin{equation}
\lim_{n,m\to\infty}
\left(
\left\|V_T^{n,m}-\xi\right\|_{L^2(\Omega;H)}
+
\left\|X_T^{n,m,x}-X_T^x\right\|_{L^2(\Omega;H)}
\right)=0.
\end{equation}
Thus, the associated market driven by \eqref{Qmild} is asymptotically approximately complete.
\end{theorem}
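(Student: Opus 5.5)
The plan is to separate the two error terms and handle them independently. Completeness of each approximate market lets us hedge the claim $\xi$ itself exactly, so the terminal value error can be made zero. The state error then reduces to a convergence statement for the doubly approximated SDE. Concretely, fix $\xi\in L^2(\Omega,\mathcal F_T,\mathbb Q;H)$. For each $(n,m)$, completeness of the $(n,m)$th market \eqref{yositr} means every square-integrable claim is attainable there. We first need to know that $\xi$ is a legitimate claim in that market. Since $\xi$ is $\mathcal F_T$-measurable for the full filtration, completeness is understood with respect to the filtration of that market. If that filtration is generated only by $W^{m,\mathbb Q}$, replace $\xi$ by $\xi^m:=\mathbb E^{\mathbb Q}[\xi\mid\mathcal F_T^m]$, exactly as in the approximate-completeness discussion preceding the statement. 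So we take an admissible $(\phi^{n,m},\Xi^{n,m})$ with $V_T^{n,m}=\xi^m$ (or $=\xi$ if completeness is meant with respect to the full filtration). Then $\|V_T^{n,m}-\xi\|_{L^2}=\|\xi^m-\xi\|_{L^2}$, which tends to $0$ as $m\to\infty$ by the martingale convergence theorem, uniformly in $n$, since $\mathcal F_T$ is the augmentation of $\sigma(W^{\mathbb Q}_s,s\le T)=\bigvee_m\mathcal F^m_T$.

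The second term needs $X_T^{n,m,x}\to X_T^x$ in $L^2(\Omega;H)$ as $n,m\to\infty$ jointly. Split it as
\begin{equation*}
X_T^{n,m,x}-X_T^x=\bigl(X_T^{n,m,x}-X_T^{n,x}\bigr)+\bigl(X_T^{n,x}-X_T^x\bigr),
\end{equation*}
where $X^{n,x}$ solves \eqref{Qyosi} with the full noise. The second bracket tends to $0$ by the standard convergence of Yosida approximations, already used in the proof of \Cref{thm:replication-yn}, and it is independent of $m$.

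For the first bracket, write the mild forms with $S_n$ and subtract. Using the Lipschitz property of $B$, $\sup_n\sup_{t\le T}\|S_n(t)\|\le C$, and the It\^o isometry, we get
\begin{equation*}
\mathbb E\|X_t^{n,m,x}-X_t^{n,x}\|^2\le 2C^2L^2\int_0^t\mathbb E\|X_s^{n,m,x}-X_s^{n,x}\|^2ds+2C^2\int_0^T\mathbb E\|B(X_s^{n,x})(I-P_m)\|_{L_2(U_0,H)}^2ds.
\end{equation*}
By Gronwall, the first bracket is then bounded by $C'\,\epsilon_{n,m}$, where $\epsilon_{n,m}:=\int_0^T\mathbb E\|B(X_s^{n,x})(I-P_m)\|_{L_2}^2ds$.

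The main obstacle is showing that $\epsilon_{n,m}\to0$ uniformly in $n$. Pointwise in $n$ it follows by dominated convergence, since $\|R(I-P_m)\|_{L_2}^2=\sum_{k>m}\|Re_k\|^2$. For uniformity, write
\begin{equation*}
\|B(X^{n,x}_s)(I-P_m)\|_{L_2}\le \|(B(X^{n,x}_s)-B(X^x_s))(I-P_m)\|_{L_2}+\|B(X^x_s)(I-P_m)\|_{L_2}.
\end{equation*}
The first term is at most $L\|X^{n,x}_s-X^x_s\|$, whose $L^2$ norm tends to $0$ as $n\to\infty$ uniformly in $m$. The second term tends to $0$ as $m\to\infty$ by dominated convergence and does not depend on $n$.

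Combining the three estimates, for every $c>0$ there is $N_0$ such that the sum of the two errors is less than $c$ for all $n,m\ge N_0$. This gives the stated double limit, and in particular asymptotic approximate completeness.
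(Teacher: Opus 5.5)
Your proposal is correct. For the terminal-value error it follows the paper's argument: hedge $\xi^m=\mathbb E^{\mathbb Q}[\xi\mid\mathcal F_T^m]$ exactly in the $(n,m)$th market and let $m\to\infty$ by martingale convergence. The alternative you allow, completeness with respect to the full filtration, cannot actually hold once $\dim U_0>m$. There $S_n(-T)V_T^{n,m}=V_0^{n,m}+\int_0^T S_n(-t)\phi_t^{n,m}B(X_t^{n,m,x})P_m\,dW_t^{\mathbb Q}$ cannot reproduce the $(I-P_m)$-part of a general claim's martingale representation, so hedging $\xi^m$ is the only meaningful option.

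For the state error you take a different route from the paper. The paper compares $X^{n,m,x}$ directly with $X^x$ in mild form, so three error sources appear at once:
\begin{itemize}
\item $(S_n(t)-S(t))x$,
\item $(S_n(t-s)-S(t-s))B(X_s^x)P_m$,
\item $S(t-s)B(X_s^x)(I-P_m)$.
\end{itemize}
It closes all of them with a single Gr\"onwall inequality, controlling the semigroup-difference term in Hilbert--Schmidt norm through \Cref{comlem,conversnl2}.

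You instead insert the full-noise Yosida process $X^{n,x}$ of \eqref{Qyosi}. Because $X^{n,m,x}$ and $X^{n,x}$ share the semigroup $S_n$, your Gr\"onwall estimate contains only the Lipschitz term and the truncation error. The remaining piece $X^{n,x}\to X^x$ is delegated to the standard Yosida convergence theorem, which amounts to the paper's computation with $P_m=I$. The extra cost is your uniformity step,
$\epsilon_{n,m}\le 2L^2\int_0^T\mathbb E\|X_s^{n,x}-X_s^x\|^2\,ds+2\int_0^T\mathbb E\|B(X_s^x)(I-P_m)\|_{L_2(U_0,H)}^2\,ds$.
This splits the error into an $n$-part and an $m$-part, and it needs Yosida convergence in $\sup_{t\in[0,T]}$ (or at least time-integrated) form rather than only at $t=T$; the standard result supplies this.

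Your version is more modular and avoids the compactness lemmas. The paper's version is self-contained and handles both approximations in one estimate. Both give the genuinely joint limit $n,m\to\infty$.
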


\begin{proof}
Let $\mathcal F_T^m$ denote the sigma-algebra generated by
$W^{m,\mathbb Q}$ up to time $T$, and set
$\xi^m=\mathbb E^{\mathbb Q}[\xi\mid\mathcal F_T^m]$.
By the completeness of each $(n,m)$th approximate market,
there exists an admissible portfolio satisfying
$V_T^{n,m}=\xi^m$. Consequently,
\begin{equation}
\left\|V_T^{n,m}-\xi\right\|_{L^2(\Omega;H)}
=
\left\|\xi^m-\xi\right\|_{L^2(\Omega;H)}
\longrightarrow0
\qquad\text{as }m\to\infty,
\end{equation}
by the martingale convergence theorem, since
$\mathcal F_T^m\uparrow\mathcal F_T$.

On the other hand, the mild formulations of \eqref{yositr}
and \eqref{Qmild} yield
\begin{align}
X_t^{n,m,x}-X_t^x
={}&(S_n(t)-S(t))x
+\int_0^t S_n(t-s)
\bigl(B(X_s^{n,m,x})-B(X_s^x)\bigr)P_m\,dW_s^{\mathbb Q}
\nonumber\\
&+\int_0^t
\bigl(S_n(t-s)-S(t-s)\bigr)B(X_s^x)P_m\,dW_s^{\mathbb Q}
-\int_0^t S(t-s)B(X_s^x)(I-P_m)\,dW_s^{\mathbb Q}.
\end{align}
By It\^o's isometry, the Lipschitz continuity of $B$,
and the uniform boundedness of the Yosida semigroups,
there exists $C_T>0$, independent of $n,m$, such that
\begin{equation}
\mathbb E^{\mathbb Q}
\left[\|X_t^{n,m,x}-X_t^x\|_H^2\right]
\leq
C_T\varepsilon_{n,m}
+
C_T\int_0^t
\mathbb E^{\mathbb Q}
\left[\|X_s^{n,m,x}-X_s^x\|_H^2\right]ds,
\end{equation}
where
\begin{align}
\varepsilon_{n,m}:={}&
\sup_{t\in[0,T]}
\|(S_n(t)-S(t))x\|_H^2
\nonumber\\
&+\int_0^T
\sup_{u\in[0,T]}
\mathbb E^{\mathbb Q}
\left[
\|(S_n(u)-S(u))B(X_s^x)\|_{L_2(U_0,H)}^2
\right]ds
\nonumber\\
&+\int_0^T
\mathbb E^{\mathbb Q}
\left[
\|B(X_s^x)(I-P_m)\|_{L_2(U_0,H)}^2
\right]ds.
\end{align}
By the uniform strong convergence of $S_n$ to $S$,
\Cref{comlem,conversnl2}, and dominated convergence,
we have $\varepsilon_{n,m}\to0$ as $n,m\to\infty$.
Thus, Gr\"onwall's inequality gives
\begin{equation}
\sup_{t\in[0,T]}
\mathbb E^{\mathbb Q}
\left[\|X_t^{n,m,x}-X_t^x\|_H^2\right]
\leq C_Te^{C_TT}\varepsilon_{n,m}
\longrightarrow0.
\end{equation}
Combining the two convergences yields
\begin{equation}
\lim_{n,m\to\infty}
\left(
\left\|V_T^{n,m}-\xi\right\|_{L^2(\Omega;H)}
+
\left\|X_T^{n,m,x}-X_T^x\right\|_{L^2(\Omega;H)}
\right)=0.
\end{equation}
\end{proof}
However, there does not appear to be an obvious necessary condition
for asymptotic approximate completeness. Finally, completeness is also related to uniqueness of the risk-neutral
measure.

\begin{theorem}
Suppose that the market is driven by \eqref{Qstrong}. If $U_0$ is
finite-dimensional, then the market is complete if and only if the
equivalent risk-neutral measure is unique.
\end{theorem}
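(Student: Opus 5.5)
The plan is to reduce both directions to the pointwise solvability criterion already available. By Theorem~\ref{thm:general_replication}, completeness is equivalent to this statement: for every $Z\in\mathcal Z$ arising from a claim, there is a progressively measurable $\phi(t)\in\mathcal L(H)$ with $\phi(t)B(X_t)=S(t)Z_t$ a.e. Since $S(t)$ is invertible and every $Z\in\mathcal Z$ arises from some claim (take $\xi=S(T)\int_0^T Z\,dW^{\mathbb Q}$), this is the factorization property of Proposition~\ref{lem:measurable_factorization}. Apply that proposition with $S=[0,T]\times\Omega$, the progressive $\sigma$-algebra, $\mu=dt\otimes\mathbb Q$, $U=U_0$, $V=H$, $R=B(X)$, and $Z$ replaced by $S(t)Z_t$. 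When $U_0$ is finite-dimensional, $\phi$ may be taken in $L_2(H,H)$ restricted to the range, or extended by zero on the orthogonal complement as in the proof of that proposition. We conclude that the market is complete if and only if $\ker B(X_t)=\{0\}$ for $dt\otimes\mathbb Q$-a.e.\ $(t,\omega)$.

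Next we characterize uniqueness of the equivalent risk-neutral measure in the same terms. Any measure $\mathbb Q'\sim\mathbb Q$ on $\mathcal F_T$ has a density martingale. By the martingale representation for the finite-dimensional Wiener process $W^{\mathbb Q}$, this density is $\mathcal E(-\int\langle\gamma,dW^{\mathbb Q}\rangle_{U_0})$ for some progressive $U_0$-valued $\gamma$, and under $\mathbb Q'$ the process $W^{\mathbb Q}_t+\int_0^t\gamma_s\,ds$ is a Wiener process by Girsanov. The dynamics then become $dX_t=(AX_t-B(X_t)\gamma_t)dt+B(X_t)\,dW'_t$. Hence $\mathbb Q'$ is risk-neutral, meaning $S(-t)X_t$ is a (local) $\mathbb Q'$-martingale, if and only if $B(X_t)\gamma_t=0$ a.e.; this is read off from the drift of $S(-t)X_t$ and the uniqueness of semimartingale decompositions.

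Both directions now follow. If $\ker B(X_t)=\{0\}$ a.e., then $\gamma=0$, so $\mathbb Q'=\mathbb Q$ and the measure is unique. Conversely, suppose $\mu(D^c)>0$ with $D$ as in the proof of Proposition~\ref{lem:measurable_factorization}. Take the measurable projection $P(s)$ onto $\ker B(X_t)$ constructed there and set $\gamma_t=P(t,\omega)e_j/(1+\|P(t,\omega)e_j\|)$, choosing $j$ so that $\gamma\neq0$ on a set of positive measure. Then $\gamma$ is bounded and progressive, so Novikov's condition holds. It produces a risk-neutral $\mathbb Q'\ne\mathbb Q$, because the density is not identically $1$ when $\gamma\not\equiv0$ in $L^2$. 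So non-completeness implies non-uniqueness.

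The main obstacle is the technical step of passing from the Girsanov kernel condition to the martingale property. We must ensure that "risk-neutral" is taken in the same (true or local) martingale sense as in the paper, and that the constructed $\gamma$ is progressively measurable; the latter is inherited from the measurability of $P$ established in Proposition~\ref{lem:measurable_factorization}. The integrability needed for $S(-t)X_t$ to be a true martingale under $\mathbb Q'$ with bounded $\gamma$ should follow from standard moment estimates.
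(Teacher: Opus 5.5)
Your proposal is correct and follows essentially the same route as the paper. Both directions pass through the equivalence, via Theorem~\ref{thm:general_replication} and Proposition~\ref{lem:measurable_factorization}, between completeness and a.e.\ injectivity of $B(X_t)$, and non-injectivity is turned into a second risk-neutral measure by a bounded Girsanov shift along the measurable kernel projection $P(s)e_j$. Your additional use of martingale representation, showing that every equivalent measure comes from a kernel $\gamma$ with $B(X_t)\gamma_t=0$, makes explicit what the paper assumes when it speaks of ``any Girsanov shift associated with a risk-neutral measure.''
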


\begin{proof}
By Proposition~\ref{lem:measurable_factorization}, market completeness implies
that $B(X_t)$ is injective almost everywhere. Since any Girsanov shift
$\theta$ associated with a risk-neutral measure satisfies $B(X_t)\theta_t=F(X_t)$ almost everywhere, the injectivity of $B(X_t)$ implies that $\theta$ is
unique. Hence the equivalent risk-neutral measure is unique.

Conversely, suppose that the equivalent risk-neutral measure is unique.
If $B(X_t)$ were not injective on a set of positive
$dt\otimes d\mathbb Q$-measure, select $Pe_j$ as in \eqref{Pej} such
that $Pe_j$ is nonzero on a set of positive
$dt\otimes d\mathbb Q$-measure. Since $Pe_j\in\ker B(X_t)$ and
$\lVert Pe_j\rVert_{U_0}\leq1$, the stochastic exponential
\begin{equation}
\exp\left(
\int_0^T
\left\langle Pe_j,dW_s^{\mathbb Q}\right\rangle_{U_0}
-
\frac{1}{2}
\int_0^T
\lVert Pe_j\rVert_{U_0}^2\,ds
\right)
\end{equation}
defines an equivalent probability measure $\widetilde{\mathbb Q}$.
Under $\widetilde{\mathbb Q}$, the corresponding Girsanov shift does not
change the drift of $X$, since $B(X_t)Pe_j=0$ almost everywhere. Hence
$\widetilde{\mathbb Q}$ is also risk-neutral. Moreover,
$\widetilde{\mathbb Q}\neq\mathbb Q$, contradicting uniqueness.
Therefore, $B(X_t)$ is injective almost everywhere, and
Proposition~\ref{lem:measurable_factorization} implies that the market is
complete.
\end{proof}

We conclude this section with the following discussion. The finite-dimensionality of \(U_0\) is essential for the converse. Indeed, the infinite-dimensional model considered in the next section provides a counterexample in which the equivalent risk-neutral measure is unique while the market is not complete in the sense defined above.

An important distinction between the present infinite-dimensional
framework and the classical setting arises from the noncommutativity
of operators. In the classical framework, the discounting factor
$e^{-rt}$ is scalar and therefore commutes with the portfolio
operators. Consequently, under an equivalent martingale measure,
the discounted value of an admissible self-financing portfolio is
a martingale, provided the appropriate integrability conditions
are satisfied. In contrast, in the present setting, discounting
is given by the operator $S(-t)$, which need not commute with
the portfolio operators. As a result, the differential
self-financing condition and integrability alone do not generally imply that
$S(-t)V_t$ is an $H$-valued $\mathbb Q$-martingale.

Indeed, without the additional martingale condition in
Definition~\ref{selfstrong}, portfolios satisfying only the differential self-financing condition
in Definition~\ref{selfstrong} may generate arbitrage opportunities
in the sense of \eqref{arbi}, even in the presence of an equivalent
risk-neutral measure.
We therefore restrict the trading strategies to those admissible under Definition~\ref{selfstrong}.
Since this definition is formulated under a given equivalent
risk-neutral measure $\mathbb Q$, the existence of such a measure
is already assumed in our framework. This also explains why
Proposition~\ref{prop:no_arbitrage} establishes only the implication from the existence
of an equivalent risk-neutral measure to the absence of arbitrage,
rather than the converse implication. Nevertheless, the option price given by \eqref{prifo1} remains an
arbitrage-free price in the strong setting. Suppose the claim with
payoff $G(X_T)$ is added to the market at price $V(t,X_t)$, and its
hedging portfolio $(\phi,\theta)$ is admissible in the sense of
Definition~\ref{selfstrong}. Then $S(-t)V(t,X_t)$ is an $H$-valued
$\mathbb Q$-martingale, so by Proposition~\ref{prop:no_arbitrage}
the extended market, consisting of the original assets together with
the claim, admits no $\preccurlyeq$-arbitrage. In the mild setting the
same conclusion holds in each approximate market, where it follows
from Theorem~\ref{thm:replication-yn} together with the martingale property
of $S_n(-t)V_n(t,X^n_t)$. Clearly, this issue does not arise in the special case where $A=rI$, since $S(t)=e^{rt}I$ commutes with every bounded linear operator on $H$.

Even when $A=rI$, the converse of \Cref{prop:no_arbitrage} is not guaranteed under the present assumptions for either $H$-valued or real-valued portfolios. Here, we only consider the real-valued case. Indeed, the absence of arbitrage requires $F(X_t)\in\overline{\operatorname{Ran}B(X_t)}$ for almost every $t\in[0,T]$, almost surely. Otherwise, letting $P_t$ denote the orthogonal projection onto $\ker B(X_t)^*$, we may construct a valid self-financing portfolio with $\phi(t)=P_tF(X_t)$, which yields an arbitrage opportunity. However, even if $F(X_t)\in\overline{\operatorname{Ran}B(X_t)}$, we cannot necessarily guarantee the existence of a valid process $\theta_t$ satisfying $B(X_t)\theta_t=F(X_t)$, since the range of $B(X_t)$ need not be closed in infinite-dimensional spaces. This issue is again closely related to the dimension of the driving Wiener process.

We also note that the strategies considered in this section are restricted to $L(H)$ and $H^*$. Enlarging the space of admissible strategies, for instance, by allowing unbounded strategies, may lead to different results.

\section{Markets Driven by Geometric Brownian Motions in Hilbert Spaces}\label{GBMsec}
In this section, we study a special case of particular interest, which
can be viewed as an infinite-dimensional extension of the classical
Black--Scholes model and as a market consisting of infinitely many
one-dimensional risky assets. In addition, the existence of the risk-neutral measure in
\Cref{rnm} is difficult to verify in general. In the present setting,
we provide a more tractable sufficient condition for its existence.

\subsection{General bounded sequence $r_i$}\label{generi}
We consider the following stochastic evolution equation in a real
separable Hilbert space $H$:
\begin{equation}\label{gmbh}
X_t
=
x+\int_0^t \mathcal M X_r\,dr
+\int_0^t \Sigma X_r\,dW_r^{\mathbb Q},
\end{equation}
where $\mathcal M\in\mathcal L(H)$ is diagonal with respect to an
orthonormal basis $\{e_n\}_{n\geq1}$ of $H$, with
\begin{equation}
\mathcal M x
=
\sum_{n=1}^{\infty}
\mu_n\langle x,e_n\rangle e_n,
\end{equation}
for some bounded sequence $(\mu_n)\subset\mathbb R$, and
$\Sigma\in\mathcal L\bigl(H,\mathcal L_2(H_Q;H)\bigr)$ is defined by
\begin{equation}
(\Sigma x)(\cdot)
=
\sum_{n=1}^{\infty}
\langle x,e_n\rangle
\langle \cdot,e_n\rangle e_n.
\end{equation}
Since $\mathcal M$ is bounded, \eqref{gmbh} admits a unique strong
solution. Writing $X_t^i:=\langle X_t,e_i\rangle$ and
$x^i:=\langle x,e_i\rangle$, taking the inner product of
\eqref{gmbh} with $e_i$ gives
\begin{equation}
X_t^i
=
x^i+\int_0^t\mu_iX_r^i\,dr
+\int_0^t\sqrt{\lambda_i}X_r^i\,d\beta_r^i.
\end{equation}
Indeed, by the definitions of $\mathcal M$ and $\Sigma$,
$\langle\mathcal M X_r,e_i\rangle=\mu_iX_r^i$ and
\begin{equation}
\langle(\Sigma X_r)h,e_i\rangle
=
X_r^i\langle h,e_i\rangle,
\qquad h\in H_Q.
\end{equation}
Thus, each coordinate process $X^i$ is a one-dimensional geometric
Brownian motion and is given explicitly by
\begin{equation}
X_t^i
=
x^i
\exp\left(
\left(\mu_i-\frac{\lambda_i}{2}\right)t
+\sqrt{\lambda_i}\,\beta_t^i
\right).
\end{equation}
Consequently,
\begin{equation}\label{eq:explicit-solution}
X_t
=
\sum_{i=1}^{\infty}
x^i
\exp\left(
\left(\mu_i-\frac{\lambda_i}{2}\right)t
+\sqrt{\lambda_i}\,\beta_t^i
\right)e_i,
\end{equation}
where, for every fixed $t\geq0$, the series converges to the unique
strong solution of \eqref{gmbh} in $H$, $\mathbb Q$-almost surely, and
in $L^2(\Omega;H)$. We further assume that the accumulation operator of the risk-neutral
market is generated by a diagonal operator $A\in\mathcal L(H)$ with
respect to the same orthonormal basis $\{e_i\}_{i\geq1}$, given by
\begin{equation}
Ax
=
\sum_{i=1}^{\infty}
r_i\langle x,e_i\rangle e_i,
\end{equation}
where $(r_i)_{i\geq1}$ is a bounded sequence of strictly positive
numbers. Consequently, the accumulation and discount operators are,
respectively,
\begin{equation}
S(t)x
=
e^{tA}x
=
\sum_{i=1}^{\infty}
e^{r_i t}\langle x,e_i\rangle e_i
\end{equation}
and
\begin{equation}
S(-t)x
=
e^{-tA}x
=
\sum_{i=1}^{\infty}
e^{-r_i t}\langle x,e_i\rangle e_i.
\end{equation}
\begin{theorem}\label{thm:rnm_gbm}
Suppose that
\begin{equation}\label{eq:rnm_condition}
\sum_{i=1}^{\infty}
\frac{(r_i-\mu_i)^2}{\lambda_i}
<\infty.
\end{equation}
Then there exists an equivalent probability measure
$\mathbb Q\sim\mathbb P$ under which
\begin{equation}
X_t
=
x+\int_0^t AX_s\,ds
+\int_0^t\Sigma X_s\,dW_s^{\mathbb Q}.
\end{equation}
Consequently, $\{S(-t)X_t\}_{t\in[0,T]}$ is an $H$-valued
$\mathbb Q$-martingale, and hence $\mathbb Q$ is a risk-neutral
measure. Moreover, if $\langle x,e_i\rangle\neq0, i\geq1$, then the risk-neutral measure is unique\footnote{Here, uniqueness is understood within the class of
equivalent risk-neutral measures induced by the Girsanov shifts
considered in this paper.}.
\end{theorem}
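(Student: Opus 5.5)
Under $\mathbb P$ the dynamics \eqref{gmbh} hold with drift $\mathcal M$ and a $Q$-Wiener process $W=\sum_i\sqrt{\lambda_i}\beta^i e_i$ (taking $U=H$ and $f_i=e_i$). The plan is to build $\mathbb Q$ by Girsanov with a \emph{deterministic, constant} shift $\theta\in U_0=H_Q$ chosen so that $\Sigma X_t\,\theta=(A-\mathcal M)X_t$. Coordinatewise this reads $X_t^i\langle\theta,e_i\rangle=(r_i-\mu_i)X_t^i$, so I set $\theta=\sum_i (r_i-\mu_i)e_i$. Its $U_0$-norm is
\[
\|\theta\|_{U_0}^2=\sum_i \frac{\langle\theta,e_i\rangle^2}{\lambda_i}=\sum_i\frac{(r_i-\mu_i)^2}{\lambda_i},
\]
which is finite exactly by \eqref{eq:rnm_condition}. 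Because $\theta$ is deterministic and lies in $U_0$, Novikov's condition is trivial. Hence $\exp\bigl(-\langle\theta,W_T\rangle_{U_0}-\tfrac T2\|\theta\|_{U_0}^2\bigr)$ is a true martingale density, and $W^{\mathbb Q}_t=W_t+t\theta$ is a $Q$-Wiener process under $\mathbb Q\sim\mathbb P$ (by \cite[Theorem~10.14]{da2014stochastic}). Equivalently, each $\beta^i$ is shifted by the drift $(r_i-\mu_i)/\sqrt{\lambda_i}$, and these drifts are square-summable.

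Next I substitute into \eqref{gmbh}: $\int_0^t\Sigma X_s\,dW_s=\int_0^t\Sigma X_s\,dW_s^{\mathbb Q}-\int_0^t\Sigma X_s\theta\,ds$. Here $\Sigma X_s\theta=\sum_i X_s^i(r_i-\mu_i)e_i=(A-\mathcal M)X_s$, and this identity holds in $H$ because $(r_i),(\mu_i)$ are bounded. This yields the risk-neutral dynamics. For the martingale claim I apply It\^o's formula with the bounded operator $S(-t)=e^{-tA}$. This gives $S(-t)X_t=x+\int_0^tS(-s)\Sigma X_s\,dW_s^{\mathbb Q}$, and this integral is a square-integrable martingale because
\[
\|S(-s)\Sigma X_s\|_{L_2(U_0,H)}^2\le C\sum_i\lambda_i (X_s^i)^2\le C\,\mathrm{tr}Q\,\|X_s\|_H^2,
\]
while $\sup_{s\le T}\mathbb E^{\mathbb Q}\|X_s\|^2<\infty$ by standard moment estimates for the linear equation, or directly from \eqref{eq:explicit-solution}.

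For uniqueness, I take any Girsanov shift $\tilde\theta$ (progressively measurable, $U_0$-valued) producing a risk-neutral measure. Such a shift must satisfy $\Sigma X_t\tilde\theta_t=(A-\mathcal M)X_t$ a.e. In coordinates this is $X_t^i\langle\tilde\theta_t,e_i\rangle=(r_i-\mu_i)X_t^i$. When $x^i\neq0$, the explicit formula shows $X_t^i\neq0$ for all $t$ almost surely, since it is $x^i$ times an exponential. Hence $\langle\tilde\theta_t,e_i\rangle=r_i-\mu_i$ for every $i$, so $\tilde\theta_t=\theta$ and the density, and therefore $\mathbb Q$, coincides. Equivalently, $\Sigma X_t$ is injective on $U_0$ under this hypothesis.

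The main obstacle is the bookkeeping of the $U_0$ versus $H$ norms and the identification $\langle\cdot,e_i\rangle$ on $H_Q$: one must check that $\theta\in U_0$ is precisely \eqref{eq:rnm_condition}, and that the coordinate drift shifts $(r_i-\mu_i)/\sqrt{\lambda_i}$ of the $\beta^i$ are consistent with the Girsanov formula. The rest is routine.
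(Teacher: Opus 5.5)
Your route is the paper's: a constant, deterministic Girsanov shift $\theta\in U_0$ whose $U_0$-norm is exactly the sum in \eqref{eq:rnm_condition}, then It\^o's formula for $S(-t)X_t$, then uniqueness from $X_t^i\neq0$. However, the sign of your shift is wrong, and as written the key step ``this yields the risk-neutral dynamics'' fails. You take $d\mathbb Q/d\mathbb P=\exp\bigl(-\langle\theta,W_T\rangle_{U_0}-\tfrac T2\|\theta\|_{U_0}^2\bigr)$, $W_t^{\mathbb Q}=W_t+t\theta$ and $\Sigma X_s\theta=(A-\mathcal M)X_s$. Your own substitution then gives
\begin{equation*}
X_t=x+\int_0^t\bigl(\mathcal M X_s-(A-\mathcal M)X_s\bigr)\,ds+\int_0^t\Sigma X_s\,dW_s^{\mathbb Q}
=x+\int_0^t(2\mathcal M-A)X_s\,ds+\int_0^t\Sigma X_s\,dW_s^{\mathbb Q}.
\end{equation*}
So each coordinate has $\mathbb Q$-drift $(2\mu_i-r_i)X_t^i$ rather than $r_iX_t^i$. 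Your coordinate description, $\beta^i$ shifted by $+(r_i-\mu_i)/\sqrt{\lambda_i}$, gives the same wrong drift. The condition you need is the paper's $F(X_t)=B(X_t)\theta_t$ with $F=\mathcal M-A$, that is $(\Sigma x)\theta=(\mathcal M-A)x$, so $\theta=\sum_i(\mu_i-r_i)e_i$. Alternatively, keep your $\theta$ but use $W_t^{\mathbb Q}=W_t-t\theta$ with density $\exp\bigl(\langle\theta,W_T\rangle_{U_0}-\tfrac T2\|\theta\|_{U_0}^2\bigr)$. Since \eqref{eq:rnm_condition} involves only squares, your $U_0$-norm computation and the Novikov remark are unaffected.

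After this correction the rest goes through as in the paper. Your Hilbert--Schmidt bound $\|S(-s)\Sigma X_s\|_{L_2(U_0,H)}^2\le \sup_i\lambda_i\,\|X_s\|_H^2$ (using $r_i>0$), together with the second-moment estimate, makes $\int_0^tS(-s)\Sigma X_s\,dW_s^{\mathbb Q}$ a genuine square-integrable martingale. That is a detail the paper leaves implicit. In the uniqueness part the relation must likewise read $X_t^i\langle\tilde\theta_t,e_i\rangle=(\mu_i-r_i)X_t^i$. Also, rather than asserting that every risk-neutral shift satisfies it, derive it as the paper does. Under $\tilde{\mathbb Q}$ the process $S(-t)X_t$ has drift $S(-t)\bigl[\mathcal M X_t-AX_t-(\Sigma X_t)\tilde\theta_t\bigr]$. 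This must vanish $dt\otimes d\tilde{\mathbb Q}$-a.e., hence $dt\otimes d\mathbb P$-a.e.\ by equivalence. With $X_t^i\neq0$ this forces $\tilde\theta_t=\theta$.
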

\begin{proof}
We first prove existence. Define $\theta
=
\sum_{i=1}^{\infty}
\frac{\mu_i-r_i}{\sqrt{\lambda_i}}
\bigl(\sqrt{\lambda_i}e_i\bigr)$. Since $\{\sqrt{\lambda_i}e_i\}_{i\geq1}$ is an orthonormal basis of
$H_0$, \eqref{eq:rnm_condition} yields
\begin{equation}
\lVert\theta\rVert_{H_0}^2
=
\sum_{i=1}^{\infty}
\frac{(\mu_i-r_i)^2}{\lambda_i}
<\infty.
\end{equation}
Hence, $\theta\in H_0$. Moreover, $
(\Sigma x)\theta
=
(\mathcal M-A)x, x\in H$.
By the Girsanov theorem, there exists an equivalent probability
measure $\mathbb Q\sim\mathbb P$ such that $W_t^{\mathbb Q}
=
W_t^{\mathbb P}+t\theta$ is a $Q$-Wiener process under $\mathbb Q$. Hence,
\begin{align}
dX_t
={}&
\left[
\mathcal M X_t-(\Sigma X_t)\theta
\right]dt
+
\Sigma X_t\,dW_t^{\mathbb Q}
\nonumber\\
={}&
AX_t\,dt+\Sigma X_t\,dW_t^{\mathbb Q}.
\end{align}
Consequently,
\begin{equation}
d\bigl(S(-t)X_t\bigr)
=
S(-t)\Sigma X_t\,dW_t^{\mathbb Q},
\end{equation}
and therefore $\{S(-t)X_t\}_{t\in[0,T]}$ is an $H$-valued
$\mathbb Q$-martingale. Hence, $\mathbb Q$ is a risk-neutral measure.

We now prove uniqueness. Suppose that $\widetilde{\mathbb Q}\sim
\mathbb P$ is another risk-neutral measure, and let
$\widetilde\theta$ be its corresponding Girsanov shift. Then
\begin{equation}
dX_t
=
\left[
\mathcal M X_t-(\Sigma X_t)\widetilde\theta_t
\right]dt
+
\Sigma X_t\,dW_t^{\widetilde{\mathbb Q}}.
\end{equation}
Consequently,
\begin{equation}
d\bigl(S(-t)X_t\bigr)
=S(-t)\left[\mathcal M X_t-AX_t-(\Sigma X_t)\widetilde\theta_t\right]dt
+S(-t)\Sigma X_t\,dW_t^{\widetilde{\mathbb Q}}.
\end{equation}
Since $\{S(-t)X_t\}_{t\in[0,T]}$ is a
$\widetilde{\mathbb Q}$-martingale, its finite-variation part must
vanish. Hence,
\begin{equation}
\mathcal M X_t-AX_t-(\Sigma X_t)\widetilde\theta_t=0
\qquad
dt\otimes d\mathbb P\text{-a.e.},
\end{equation}
where we have used $\widetilde{\mathbb Q}\sim\mathbb P$. Therefore,
\begin{equation}
X_t^i
\left(
\mu_i-r_i-\langle\widetilde\theta_t,e_i\rangle_H
\right)
=0,
\qquad i\geq1,
\end{equation}
for $dt\otimes d\mathbb P$-almost every $(t,\omega)$.

If $\langle x,e_i\rangle\neq0$ for every $i\geq1$, the explicit
representation of $X^i$ implies that $X_t^i\neq0$ for every $t$,
almost surely. Hence,
\begin{equation}
\frac{\langle\widetilde\theta_t,e_i\rangle_H}{\sqrt{\lambda_i}}
=
\left\langle
\widetilde\theta_t,\sqrt{\lambda_i}e_i
\right\rangle_{H_0}
=
\frac{\mu_i-r_i}{\sqrt{\lambda_i}},
\qquad i\geq1,
\end{equation}
for $dt\otimes d\mathbb P$-almost every $(t,\omega)$. Therefore $\widetilde\theta_t=\theta_t, dt\otimes d\mathbb P$-almost every.
\end{proof}
In the special case $A=rI$, that is, $r_i=r$ for every $i\geq1$,
condition \eqref{eq:rnm_condition} reduces to
\begin{equation}\label{risknr}
\sum_{i=1}^{\infty}
\frac{(r-\mu_i)^2}{\lambda_i}
<\infty.
\end{equation}
For example, the condition above is satisfied by taking
\begin{equation}
\lambda_i=\frac{1}{i^2},
\qquad
\mu_i=r+\frac{1}{i^2},
\qquad i\geq1.
\end{equation}
Indeed, $Q$ is trace class and
\begin{equation}
\sum_{i=1}^{\infty}
\frac{(r-\mu_i)^2}{\lambda_i}
=
\sum_{i=1}^{\infty}\frac{1}{i^2}
<\infty.
\end{equation}
Condition \eqref{risknr} is more restrictive than
\eqref{eq:rnm_condition}, where the sequence $(r_i)_{i\geq1}$
is allowed to vary. Indeed, for any given sequences
$(\mu_i)_{i\geq1}$ and $(\lambda_i)_{i\geq1}$ with
$\lambda_i>0$, one can choose $ r_i=\mu_i+\sqrt{\lambda_i}\,a_i,
\quad i\geq1$, where $(a_i)_{i\geq1}\in\ell^2$, so that
\eqref{eq:rnm_condition} is satisfied. In contrast,
there may be no single constant $r$ satisfying
\eqref{risknr}. For example, consider
\begin{equation}
\lambda_i=\frac{1}{i^2},
\qquad
\mu_i=\frac{1}{i},
\qquad i\geq1.
\end{equation}
Then $Q$ is trace class, but
\begin{equation}
\sum_{i=1}^{\infty}
\frac{(r-\mu_i)^2}{\lambda_i}
=
\sum_{i=1}^{\infty}(ri-1)^2
=\infty
\end{equation}
for every $r\in\mathbb R$.

We next derive the Black--Scholes equation for this particular case. By taking advantage of the explicit representation of \(X_t\) in \eqref{eq:explicit-solution}, we can relax the smoothness assumptions on the payoff function \(G\) imposed in \Cref{ass:pricing_regular}. To this end, we first establish the following differentiation-under-the-integral result.
\begin{proposition}\label{diunin}
Let $(S,\mathcal A,\mu)$ be a finite measure space, and let $X$ and $Y$
be separable Banach spaces. Let $\mathcal{O}$ be an open subset of $X$, and let
\begin{equation}
f:\mathcal{O}\times S\to Y
\end{equation}
be a function. Assume:
\begin{enumerate}[(i)]
\item For every $x\in \mathcal{O}$, the map $s\mapsto f(x,s)$ is measurable and
Bochner integrable over $S$.

\item For every $x\in \mathcal{O}$, there exists a measurable set
$B_x\in\mathcal A$ of full measure such that for every $s\in B_x$, the
map $y\mapsto f(y,s)$ is twice Fr\'echet differentiable at $x$, with
\begin{equation}
f'(x,s)\in\mathcal L(X,Y),
\qquad
f''(x,s)\in\mathcal L\bigl(X,\mathcal L(X,Y)\bigr).
\end{equation}

\item For every $x\in \mathcal{O}$, there exist
$\ell_x^1,\ell_x^2\in L^1(S,\mu)$ such that
\begin{equation}
\lVert f'(x,s)\rVert_{\mathcal L(X,Y)}
\leq \ell_x^1(s),
\end{equation}
and
\begin{equation}
\lVert f''(x,s)\rVert_{\mathcal L(X,\mathcal L(X,Y))}
\leq \ell_x^2(s),
\end{equation}
whenever the corresponding derivatives are defined.

\item For every $x\in \mathcal{O}$ and every nonzero sequence $\{h_n\}$ in $X$
with $x+h_n\in \mathcal{O}, h_n\to0$, there exist
$\rho^1_{x,\{h_n\}},\rho^2_{x,\{h_n\}}\in L^1(S,\mu)$ such that
\begin{equation}
\frac{
\left\lVert
f(x+h_n,s)-f(x,s)-f'(x,s)h_n
\right\rVert_Y
}{
\left\lVert h_n\right\rVert_X
}
\leq
\rho^1_{x,\{h_n\}}(s),
\qquad s\in B_x,
\end{equation}
and
\begin{equation}
\frac{
\left\lVert
f'(x+h_n,s)-f'(x,s)-f''(x,s)h_n
\right\rVert_{\mathcal L(X,Y)}
}{
\left\lVert h_n\right\rVert_X
}
\leq
\rho^2_{x,\{h_n\}}(s),
\qquad
s\in B_x\cap B_{x+h_n},
\end{equation}
for every $n\geq1$.
\end{enumerate}

Then:
\begin{itemize}
\item For every $x\in \mathcal{O}$, there exists an equivalence class of
$\sigma_{SO}(X,Y)$-measurable functions $\bar f_x(s)$ such that
$\bar f_x(s)z$ is integrable over $S$ for all $z\in X$. Moreover, the
mapping $F:\mathcal{O}\to Y$, defined by
\begin{equation}
F(x)=\int_S f(x,s)\,\mu(ds),
\end{equation}
is Fr\'echet differentiable on $\mathcal{O}$ with
$F'(x)\in\mathcal L(X,Y)$ defined by
\begin{equation}
F'(x)z
=
\int_S\bar f_x(s)z\,\mu(ds),
\qquad z\in X.
\end{equation}

\item For every $x\in \mathcal{O}$, there exists an equivalence class of functions
$\bar{\bar f}_x:S\to\mathcal L\bigl(X,\mathcal L(X,Y)\bigr)$ such that,
for all $z_1,z_2\in X$, $\bar{\bar f}_x(s)z_1$ is
$\sigma_{SO}(X,Y)$-measurable and
$(\bar{\bar f}_x(s)z_1)z_2$ is integrable over $S$. Moreover, the
mapping $F$ is twice Fr\'echet differentiable on $\mathcal{O}$ with
$F''(x)\in\mathcal L\bigl(X,\mathcal L(X,Y)\bigr)$ defined by
\begin{equation}
F''(x)z_1
=
\int_S\bar{\bar f}_x(s)z_1\,\mu(ds),
\qquad z_1\in X,
\end{equation}
and
\begin{equation}
(F''(x)z_1)z_2
=
\int_S(\bar{\bar f}_x(s)z_1)z_2\,\mu(ds),
\qquad z_1,z_2\in X.
\end{equation}
\end{itemize}
\end{proposition}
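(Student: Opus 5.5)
We will prove the first-order statement by a dominated-convergence argument along sequences, and then repeat it one level up for the second derivative. Fix $x\in\mathcal O$. The candidate derivative is $s\mapsto f'(x,s)$, defined on the full-measure set $B_x$; we set $\bar f_x(s)=f'(x,s)$ for $s\in B_x$ and $\bar f_x(s)=0$ otherwise. Since $\mathcal L(X,Y)$ is not separable, we do not claim Bochner measurability of $\bar f_x$. Instead we show strong measurability: for each $z\in X$, $s\mapsto \bar f_x(s)z$ is measurable. This holds because $f'(x,s)z=\lim_{k\to\infty}k\bigl(f(x+z/k,s)-f(x,s)\bigr)$ on $B_x$, which is a pointwise limit of measurable $Y$-valued maps by (i). Integrability of $\bar f_x(s)z$ follows from $\|\bar f_x(s)z\|_Y\le \ell_x^1(s)\|z\|_X$ by (iii). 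We then define $F'(x)z:=\int_S\bar f_x(s)z\,\mu(ds)$. This is linear in $z$ and satisfies $\|F'(x)z\|_Y\le\|\ell_x^1\|_{L^1}\|z\|_X$, so $F'(x)\in\mathcal L(X,Y)$. The equivalence-class statement records that changing $\bar f_x$ on a null set does not affect anything.

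For Fréchet differentiability we use the sequential characterisation. It suffices to show that for every nonzero sequence $h_n\to0$ with $x+h_n\in\mathcal O$,
\[
\frac{\|F(x+h_n)-F(x)-F'(x)h_n\|_Y}{\|h_n\|_X}\le\int_S \frac{\|f(x+h_n,s)-f(x,s)-f'(x,s)h_n\|_Y}{\|h_n\|_X}\,\mu(ds)\longrightarrow 0.
\]
The integrand converges to $0$ for every $s\in B_x$ by the pointwise Fréchet differentiability in (ii). It is dominated by $\rho^1_{x,\{h_n\}}\in L^1$ by (iv). Dominated convergence therefore gives the limit. Measurability of the integrand comes from the strong measurability just established together with measurability of $f(x+h_n,\cdot)$. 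Since every sequence works, $F$ is Fréchet differentiable at $x$.

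For the second derivative we repeat the argument with $F'$ in place of $F$, viewed in the sense that $F'(y)z=\int_S\bar f_y(s)z\,d\mu$. We set $\bar{\bar f}_x(s)=f''(x,s)$ on $B_x$ and $0$ off it. Measurability of $(\bar{\bar f}_x(s)z_1)z_2$ follows from the limit $\lim_k k\bigl(f'(x+z_1/k,s)-f'(x,s)\bigr)z_2$, taken on $B_x\cap\bigcap_k B_{x+z_1/k}$, which is still of full measure as a countable intersection. Integrability and boundedness of $F''(x)$ come from $\ell_x^2$ in (iii). We define $(F''(x)z_1)z_2$ by the integral in the statement and check that it gives a bounded bilinear form, hence an element of $\mathcal L(X,\mathcal L(X,Y))$.

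The main obstacle is the remainder estimate. It must hold in the operator norm of $\mathcal L(X,Y)$, not merely for each $z$, and the integral of an only strongly measurable operator-valued function has to be handled carefully. We handle this by writing, for $\|z\|\le1$,
\[
\|(F'(x+h_n)-F'(x)-F''(x)h_n)z\|_Y\le\int_S\|f'(x+h_n,s)-f'(x,s)-f''(x,s)h_n\|_{\mathcal L(X,Y)}\,\mu(ds).
\]
This bound is uniform in $z$, and the right-hand side is divided by $\|h_n\|$. The integrand, now a nonnegative scalar function, is dominated by $\rho^2_{x,\{h_n\}}$ on $B_x\cap B_{x+h_n}$ and tends to $0$ on $B_x$ by (ii). If scalar measurability of this norm is in doubt, we will replace the integral by an upper integral; dominated convergence still applies to upper integrals bounded by an $L^1$ function that tends to $0$ pointwise. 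Taking the supremum over $\|z\|\le1$ and applying the sequential criterion again yields twice Fréchet differentiability.
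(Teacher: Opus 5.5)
\textbf{Verdict: one step fails, and it is easily fixed.} Your overall route is the paper's: the version $\bar f_x=f'(x,\cdot)\mathbb I_{B_x}$, strong measurability via difference quotients, and the sequential criterion with dominated convergence against $\rho^1$ and $\rho^2$. The failing step is the one you yourself call the main obstacle, the operator-norm remainder for $F'$.

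You leave open whether
\[
s\mapsto\bigl\|\bar f_{x+h_n}(s)-\bar f_x(s)-\bar{\bar f}_x(s)h_n\bigr\|_{\mathcal L(X,Y)}
\]
is measurable. You then fall back on upper integrals and claim that dominated convergence holds for them. It does not. Take Lebesgue measure on $[0,1)$, partition $[0,1)$ into the countably many rational translates (mod $1$) $V_{q_1},V_{q_2},\dots$ of a Vitali set, and set $A_n=\bigcup_{k\ge n}V_{q_k}$. Then $\mathbb I_{A_n}\le 1$ and $\mathbb I_{A_n}\to0$ everywhere. Yet every $A_n$ has outer measure $1$, because its complement is a finite union of Vitali translates and so has inner measure $0$. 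Hence the upper integrals stay equal to $1$. As written, then, your argument does not establish the second-order Fr\'echet estimate.

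\textbf{The fix.} The missing idea is the one the paper uses, and it makes the fallback unnecessary. Since $X$ is separable, take any $T:S\to\mathcal L(X,Y)$ with $s\mapsto T(s)z$ measurable for every $z$. Then $\|T(s)\|_{\mathcal L(X,Y)}=\sup_k\|T(s)z_k\|_Y$ for a countable dense subset $\{z_k\}$ of the unit ball, so $s\mapsto\|T(s)\|$ is measurable. Apply this to $T=\bar f_{x+h_n}-\bar f_x-\bar{\bar f}_x(\cdot)h_n$. This gives an honest integrand, and ordinary dominated convergence on $B_x\cap\bigcap_n B_{x+h_n}$ finishes the proof, since there both the pointwise convergence and the $\rho^2$ bound hold.

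\textbf{A smaller point.} Your measurability argument for $\bar{\bar f}_x(\cdot)z_1$ only works off the null set $B_x\setminus\bigcap_k B_{x+z_1/k}$, which depends on $z_1$. There are uncountably many such sets, so their union need not be null. You therefore have not produced a single representative that is $\sigma_{SO}(X,Y)$-measurable for every $z_1$, as the statement asks. The paper handles this by intersecting over a countable dense set of directions $\{e_i\}$ to obtain one full-measure set $D_x$. It then extends to arbitrary $h$ using $f''(x,s)e_{i_k}\to f''(x,s)h$ in operator norm. For the remainder estimate alone, the countably many directions $h_n$ would suffice.
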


\begin{proof}
For any $x\in \mathcal{O}$, we first show that such an equivalence class exists.
Let $g_x:S\to\mathcal L(X,Y)$ be any
$\sigma_{SO}(X,Y)$-measurable function and define
\begin{equation}
\bar f_x(s)
=
f'(x,s)\mathbb I_{B_x}(s)
+
g_x(s)\mathbb I_{B_x^c}(s).
\end{equation}
The measurability of $\bar f_x$ follows from
\begin{equation}
\bar f_x(s)h
=
\lim_{n\to\infty}
\frac{f(x+\varepsilon_n h,s)-f(x,s)}{\varepsilon_n}
\mathbb I_{B_x}(s)
+
g_x(s)h\mathbb I_{B_x^c}(s),
\qquad h\in X,\ s\in S,
\end{equation}
where $(\varepsilon_n)_{n\geq1}\subset\mathbb Q\setminus\{0\}$ is a
sequence with $\varepsilon_n\to0$. Indeed,
\begin{equation}
\frac{f(x+\varepsilon_n h,s)-f(x,s)}{\varepsilon_n}
\mathbb I_{B_x}(s)
+
g_x(s)h\mathbb I_{B_x^c}(s)
\end{equation}
is a sequence of measurable $Y$-valued functions. Hence,
$\bar f_x(s)h$ is a measurable $Y$-valued function for every
$h\in X$, and therefore $\bar f_x$ is
$\sigma_{SO}(X,Y)$-measurable.

Then, choose an arbitrary nonzero sequence $\{h_n\}$ in $X$ such that
$h_n\to0$. We observe that
\begin{align}\label{1frecon}
\frac{\left\lVert F(x+h_n)-F(x)-\int_S\bar f_x(s)h_n\,\mu(ds)\right\rVert}
{\left\lVert h_n\right\rVert}
={}&
\frac{\left\lVert\int_S
\bigl(f(x+h_n,s)-f(x,s)-\bar f_x(s)h_n\bigr)\,\mu(ds)\right\rVert}
{\left\lVert h_n\right\rVert}
\notag\\
\leq{}&
\int_S
\frac{\left\lVert f(x+h_n,s)-f(x,s)-\bar f_x(s)h_n\right\rVert}
{\left\lVert h_n\right\rVert}
\,\mu(ds).
\end{align}
The sequence of functions
\begin{equation}
\frac{\left\lVert
f(x+h_n,s)-f(x,s)-\bar f_x(s)h_n
\right\rVert}
{\left\lVert h_n\right\rVert}
\end{equation}
converges to zero on $B_x$ and is dominated there by
$\rho^1_{x,\{h_n\}}\in L^1(S,\mu)$. Since $B_x$
has full measure and $\mu(S)<\infty$, \eqref{1frecon} converges to zero
by the dominated convergence theorem.

We now move to the second part of the proposition. For each $y\in \mathcal{O}$,
we take the version of $\bar f_y$ corresponding to $g_y=0$. Let
$\{e_i\}_{i\geq1}$ be a countable dense subset of $X$. For each
$i\geq1$, let $(\varepsilon_n)_{n\geq1}\subset\mathbb Q\setminus\{0\}$
be a sequence converging to zero and define
\begin{equation}
D_{x,i}
=
B_x\cap\bigcap_{n=1}^{\infty}B_{x+\varepsilon_n e_i}.
\end{equation}
We set $B_{x+\varepsilon_n e_i}=S$ whenever
$x+\varepsilon_n e_i\notin\mathcal O$. Then $D_{x,i}\in\mathcal A$ and has full measure. For every
$s\in D_{x,i}$,
\begin{equation}
\frac{
\bar f_{x+\varepsilon_n e_i}(s)-\bar f_x(s)
}{
\varepsilon_n
}
\longrightarrow
f''(x,s)e_i
\end{equation}
in the uniform operator topology of $\mathcal L(X,Y)$. Hence,
$s\mapsto f''(x,s)e_i$, restricted to $D_{x,i}$, is
$\sigma_{SO}(X,Y)$-measurable.

Set
\begin{equation}
D_x
=
\bigcap_{i=1}^{\infty}D_{x,i}.
\end{equation}
Then $D_x\in\mathcal A$ and has full measure. Since
$\{e_i\}_{i\geq1}$ is dense in $X$, it follows that, for every $h\in X$,
$s\mapsto f''(x,s)h$, restricted to $D_x$, is
$\sigma_{SO}(X,Y)$-measurable.

Let
$G':S\to\mathcal L\bigl(X,\mathcal L(X,Y)\bigr)$ be such that
$G'(s)h$ is $\sigma_{SO}(X,Y)$-measurable for every $h\in X$, and
define
\begin{equation}
\bar{\bar f}_x(s)
=
f''(x,s)\mathbb I_{D_x}(s)
+
G'(s)\mathbb I_{D_x^c}(s).
\end{equation}
Then, for every $h\in X$, $\bar{\bar f}_x(s)h$ is
$\sigma_{SO}(X,Y)$-measurable.

Again, choose an arbitrary nonzero sequence $\{h_n\}$ in $X$ such that
$h_n\to0$, and observe that
\begin{equation}
F'(x+h_n)-F'(x)
-\int_S\bar{\bar f}_x(s)h_n\,\mu(ds)
=
\int_S
\bigl(
\bar f_{x+h_n}(s)-\bar f_x(s)-\bar{\bar f}_x(s)h_n
\bigr)\,\mu(ds).
\end{equation}
Since $X$ is separable and
$\bar f_{x+h_n}$, $\bar f_x$, and $\bar{\bar f}_x(\cdot)h_n$ are
$\sigma_{SO}(X,Y)$-measurable, the function
\begin{equation}
s\mapsto
\left\lVert
\bar f_{x+h_n}(s)-\bar f_x(s)-\bar{\bar f}_x(s)h_n
\right\rVert_{\mathcal L(X,Y)}
\end{equation}
is measurable. Moreover,
\begin{equation}
\left\lVert
F'(x+h_n)-F'(x)
-\int_S\bar{\bar f}_x(s)h_n\,\mu(ds)
\right\rVert_{\mathcal L(X,Y)}
\leq
\int_S
\left\lVert
\bar f_{x+h_n}(s)-\bar f_x(s)-\bar{\bar f}_x(s)h_n
\right\rVert_{\mathcal L(X,Y)}
\,\mu(ds).
\end{equation}
Hence,
\begin{equation}
\frac{
\left\lVert
F'(x+h_n)-F'(x)
-\int_S\bar{\bar f}_x(s)h_n\,\mu(ds)
\right\rVert_{\mathcal L(X,Y)}
}{
\left\lVert h_n\right\rVert
}
\leq
\int_S
\frac{
\left\lVert
\bar f_{x+h_n}(s)-\bar f_x(s)-\bar{\bar f}_x(s)h_n
\right\rVert_{\mathcal L(X,Y)}
}{
\left\lVert h_n\right\rVert
}
\,\mu(ds).
\end{equation}
The integrand converges to zero on
$D_x\cap\bigcap_{n=1}^{\infty}B_{x+h_n}$, which has full measure, and
is dominated there by
$\rho^2_{x,\{h_n\}}\in L^1(S,\mu)$.Since $\mu(S)<\infty$, the conclusion follows
from the dominated convergence theorem.
\end{proof}
\begin{remark}
The exceptional sets in Proposition~\ref{diunin} may depend on the
point of differentiation in a nontrivial way. To illustrate this, let
$X=S$ be a separable Banach space equipped with a finite nonatomic
Borel measure $\mu$, and fix $h\in X\setminus\{0\}$. For each
$s\in X$, consider the countable set
$A_s=\{s+ih:i\in\mathbb Z\}$. Since
$\lVert (s+ih)-(s+jh)\rVert_X=|i-j|\lVert h\rVert_X$, the set $A_s$
has no accumulation points.

Let $\phi:X\times X\to Y$ be a function satisfying the differentiability
and domination conditions of Proposition~\ref{diunin}, and define
$f:X\times X\to Y$ by setting $f(x,s)=\phi(x,s)$ for $x\notin A_s$,
while allowing the values of $f(x,s)$ to be modified arbitrarily when
$x\in A_s$. For each fixed $x\in \mathcal{O}$, the set of $s$ for which $x$ is
one of these exceptional points is
\begin{equation}
N_x
=
\{s\in X:x\in A_s\}
=
\{x-ih:i\in\mathbb Z\}.
\end{equation}
Thus $N_x$ is countable and, since $\mu$ is nonatomic,
$\mu(N_x)=0$. Moreover, for every $s\notin N_x$, the point $x$ does
not belong to $A_s$, and since $A_s$ has no accumulation points,
$f(\cdot,s)$ agrees with $\phi(\cdot,s)$ on a neighborhood of $x$.
Consequently,
$f'(x,s)=\phi'(x,s)$ and $f''(x,s)=\phi''(x,s)$ for $s\notin N_x$.

Furthermore, for any sequence $\{u_n\}$ in $X$ with $u_n\to0$, the
set $N_x\cup\bigcup_{n=1}^{\infty}N_{x+u_n}$ is countable and hence
$\mu$-null. Outside this set, all the terms appearing in the first-
and second-order difference quotients in Proposition~\ref{diunin}
coincide with the corresponding terms for $\phi$. Hence, provided the
required integrable bounds hold for $\phi$, they also hold for $f$.
This shows that Proposition~\ref{diunin} allows the map
$x\mapsto f(x,s)$ to have countably many isolated discontinuities for
every $s$, even though these exceptional points may vary with $s$.
\end{remark}
We now use \Cref{diunin}, together with the explicit representation of $X_t$, to derive the Kolmogorov equation by an alternative method to that of \Cref{lem:kolmogorov}. An advantage of this approach is that the smoothness assumptions on the payoff function $g$ can be weakened. We first impose the following conditions.
\begin{assumption} \label{assg1}
For every $N\in\mathbb N$, the functions
\begin{equation} \label{mappingg}
(t,x,h)\mapsto
G\left(
\sum_{i=1}^{\infty}
\langle x,e_i\rangle
\exp\left(
\left(r_i-\frac{\lambda_i}{2}\right)t
+\langle h,e_i\rangle
\right)e_i
\right),
\qquad
(t,x,h)\in[0,T]\times H\times H,
\end{equation}
and
\begin{equation}
(t,x,(z_1,\ldots,z_N))\mapsto
G\left(
\sum_{i=1}^{N}
\langle x,e_i\rangle
\exp\left(
\left(r_i-\frac{\lambda_i}{2}\right)t
+\sqrt{\lambda_i}\,z_i
\right)e_i
\right),
\qquad
(t,x,(z_1,\ldots,z_N))
\in[0,T]\times H\times\mathbb R^N,
\end{equation}
satisfy Assumptions {\rm (i)--(iii)} of \Cref{diunin} with respect to
$\mathcal L(W(t))$ and
$\mathcal L(\beta_1(t),\ldots,\beta_N(t))$, respectively, for every
$t\in[0,T]$.
\end{assumption}
We point out that, by the definition of the pushforward measures, the mappings
\begin{equation}
(t,x,\omega)\mapsto
G\left(
\sum_{i=1}^{\infty}
\langle x,e_i\rangle
\exp\left(
\left(r_i-\frac{\lambda_i}{2}\right)t
+\langle W(t,\omega),e_i\rangle
\right)e_i
\right),
\qquad
(t,x,\omega)\in[0,T]\times H\times\Omega,
\end{equation}
and
\begin{equation}
(t,x,\omega)\mapsto
G\left(
\sum_{i=1}^{N}
\langle x,e_i\rangle
\exp\left(
\left(r_i-\frac{\lambda_i}{2}\right)t
+\sqrt{\lambda_i}\,\beta_i(t,\omega)
\right)e_i
\right),
\qquad
(t,x,\omega)\in[0,T]\times H\times\Omega,
\end{equation}
satisfy Assumptions {\rm (i)--(iii)} of \Cref{diunin} with respect to
$\mathbb P$.

Moreover, if, for example, the mapping in \eqref{mappingg} is differentiable
with respect to $x$ at $(t,x,h)$, then its derivative is necessarily given,
for every $k\in H$, by
\begin{equation}
G'\left(
\sum_{i=1}^{\infty}
\langle x,e_i\rangle
e^{(r_i-\lambda_i/2)t+\langle h,e_i\rangle}e_i
\right)
\left(
\sum_{i=1}^{\infty}
\langle k,e_i\rangle
e^{(r_i-\lambda_i/2)t+\langle h,e_i\rangle}e_i
\right).
\end{equation}
Whenever the partial derivative with respect to $t$ exists, it is necessarily
given by
\begin{equation}
G'\left(
\sum_{i=1}^{\infty}
\langle x,e_i\rangle
e^{(r_i-\lambda_i/2)t+\langle h,e_i\rangle}e_i
\right)
\left(
\sum_{i=1}^{\infty}
\left(r_i-\frac{\lambda_i}{2}\right)
\langle x,e_i\rangle
e^{(r_i-\lambda_i/2)t+\langle h,e_i\rangle}e_i
\right),
\end{equation}
whereas, whenever the second derivative with respect to $x$ exists, it is
necessarily given, for every $k_1,k_2\in H$, by
\begin{equation}
G''\left(
\sum_{i=1}^{\infty}
\langle x,e_i\rangle
e^{(r_i-\lambda_i/2)t+\langle h,e_i\rangle}e_i
\right)
\left(
\sum_{i=1}^{\infty}
\langle k_1,e_i\rangle
e^{(r_i-\lambda_i/2)t+\langle h,e_i\rangle}e_i,
\sum_{i=1}^{\infty}
\langle k_2,e_i\rangle
e^{(r_i-\lambda_i/2)t+\langle h,e_i\rangle}e_i
\right).
\end{equation}
The corresponding identities for the finite-dimensional mapping and for
the mappings on the original probability space are obtained in exactly
the same way.
\begin{theorem}
For every $t\in[0,T]$ and $x\in H$,
\begin{equation}
\partial_tV(t,x)
=
\partial_xV(t,x)(Ax)
+
\frac12
\sum_{i=1}^{\infty}
\lambda_i\langle x,e_i\rangle^2
\bigl((\partial_{xx}V(t,x)e_i)e_i\bigr).
\end{equation}
\end{theorem}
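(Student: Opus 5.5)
The plan is to work with the explicit representation \eqref{eq:explicit-solution} under $\mathbb Q$ (with $\mu_i$ replaced by $r_i$), so that
\begin{equation*}
V(t,x)=\mathbb E^{\mathbb Q}\Bigl[G\Bigl(\sum_{i\ge1}\langle x,e_i\rangle e^{(r_i-\lambda_i/2)t+\sqrt{\lambda_i}\beta_i(t)}e_i\Bigr)\Bigr].
\end{equation*}
Here $V$ is the forward value function $\mathbb E^{\mathbb Q}[G(X_t^x)]$; there is no discount term, consistent with the absence of $-AV$ in the statement.

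\textbf{Step 1: identify the spatial derivatives.} Apply Proposition~\ref{diunin} to the first mapping in Assumption~\ref{assg1}, integrated against $\mathcal L(W(t))$. This shows that $V(t,\cdot)$ is twice Fr\'echet differentiable, with $D_xV(t,x)k$ and $(D^2_{xx}V(t,x)k_1)k_2$ given by the expectations of the explicit derivative formulas displayed after Assumption~\ref{assg1}. In particular, $D_xV(t,x)(Ax)$ is the expectation of $G'(X_t^x)$ applied to $\sum_i r_i x^i e^{(\cdots)}e_i$. Likewise $(D^2_{xx}V(t,x)e_i)e_i$ is the expectation of $G''(X_t^x)$ applied to the pair $(e^{(\cdots)_i}e_i,\,e^{(\cdots)_i}e_i)$.

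\textbf{Step 2: finite-dimensional truncation.} Set
\begin{equation*}
V^N(t,x)=\mathbb E\Bigl[G\Bigl(\sum_{i\le N}x^i e^{(r_i-\lambda_i/2)t+\sqrt{\lambda_i}\beta_i(t)}e_i\Bigr)\Bigr]=\int_{\mathbb R^N}g(t,x,z)\,p_t(z)\,dz,
\end{equation*}
where $p_t$ is the centred Gaussian density with covariance $tI_N$. Since $\partial_tp_t=\tfrac12\Delta_zp_t$, I differentiate under the integral, using Proposition~\ref{diunin} and the second mapping in Assumption~\ref{assg1}. Integrating by parts twice in $z$ gives
\begin{equation*}
\partial_tV^N=\int\bigl(\partial_tg+\tfrac12\Delta_zg\bigr)p_t\,dz .
\end{equation*}
Using the chain rule, with $y$ denoting the argument of $G$:
\begin{itemize}
\item $\partial_tg=G'(y)\sum_{i\le N}(r_i-\lambda_i/2)x^ie^{(\cdots)}e_i$;
\item $\partial_{z_i}g=\sqrt{\lambda_i}\,G'(y)\bigl(x^ie^{(\cdots)}e_i\bigr)$;
\item $\partial^2_{z_iz_i}g=\lambda_i\bigl[G''(y)\bigl(x^ie^{(\cdots)}e_i,\,x^ie^{(\cdots)}e_i\bigr)+G'(y)\bigl(x^ie^{(\cdots)}e_i\bigr)\bigr]$.
\end{itemize}
The $-\lambda_i/2$ drift contribution cancels exactly against the first-order part of $\tfrac12\partial^2_{z_iz_i}g$. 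Comparing with Step 1 (applied to $V^N$) then yields
\begin{equation*}
\partial_tV^N(t,x)=D_xV^N(t,x)(Ax)+\tfrac12\sum_{i\le N}\lambda_i(x^i)^2(\partial_{xx}V^N(t,x)e_i)e_i .
\end{equation*}
This is the classical It\^o/heat-kernel computation for $N$ independent geometric Brownian motions.

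\textbf{Step 3: pass to the limit $N\to\infty$ (main obstacle).} The boundary terms in the integration by parts vanish by the Gaussian decay together with the integrability bounds $\ell^1_x,\ell^2_x$. The real difficulty is to show three things, locally uniformly in $t$:
\begin{itemize}
\item $V^N\to V$;
\item $D_xV^N(t,x)(Ax)\to D_xV(t,x)(Ax)$;
\item the partial trace sums converge to the full series.
\end{itemize}
For the first two, I would use the almost sure and $L^2$ convergence of the truncated sums to $X_t^x$, the continuity of $G'$ and $G''$ along these sequences, and dominated convergence with the integrable bounds from Assumption~\ref{assg1}(iii).

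For the series, I would bound
\begin{equation*}
\lambda_i(x^i)^2\bigl|(\partial_{xx}V^Ne_i)e_i\bigr|\le \lambda_i(x^i)^2\,\mathbb E\bigl[\ell^2\,e^{2(r_i-\lambda_i/2)t+2\sqrt{\lambda_i}\beta_i(t)}\bigr].
\end{equation*}
This is summable because $\sup_ir_i<\infty$ and $\sum\lambda_i<\infty$, so the trace series converges absolutely and uniformly in $N$.

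Once these convergences hold, $\partial_tV^N$ converges locally uniformly in $t$. The standard theorem on uniform convergence of derivatives, already used at the end of the proof of Theorem~\ref{lem:kolmogorov}, then gives the claimed identity for $V$. The endpoint $t=0$ is handled by one-sided derivatives.
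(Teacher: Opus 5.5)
This is essentially the paper's argument: you truncate to $V^N$, use Gaussian integration by parts in $z$ so that the $-\lambda_i/2$ drift cancels against the first-order part of the second $z_i$-derivative (giving the $N$-dimensional Kolmogorov equation), and then let $N\to\infty$ termwise. The only difference is cosmetic: you keep the heat kernel $p_t$ and use $\partial_tp_t=\tfrac12\Delta_zp_t$ with two integrations by parts, whereas the paper rescales to a fixed standard Gaussian $\gamma_N$ (via $\sqrt{\lambda_i t}\,z_i$) and applies a single Stein-type identity. As in the paper, the integration by parts tacitly needs extra regularity of $z\mapsto G'(\Phi^N(t,x,z))$ (the paper explicitly assumes an absolutely continuous version) and a pointwise decay bound such as boundedness of $G'$ for the boundary terms; the $L^1$ bounds $\ell^1_x,\ell^2_x$ alone do not supply these.
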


\begin{proof}
For every $N\in\mathbb N$, define
\begin{equation}
\Phi^N(t,x,z)
:=
\sum_{i=1}^N
\langle x,e_i\rangle
\exp\!\left(
\left(r_i-\frac{\lambda_i}{2}\right)t
+\sqrt{\lambda_i t}\,z_i
\right)e_i,
\qquad z\in\mathbb R^N,
\end{equation}
and
\begin{equation}
V^N(t,x)
:=
\int_{\mathbb R^N}
G\bigl(\Phi^N(t,x,z)\bigr)\,\gamma_N(dz),
\end{equation}
where
\begin{equation}
\gamma_N(dz)
:=
\prod_{i=1}^N
\frac{1}{\sqrt{2\pi}}e^{-z_i^2/2}\,dz_i.
\end{equation}
We first show that
\begin{equation}
\partial_tV^N(t,x)
=
\partial_xV^N(t,x)(A_Nx)
+
\frac12
\sum_{i=1}^N
\lambda_i\langle x,e_i\rangle^2
\bigl((\partial_{xx}V^N(t,x)e_i)e_i\bigr),
\end{equation}
where
\begin{equation}
A_Nx:=\sum_{i=1}^N r_i\langle x,e_i\rangle e_i.
\end{equation}

Differentiating under the integral gives
\begin{align}\label{VN1}
\partial_tV^N(t,x)
={}&
\int_{\mathbb R^N}
G'\bigl(\Phi^N(t,x,z)\bigr)
\sum_{i=1}^N
\left(r_i-\frac{\lambda_i}{2}\right)
\langle x,e_i\rangle
e^{(r_i-\lambda_i/2)t+\sqrt{\lambda_i t}\,z_i}
e_i\,\gamma_N(dz)
\nonumber\\
&+
\frac{1}{2t}
\sum_{i=1}^N
\int_{\mathbb R^N}
z_i\sqrt{\lambda_i t}\,
G'\bigl(\Phi^N(t,x,z)\bigr)
\langle x,e_i\rangle
e^{(r_i-\lambda_i/2)t+\sqrt{\lambda_i t}\,z_i}
e_i\,\gamma_N(dz).
\end{align}
For each $i$, assume that, for a.e.\ $z_{-i}\in\mathbb R^{N-1}$, the map
\begin{equation}
z_i\mapsto
G'\bigl(\Phi^N(t,x,z)\bigr)
\left(
\langle x,e_i\rangle
e^{(r_i-\lambda_i/2)t+\sqrt{\lambda_i t}\,z_i}e_i
\right)
\frac{1}{\sqrt{2\pi}}e^{-z_i^2/2},
\end{equation}
where defined, admits an absolutely continuous version. Since $G'$ is
bounded, the absolutely continuous version vanishes as
$z_i\to\pm\infty$. Since $H$ has the Radon--Nikodým property, the fundamental theorem of calculus remains valid (see \cite{diestel1977vector}) for Banach-space-valued absolutely continuous functions and gives
\begin{align}\label{VN2}
&
\int_{\mathbb R^N}
z_i\sqrt{\lambda_i t}\,
G'\bigl(\Phi^N(t,x,z)\bigr)
\langle x,e_i\rangle
e^{(r_i-\lambda_i/2)t+\sqrt{\lambda_i t}\,z_i}
e_i\,\gamma_N(dz)
\nonumber\\
={}&
\lambda_i t
\int_{\mathbb R^N}
G''\bigl(\Phi^N(t,x,z)\bigr)
\left(
\langle x,e_i\rangle
e^{(r_i-\lambda_i/2)t+\sqrt{\lambda_i t}\,z_i}e_i,
\langle x,e_i\rangle
e^{(r_i-\lambda_i/2)t+\sqrt{\lambda_i t}\,z_i}e_i
\right)
\gamma_N(dz)
\nonumber\\
&+
\lambda_i t
\int_{\mathbb R^N}
G'\bigl(\Phi^N(t,x,z)\bigr)
\langle x,e_i\rangle
e^{(r_i-\lambda_i/2)t+\sqrt{\lambda_i t}\,z_i}
e_i\,\gamma_N(dz).
\end{align}
Combining \eqref{VN1} and \eqref{VN2}, we obtain
\begin{align}
\partial_tV^N(t,x)
={}&
\int_{\mathbb R^N}
G'\bigl(\Phi^N(t,x,z)\bigr)
\sum_{i=1}^N
r_i\langle x,e_i\rangle
e^{(r_i-\lambda_i/2)t+\sqrt{\lambda_i t}\,z_i}
e_i\,\gamma_N(dz)
\nonumber\\
&+
\frac12
\sum_{i=1}^N\lambda_i
\int_{\mathbb R^N}
G''\bigl(\Phi^N(t,x,z)\bigr)
\left(
\langle x,e_i\rangle
e^{(r_i-\lambda_i/2)t+\sqrt{\lambda_i t}\,z_i}e_i,
\langle x,e_i\rangle
e^{(r_i-\lambda_i/2)t+\sqrt{\lambda_i t}\,z_i}e_i
\right)
\gamma_N(dz).
\end{align}
Moreover,
\begin{equation}
\partial_xV^N(t,x)(A_Nx)
=
\int_{\mathbb R^N}
G'\bigl(\Phi^N(t,x,z)\bigr)
\sum_{i=1}^N
r_i\langle x,e_i\rangle
e^{(r_i-\lambda_i/2)t+\sqrt{\lambda_i t}\,z_i}
e_i\,\gamma_N(dz),
\end{equation}
while
\begin{equation}
\bigl((\partial_{xx}V^N(t,x)e_i)e_i\bigr)
=
\int_{\mathbb R^N}
G''\bigl(\Phi^N(t,x,z)\bigr)
\left(
e^{(r_i-\lambda_i/2)t+\sqrt{\lambda_i t}\,z_i}e_i,
e^{(r_i-\lambda_i/2)t+\sqrt{\lambda_i t}\,z_i}e_i
\right)
\gamma_N(dz).
\end{equation}
Consequently,
\begin{equation}\label{pdevn}
\partial_tV^N(t,x)
=
\partial_xV^N(t,x)(A_Nx)
+
\frac12
\sum_{i=1}^N
\lambda_i\langle x,e_i\rangle^2
\bigl((\partial_{xx}V^N(t,x)e_i)e_i\bigr).
\end{equation}

It is not difficult to check that, for every $h\in H$,
\begin{align}
\partial_xV^N(t,x)h
&=
\mathbb E\left[
G'\bigl(\Phi^N(t,x,W(t))\bigr)
\bigl(\partial_x\Phi^N(t,x,W(t))h\bigr)
\right]
\nonumber\\
&\longrightarrow
\mathbb E\left[
G'\bigl(\Phi(t,x,W(t))\bigr)
\bigl(\partial_x\Phi(t,x,W(t))h\bigr)
\right]
=
\partial_xV(t,x)h,
\end{align}
uniformly for $t\in[0,T]$. Since $A_Nx\to Ax$, this yields
\begin{equation}
\partial_xV^N(t,x)(A_Nx)
\longrightarrow
\partial_xV(t,x)(Ax)
\end{equation}
uniformly for $t\in[0,T]$. Similarly, for every $i\in\mathbb N$,
\begin{align}
\bigl((\partial_{xx}V^N(t,x)e_i)e_i\bigr)
&=
\mathbb E\left[
G''\bigl(\Phi^N(t,x,W(t))\bigr)
\left(
e^{(r_i-\lambda_i/2)t+\langle W(t),e_i\rangle}e_i,
e^{(r_i-\lambda_i/2)t+\langle W(t),e_i\rangle}e_i
\right)
\right]
\nonumber\\
&\longrightarrow
\mathbb E\left[
G''\bigl(\Phi(t,x,W(t))\bigr)
\left(
e^{(r_i-\lambda_i/2)t+\langle W(t),e_i\rangle}e_i,
e^{(r_i-\lambda_i/2)t+\langle W(t),e_i\rangle}e_i
\right)
\right]
\nonumber\\
&=
\bigl((\partial_{xx}V(t,x)e_i)e_i\bigr),
\end{align}
uniformly for $t\in[0,T]$. Consequently,
\begin{equation}
\sum_{i=1}^N
\lambda_i\langle x,e_i\rangle^2
\bigl((\partial_{xx}V^N(t,x)e_i)e_i\bigr)
\longrightarrow
\sum_{i=1}^\infty
\lambda_i\langle x,e_i\rangle^2
\bigl((\partial_{xx}V(t,x)e_i)e_i\bigr)
\end{equation}
uniformly for $t\in[0,T]$. Taking the limit on both sides of
\eqref{pdevn} as $N\to\infty$ therefore gives
\begin{equation}
\partial_tV(t,x)
=
\partial_xV(t,x)(Ax)
+
\frac12
\sum_{i=1}^{\infty}
\lambda_i\langle x,e_i\rangle^2
\bigl((\partial_{xx}V(t,x)e_i)e_i\bigr).
\end{equation}
\end{proof}
We point out that the method used in \Cref{lem:kolmogorov} applies equally to the present setting, while conditions~(3) and~(4) in \Cref{ass:pricing_regular} are satisfied due to the diagonal structure. The corresponding infinite-dimensional Black--Scholes equation is therefore
\begin{equation}\label{gbmbs}
\partial_tV(t,x)
+
\partial_xV(t,x)(Ax)
+
\frac12
\sum_{i=1}^{\infty}
\lambda_i\langle x,e_i\rangle^2
\bigl((\partial_{xx}V(t,x)e_i)e_i\bigr)
-rV(t,x)
=0.
\end{equation}
The regularity assumptions on $G$ provide sufficient conditions for
deriving \eqref{gbmbs}, but they are not necessary. We next consider a
case in which \eqref{gbmbs} can be obtained without imposing such
regularity on the payoff function. We refer to this setting as the
diagonal form.  We say that the payoff function $G:H\to H$ is of diagonal form if
\begin{equation}
G(x)
=
\sum_{i=1}^{\infty}
G_i\bigl(\langle x,e_i\rangle\bigr)e_i,
\qquad x\in H,
\end{equation}
where $G_i:\mathbb R\to\mathbb R$ satisfy
\begin{equation}
\sum_{i=1}^{\infty}
\left|
G_i\bigl(\langle x,e_i\rangle\bigr)
\right|^2
<\infty,
\qquad x\in H.
\end{equation}
Then the value function is also of diagonal form and can be written as
\begin{equation}
V(t,x)
=
S(-(T-t))
\mathbb E^{\mathbb Q}
\left[G\left(X_T^{t,x}\right)\right]
=
\sum_{i=1}^{\infty}
V_i\bigl(t,\langle x,e_i\rangle\bigr)e_i,
\end{equation}
where
\begin{equation}
V_i(t,z)
=
e^{-r_i(T-t)}
\mathbb E^{\mathbb Q}
\left[
G_i\left(
z\exp\left(
\left(r_i-\frac{\lambda_i}{2}\right)(T-t)
+
\sqrt{\lambda_i}\bigl(\beta_i(T)-\beta_i(t)\bigr)
\right)
\right)
\right].
\end{equation}
From a financial perspective, this diagonal structure can be
interpreted as a sequence of completely independent and segmented
markets, each with its own asset-price process and payoff function. Suppose that, for every $i\geq1$, $V_i$ is a classical solution of the
corresponding one-dimensional Black--Scholes equation,
\begin{equation}
\partial_tV_i(t,z)
+
r_i z\,\partial_zV_i(t,z)
+
\frac{\lambda_i}{2}z^2\partial_{zz}V_i(t,z)
-
r_iV_i(t,z)
=
0.
\end{equation}
If, for every $x\in H$, the sequences
\begin{equation}
\left\{
r_i x_i\partial_zV_i(t,x_i)
\right\}_{i\geq1},
\qquad
\left\{
\lambda_i x_i^2\partial_{zz}V_i(t,x_i)
\right\}_{i\geq1},
\qquad
\left\{
r_iV_i(t,x_i)
\right\}_{i\geq1},
\end{equation}
are square summable uniformly in $t\in[0,T)$, where
$x_i=\langle x,e_i\rangle$, then the one-dimensional Black--Scholes
equations can be summed in $H$, yielding \eqref{gbmbs}.

To construct infinite-dimensional analogues of classical European
call options, we first introduce an order structure on $H$.
Specifically, we equip $H$ with the $Q$-order defined by
\footnote{With this order, $H$ is an order-continuous Banach lattice
and is therefore Dedekind complete, as it is lattice isomorphic
to $\ell^2$.}
\begin{equation}
x\leq_Q y
\quad\Longleftrightarrow\quad
\langle x,e_i\rangle_H
\leq
\langle y,e_i\rangle_H,
\qquad i\geq1.
\end{equation}

\begin{example}\label{callop1}
Consider a European call option with payoff
\begin{equation}\label{call1}
G(x)=(x-K)^+,
\qquad x\in H,
\end{equation}
where $K\in H_+\setminus\{0\}$ and
$K_i:=\langle K,e_i\rangle>0$ for every $i\geq1$.
For $x_i:=\langle x,e_i\rangle>0$ and $t<T$, define
\begin{equation}
d_{1,i}
=
\frac{
\log(x_i/K_i)
+
\left(r_i+\lambda_i/2\right)(T-t)
}{
\sqrt{\lambda_i(T-t)}
},
\qquad
d_{2,i}
=
d_{1,i}-\sqrt{\lambda_i(T-t)}.
\end{equation}
The value function admits the closed-form representation
\begin{equation}
V(t,x)
=
S(t-T)\mathbb E^{\mathbb Q}
\left[G\left(X_T^{t,x}\right)\right]
=
\sum_{i=1}^{\infty}
\left[
x_i\Phi(d_{1,i})
-
K_i e^{-r_i(T-t)}\Phi(d_{2,i})
\right]e_i,
\end{equation}
where the series converges in $H$. Moreover, $V$ satisfies
\eqref{wbsh} for every $t<T$.

Indeed, by the diagonal form of the payoff and the definition of
$S(t-T)$,
\begin{equation}
V(t,x)
=
\sum_{i=1}^{\infty}
e^{-r_i(T-t)}
\mathbb E^{\mathbb Q}
\left[
\left(X_T^{i,t,x_i}-K_i\right)^+
\right]e_i.
\end{equation}
For each $i\geq1$, the one-dimensional Black--Scholes formula gives
\begin{equation}
e^{-r_i(T-t)}
\mathbb E^{\mathbb Q}
\left[
\left(X_T^{i,t,x_i}-K_i\right)^+
\right]
=
x_i\Phi(d_{1,i})
-
K_i e^{-r_i(T-t)}\Phi(d_{2,i}).
\end{equation}
Consequently, writing
\begin{equation}
V_i(t,x_i)
:=
x_i\Phi(d_{1,i})
-
K_i e^{-r_i(T-t)}\Phi(d_{2,i}),
\end{equation}
we have $V(t,x)=\sum_{i=1}^{\infty}V_i(t,x_i)e_i$.
Since $0\leq V_i(t,x_i)\leq x_i$,
\begin{equation}
\sum_{i=1}^{\infty}|V_i(t,x_i)|^2
\leq
\sum_{i=1}^{\infty}|x_i|^2
<\infty,
\end{equation}
and hence the series converges in $H$.

For $t<T$, the standard Black--Scholes formulas yield
\begin{equation}
\partial_{x_i}V_i(t,x_i)=\Phi(d_{1,i}),
\qquad
\partial_{x_ix_i}V_i(t,x_i)
=
\frac{\varphi(d_{1,i})}
{x_i\sqrt{\lambda_i(T-t)}}.
\end{equation}
Since $A$ is bounded,
\begin{equation}
\sum_{i=1}^{\infty}
\left|
r_i x_i\partial_{x_i}V_i(t,x_i)
\right|^2
\leq
\sup_{i\geq1}|r_i|^2
\sum_{i=1}^{\infty}|x_i|^2
<\infty,
\end{equation}
and
\begin{equation}
\sum_{i=1}^{\infty}
\left|
r_iV_i(t,x_i)
\right|^2
\leq
\sup_{i\geq1}|r_i|^2
\sum_{i=1}^{\infty}|x_i|^2
<\infty.
\end{equation}
Moreover, since $\varphi(z)^2\leq(2\pi)^{-1}$,
\begin{equation}
\sum_{i=1}^{\infty}
\left|
\lambda_i x_i^2
\partial_{x_ix_i}V_i(t,x_i)
\right|^2
\leq
\frac{\sup_{i\geq1}\lambda_i}
{2\pi(T-t)}
\sum_{i=1}^{\infty}|x_i|^2
<\infty.
\end{equation}
All the above convergences are locally uniform in
$t\in[0,T)$.

Finally, each $V_i$ satisfies the one-dimensional
Black--Scholes equation
\begin{equation}
\partial_tV_i
+
r_i x_i\partial_{x_i}V_i
+
\frac{\lambda_i}{2}x_i^2
\partial_{x_ix_i}V_i
-
r_iV_i
=
0.
\end{equation}
The preceding estimates allow these identities to be summed
in $H$, yielding \eqref{wbsh}.
\end{example}

The hedging problem for the asset process \eqref{gmbh} and any payoff whose value function satisfies \eqref{gbmbs} follows from \Cref{thm:replication-yn} and the discussion following it. 

In the present setting, \Cref{prop:no_arbitrage} admits the following particular interpretation. If the portfolio strategy is also of diagonal form, as, for instance,
when hedging a diagonal payoff, then $V_t=\sum_{i=1}^{\infty}V_t^i e_i$ (where $V_t^i$ may depend only on $X_t^i$). In this case, \Cref{prop:no_arbitrage} implies that there cannot exist
$t\in(0,T]$ such that
\begin{equation}
V_t^i\geq0 \quad \text{a.s. for all } i\geq1,
\qquad
\mathbb P(V_t^i>0)>0
\quad \text{for some } i\geq1.
\end{equation}

In other words, arbitrage opportunities cannot arise in any individual market $i$. In the diagonal geometric Brownian motion setting, truncating the
infinite-dimensional Wiener process to its first $n$ coordinates yields
a finite-dimensional truncated market. Thus,
\Cref{lem:measurable_factorization} and the discussion following it show
that, under the corresponding injectivity condition, each truncated
market is complete, while the original infinite-dimensional market is
asymptotically complete.
\subsection{The Case of a Common Interest Rate $r_i=r$}
We now consider the case where there exists $r\in\mathbb R$ such that
\eqref{risknr} holds. Under $\mathbb Q$, the market is then given by
\begin{equation}
dX_t
=
rX_t\,dt
+
B(X_t)\,dW_t^{\mathbb Q}.
\end{equation}
In this case, the fundamental pricing formula reduces to
\begin{equation}
V(t,x)
=
e^{-r(T-t)}
\mathbb E^{\mathbb Q}
\left[
G\left(X_T^{t,x}\right)
\right].
\end{equation}
\begin{assumption} \label{assg2}
The function $G:H\to\mathbb R$ satisfies \cref{assg1} with the codomain
$H$ replaced by $\mathbb R$.
\end{assumption}
Using the same method as in \Cref{generi}, we obtain the following real-valued Black--Scholes equation. 
\begin{theorem}
Suppose that \Cref{assg2} holds. Then the value function
\begin{equation}
V(t,x)
=
e^{-r(T-t)}
\mathbb E^{\mathbb Q}
\left[
G\left(X_T^{t,x}\right)
\right]
\end{equation}
satisfies
\begin{equation}
\partial_t V(t,x)
+
rD_xV(t,x)x
+
\frac12
\sum_{i=1}^{\infty}
\lambda_i\langle x,e_i\rangle^2
D_x^2V(t,x)(e_i,e_i)
-
rV(t,x)
=
0.
\end{equation}
\end{theorem}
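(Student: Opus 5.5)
The plan is to mirror the proof of the preceding theorem for $H$-valued payoffs in \Cref{generi}, now with $r_i=r$ and a real-valued $G$, and then pass from the forward Kolmogorov equation for $u(t,x):=\mathbb E^{\mathbb Q}[G(X_t^x)]$ to the backward equation for $V(t,x)=e^{-r(T-t)}u(T-t,x)$. By the Markov property and time-homogeneity (Corollary~9.16 of \cite{da2014stochastic}, as in the proof of \Cref{cor:black_scholes}), $\mathbb E^{\mathbb Q}[G(X_T^{t,x})]=u(T-t,x)$. By the explicit representation \eqref{eq:explicit-solution} under $\mathbb Q$, with $\mu_i$ replaced by $r$, we have $u(t,x)=\int_H G(\Phi(t,x,h))\,\mathcal L(W(t))(dh)$ with $\Phi(t,x,h)=\sum_i\langle x,e_i\rangle e^{(r-\lambda_i/2)t+\langle h,e_i\rangle}e_i$. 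By \Cref{assg2} and \Cref{diunin}, applied with $Y=\mathbb R$, $u(t,\cdot)$ is twice Fréchet differentiable, and its derivatives are given by the expectations of $G'$ and $G''$ composed with $\partial_x\Phi$, exactly as in the remarks following \Cref{assg1}.

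First, I would work with the finite-dimensional truncations $\Phi^N(t,x,z)=\sum_{i\le N}\langle x,e_i\rangle e^{(r-\lambda_i/2)t+\sqrt{\lambda_i t}z_i}e_i$ and $u^N(t,x)=\int_{\mathbb R^N}G(\Phi^N(t,x,z))\gamma_N(dz)$. Differentiating in $t$ under the integral, which is justified by the second part of \Cref{assg1} via \Cref{diunin}, produces a drift term and a term $\frac1{2t}\sum_i\int z_i\sqrt{\lambda_i t}\,\cdots\,\gamma_N(dz)$. Gaussian integration by parts in each $z_i$ (scalar valued now, so the Radon--Nikodým detour is unnecessary) converts the latter into $\lambda_i t$ times a $G''$ term plus $\lambda_i t$ times a $G'$ term. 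The $G'$ term cancels the $-\lambda_i/2$ part of the drift. This gives
\begin{equation*}
\partial_tu^N(t,x)=r\,D_xu^N(t,x)P_Nx+\frac12\sum_{i=1}^N\lambda_i\langle x,e_i\rangle^2D_x^2u^N(t,x)(e_i,e_i),
\end{equation*}
where $P_N$ is the projection onto the first $N$ coordinates. Because $A=rI$, the drift is simply $r\,D_xu^N(t,x)P_Nx$.

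Next, I would let $N\to\infty$. The convergence $D_xu^N(t,x)h\to D_xu(t,x)h$ and $D_x^2u^N(t,x)(e_i,e_i)\to D_x^2u(t,x)(e_i,e_i)$, uniformly in $t$, follows by dominated convergence, using $\Phi^N(t,x,W(t))\to\Phi(t,x,W(t))$ in $L^2$ together with the domination in \Cref{assg1}. Combining this with $P_Nx\to x$ and with the summability of $\lambda_i\langle x,e_i\rangle^2$ times the uniformly bounded second derivatives, the right-hand side converges uniformly in $t$ on compacts. Since $u^N\to u$ as well, the theorem on uniform convergence of derivatives yields the forward equation for $u$.

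Finally, differentiating $V(t,x)=e^{-r(T-t)}u(T-t,x)$ gives $\partial_tV=rV-e^{-r(T-t)}\partial_tu(T-t,x)$, while $D_xV$ and $D_x^2V$ are the same scalar multiples of $D_xu$ and $D_x^2u$. Substituting the forward equation yields the stated backward equation, and $u(0,x)=G(x)$ gives $V(T,x)=G(x)$.

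The main obstacle is the uniform-in-$t$ passage to the limit $N\to\infty$ in the second-order sum. This needs a bound on $|D_x^2u^N(t,x)(e_i,e_i)|$ uniform in $i$, $N$ and $t$ in compacts, such as $C\,\mathbb E[e^{2\langle W(t),e_i\rangle}]\le Ce^{\lambda_i T}$, so that the tails $\sum_{i>M}\lambda_i x_i^2(\cdots)$ are small. I would try to extract this from the domination functions $\ell^2_x$ in \Cref{assg1}. A secondary issue is the behaviour near $t=0$, where the factor $1/(2t)$ appears; I would work on compact subsets of $(0,T]$ and recover the endpoint by continuity.
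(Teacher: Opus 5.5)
Your proposal follows the paper's own route. The paper proves this theorem only by pointing to the method of Section~\ref{generi}: the truncation $\Phi^N$, differentiation under the integral, Gaussian integration by parts cancelling the $-\lambda_i/2$ drift, and a uniform-in-$t$ limit $N\to\infty$, followed by the conversion $V(t,x)=e^{-r(T-t)}u(T-t,x)$ as in the proof of Theorem~\ref{thm:bX_real}, which is exactly what you do.

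The step you single out as the main obstacle is one the paper asserts without proof (``it is not difficult to check''). The functions $\ell^2_x$ of Assumption~\ref{assg1} are only pointwise in $t$, in $x$ and in the truncation level $N$, so they will not give the uniform bound you want. Closing the step needs an extra hypothesis, for instance bounded and continuous $G'$ and $G''$, which yields $|D_x^2u^N(t,x)(e_i,e_i)|\le\|G''\|_\infty e^{(2r+\lambda_i)t}$ uniformly in $i$, $N$ and $t\in[0,T]$.
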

We now consider two examples with tractable solutions.
\begin{example}
The first example is a diagonal-type European call option. Let
\begin{equation}
g(x)
=
\sum_{i=1}^{\infty}
a_i\bigl(\langle x,e_i\rangle-K_i\bigr)^+,
\qquad x\in H,
\end{equation}
where $a_i>0$, $(a_i)_{i\geq1}\in\ell^2$, and $K_i>0$. 
Writing $x_i=\langle x,e_i\rangle$ and $\tau=T-t$, the corresponding
option price is
\begin{equation}
V(t,x)
=
\sum_{i=1}^{\infty}
a_i
\left[
x_i\Phi\bigl(d_{1,i}(t,x)\bigr)
-
K_i e^{-r\tau}\Phi\bigl(d_{2,i}(t,x)\bigr)
\right],
\end{equation}
where
\begin{equation}
d_{1,i}(t,x)
=
\frac{
\log(x_i/K_i)
+
\left(r+\frac{\lambda_i}{2}\right)\tau
}{
\sqrt{\lambda_i\tau}
},
\qquad
d_{2,i}(t,x)
=
d_{1,i}(t,x)-\sqrt{\lambda_i\tau}.
\end{equation}
Thus, the infinite-dimensional option price is represented as a weighted
sum of one-dimensional Black--Scholes prices corresponding to the
coordinate processes. The verification that $V$ satisfies \eqref{wbsr} is the same as in \Cref{callop1}, applied
coordinatewise.
\end{example}
The second example considers an arithmetic basket option, obtained as the infinite-dimensional limit of finite-dimensional arithmetic basket options. In contrast to the diagonal case, the payoff depends jointly on all asset coordinates through their weighted sum, thereby capturing the aggregate behavior of the underlying assets.
\begin{example}
Consider the arithmetic basket payoff
\begin{equation}
g(x)
=
\left(
\sum_{i=1}^{\infty}a_i\langle x,e_i\rangle-K
\right)^+,
\qquad x\in H,
\end{equation}
where $a=(a_i)_{i\geq1}\in\ell^2$, $a_i >0$, and $K>0$. Writing
$x_i=\langle x,e_i\rangle$ and $\tau=T-t$, consider the finite-dimensional
approximations $g_N(x)=\left(\sum_{i=1}^{N}a_i x_i-K\right)^+$. The corresponding option prices are
\begin{equation}
V_N(t,x)
=
e^{-r\tau}
\mathbb E^{\mathbb Q}
\left[
\left(
\sum_{i=1}^{N}
a_i x_i
\exp\left(
\left(r-\frac{\lambda_i}{2}\right)\tau
+\sqrt{\lambda_i\tau}\,Z_i
\right)
-K
\right)^+
\right],
\end{equation}
where $Z_1,\ldots,Z_N$ are independent standard normal random variables.

Define
\begin{equation}
A_N
=
\sum_{i=2}^{N}
a_i x_i
\exp\left(
\left(r-\frac{\lambda_i}{2}\right)\tau
+\sqrt{\lambda_i\tau}\,Z_i
\right).
\end{equation}
Conditioning on $Z_2,\ldots,Z_N$ gives
\begin{equation}
V_N(t,x)
=
e^{-r\tau}
\int_{\mathbb R^{N-1}}
\Gamma\bigl(A_N(z_2,\ldots,z_N)\bigr)
\,\gamma_{N-1}(dz_2\cdots dz_N),
\end{equation}
where $\gamma_{N-1}$ denotes the standard Gaussian measure on
$\mathbb R^{N-1}$ and
\begin{equation}
\Gamma(a)
=
\begin{cases}
a_1x_1e^{r\tau}\Phi\bigl(d_1(a)\bigr)
-(K-a)\Phi\bigl(d_2(a)\bigr),
& a<K,\\[1mm]
a_1x_1e^{r\tau}+a-K,
& a\geq K,
\end{cases}
\end{equation}
with
\begin{equation}
d_1(a)
=
\frac{
\log\left(\dfrac{a_1x_1}{K-a}\right)
+
\left(r+\dfrac{\lambda_1}{2}\right)\tau
}{
\sqrt{\lambda_1\tau}
},
\qquad
d_2(a)
=
d_1(a)-\sqrt{\lambda_1\tau}.
\end{equation}
Equivalently, $V_N(t,x)=e^{-r\tau}
\mathbb E^{\mathbb Q}\left[\Gamma(A_N)\right]$.
Moreover,
\begin{equation}
A_N
\longrightarrow
A_\infty
:=
\sum_{i=2}^{\infty}
a_i x_i
\exp\left(
\left(r-\frac{\lambda_i}{2}\right)\tau
+\sqrt{\lambda_i\tau}\,Z_i
\right)
\end{equation}
in $L^1$. Since $\Gamma$ is Lipschitz continuous, it follows that
\begin{equation}
V(t,x)
=
\lim_{N\to\infty}V_N(t,x)
=
e^{-r\tau}
\mathbb E^{\mathbb Q}\left[\Gamma(A_\infty)\right]
=
e^{-r\tau}
\int_{\mathbb R}
\Gamma(a)\,\nu_{A_\infty}(da),
\end{equation}
where $\nu_{A_\infty}$ denotes the distribution of $A_\infty$. Thus, the
infinite-dimensional arithmetic basket option admits a semi-closed-form
representation as a one-dimensional integral with respect to the distribution
of $A_\infty$. It is straightforward to verify that the value function satisfies \eqref{wbsr}.
\end{example}
The relation between uniqueness of an equivalent martingale measure and market
completeness in markets with infinitely many traded assets has been studied in,
among others, \cite{jarrow1999hedging}. Our results are consistent with this
framework. In particular, \Cref{thm:rnm_gbm,lem:measurable_factorization}
establish the existence and uniqueness of the risk-neutral measure and relate
this uniqueness to the asymptotic completeness of the market. More precisely,
the asymptotic completeness obtained here, through approximation by
finite-dimensional markets, implies quasicompleteness in the terminology of
\cite{jarrow1999hedging}, namely, attainable claims are dense in the weak
topology induced by the class of signed local martingale measures. This is
consistent with the second fundamental theorem established in Theorem~5 of
\cite{jarrow1999hedging}, which relates uniqueness of the equivalent martingale
measure to quasicompleteness in an infinite-asset market.
\subsection{Financial Interpretation: Segmented Markets}

In this section, we provide a concrete two-dimensional example to illustrate a possible financial interpretation of our pricing and hedging framework in terms of segmented markets. For related studies on segmented financial markets, see, for instance, \cite{greenwood2018asset,goldstein2014speculation,kabanov2002hedging}. From a mathematical perspective, this example is simply a finite-dimensional truncation of the Hilbert-space-valued geometric Brownian motion introduced in Section~\ref{GBMsec}. Here, however, we assign an economic interpretation to its mathematical structure by viewing the individual components as distinct markets. Although we focus on two markets for simplicity, the construction can be readily extended to an arbitrary finite number of markets and, under suitable conditions, to the infinite-dimensional setting.

Consider two segmented markets, each equipped with its own risky asset and risk-free account, with possibly different interest rates. Let $H$ be a two-dimensional Hilbert space with an orthonormal basis $\{e_1,e_2\}$, and define $H_i=\operatorname{span}\{e_i\}$ for $i=1,2$, so that $H=H_1\oplus H_2$. We assume that trading across the two markets is prohibited, so that assets can only be traded within their respective markets. To reflect this restriction mathematically, we represent the monetary values associated with different markets along distinct orthogonal basis vectors. Specifically, each subspace $H_i$ represents a distinct market, with $e_i$ denoting its monetary unit. Thus, $ae_i$ represents an amount of $a$ units in market $i$.
Similarly, $be_2$ represents $b$ units of currency in the second market, but cannot represent a monetary value in the first market,
since $be_2\notin H_1$. Furthermore, no mapping between the spaces $H_i$ is introduced. The vector $ae_1+be_2$ then represents $a$ units of currency in the first market and $b$ units in the second market. Although these two components can be added as vectors in $H$ to represent a portfolio
of assets held in different markets, they cannot be directly combined into a single scalar monetary value without an additional conversion rule. In this way, the direct-sum structure provides a mathematical representation of market segmentation, while still allowing trading strategies in each market to depend on information from both markets.

Let $r_1,r_2\in\mathbb R$ be the constant risk-free interest rates of the two markets, with $r_1\neq r_2$. The corresponding bank accounts are given by
\begin{equation}
B_t^1=e^{r_1t},\qquad B_t^2=e^{r_2t},
\end{equation}
and the $H$-valued bank account process is $B_t=B_t^1e_1+B_t^2e_2$. Define $A\in L(H)$ by $Ae_i=r_ie_i$ for $i=1,2$. Then $S(t)=e^{tA}$ and $B_t=S(t)(e_1+e_2)$.

Let $X_t^1$ and $X_t^2$ denote the risky asset prices in the first and second markets, respectively. Under the risk-neutral measure $\mathbb Q$, suppose that
\begin{equation}
dX_t^1=r_1X_t^1\,dt+\sigma_1X_t^1\,dW_t^1,
\qquad
dX_t^2=r_2X_t^2\,dt+\sigma_2X_t^2\,dW_t^2.
\end{equation}
where $\sigma_1,\sigma_2>0$ and $W^1,W^2$ are independent standard Brownian motions under $\mathbb Q$. Equivalently, these dynamics can be expressed in the $H_i$-valued form as
\begin{equation}
dX_t^1e_1=r_1X_t^1e_1\,dt+\sigma_1X_t^1e_1\,dW_t^1,
\qquad
dX_t^2e_2=r_2X_t^2e_2\,dt+\sigma_2X_t^2e_2\,dW_t^2.
\end{equation}

Define the $H$-valued asset price process $X_t=X_t^1e_1+X_t^2e_2$. Its dynamics can be written as
\begin{equation}
dX_t=AX_t\,dt+\mathcal B(X_t)\,dW_t,
\end{equation}
where $W_t=W_t^1e_1+W_t^2e_2$ and $\mathcal B:H\to L(H)$ is defined by
\begin{equation}
\mathcal B(x)h
=\sigma_1\langle x,e_1\rangle_H\langle h,e_1\rangle_H e_1
+\sigma_2\langle x,e_2\rangle_H\langle h,e_2\rangle_H e_2.
\end{equation}

Consider a contingent claim with maturity $T$ and $H$-valued payoff
\begin{equation}
G(X_T)=G_1(X_T^1,X_T^2)e_1+G_2(X_T^1,X_T^2)e_2,
\end{equation}
where $G_1,G_2:\mathbb R^2\to\mathbb R$ are given payoff functions. Each component represents a payoff denominated in the monetary unit of its respective market, while both components may depend on the risky asset prices observed in the two markets.  Unlike the diagonal structure in \Cref{generi}, where each payoff component depends only on the asset price in its own market, the payoff components here may depend on asset prices from both markets. Consequently, although cross-market trading is prohibited, the two markets cannot be regarded as completely independent.

Each portfolio defined in \eqref{portde} admits a meaningful financial interpretation in this setting. Indeed, let $\Phi_t=(\phi_{ij}(t))_{i,j=1,2}$ and $\Psi_t=(\psi_{ij}(t))_{i,j=1,2}$ denote the operator-valued positions in the risky assets and bank accounts, respectively. The portfolio value can be expressed as
\begin{equation}
\begin{aligned}
V_t
={}&\Phi_tX_t+\Psi_tB_t\\
={}&\left(\phi_{11}(t)X_t^1+\phi_{12}(t)X_t^2+\psi_{11}(t)B_t^1+\psi_{12}(t)B_t^2\right)e_1\\
&+\left(\phi_{21}(t)X_t^1+\phi_{22}(t)X_t^2+\psi_{21}(t)B_t^1+\psi_{22}(t)B_t^2\right)e_2.
\end{aligned}
\end{equation}

In the first market, this portfolio can be interpreted as holding $\phi_{11}(t)$ units of the first risky asset and
\begin{equation}
\beta_t^1=\psi_{11}(t)+\frac{\phi_{12}(t)X_t^2+\psi_{12}(t)B_t^2}{B_t^1}
\end{equation}
units of the first bank account. Similarly, in the second market, the portfolio holds $\phi_{22}(t)$ units of the second risky asset and
\begin{equation}
\beta_t^2=\psi_{22}(t)+\frac{\phi_{21}(t)X_t^1+\psi_{21}(t)B_t^1}{B_t^2}
\end{equation}
units of the second bank account. Consequently,
\begin{equation}
V_t=\left(\phi_{11}(t)X_t^1+\beta_t^1B_t^1\right)e_1
+\left(\phi_{22}(t)X_t^2+\beta_t^2B_t^2\right)e_2.
\end{equation}

Thus, the portfolio can be interpreted entirely in terms of positions in the assets of the respective markets, without requiring any cross-market trading. The non-diagonal entries of $\Phi_t$ and $\Psi_t$ contribute to the local monetary values and may depend on information from the other market, but do not represent positions in assets traded across markets.

We next consider the pricing of the contingent claim $G(X_T)$. By the risk-neutral pricing formula \eqref{prifo1}, its time-$t$ value is given by
\begin{equation}
\begin{aligned}
V_t
={}&S(t-T)\mathbb E^{\mathbb Q}[G(X_T)\mid\mathcal F_t]\\
={}&e^{-r_1(T-t)}\mathbb E^{\mathbb Q}[G_1(X_T^1,X_T^2)\mid\mathcal F_t]e_1\\
&+e^{-r_2(T-t)}\mathbb E^{\mathbb Q}[G_2(X_T^1,X_T^2)\mid\mathcal F_t]e_2.
\end{aligned}
\end{equation}

An interesting observation is that, when viewed from the perspective of an individual market, the pricing problem is closely related to that of a classical incomplete market. For instance, in the first market, only the risky asset $X^1$ and the bank account $B^1$ are traded, while $X^2$ can be regarded as an observable but non-traded risk factor. Since the payoff $G_1(X_T^1,X_T^2)$ may depend on both sources of randomness, it is generally not attainable by trading solely in the first market. Nevertheless, its price under the chosen equivalent martingale measure $\mathbb Q$ is given by
\begin{equation}
V_t^1=e^{-r_1(T-t)}\mathbb E^{\mathbb Q}[G_1(X_T^1,X_T^2)\mid\mathcal F_t],
\end{equation}
which is precisely the risk-neutral valuation formula associated with that measure in an incomplete market. The same observation applies to the second market. Thus, the present framework is consistent with a familiar interpretation of incomplete-market pricing at the level of individual market values, while retaining the $H$-valued portfolio formulation introduced above.

We now construct a replicating portfolio following the approach of \Cref{thm:replication-yn}. Write the pricing function as $V(t,x)=V_1(t,x)e_1+V_2(t,x)e_2$, where $x=x_1e_1+x_2e_2$. Assuming sufficient regularity, define the operator-valued risky asset position by
\begin{equation}
\Phi_t=D_xV(t,X_t)
=
\begin{pmatrix}
\partial_{x_1}V_1(t,X_t)&\partial_{x_2}V_1(t,X_t)\\
\partial_{x_1}V_2(t,X_t)&\partial_{x_2}V_2(t,X_t)
\end{pmatrix}.
\end{equation}

To determine the bank account position $\Psi_t$, set
\begin{equation}
\begin{aligned}
R_t&=V(t,X_t)-D_xV(t,X_t)X_t,\\
Q_t&=\partial_tV(t,X_t)
+\frac12\operatorname{Tr}_{\mathcal B(X_t)}\bigl(D_x^2V(t,X_t)\bigr).
\end{aligned}
\end{equation}
As in \Cref{thm:replication-yn}, we choose $\Psi_t\in L(H)$ such that
\begin{equation}
\Psi_tB_t=R_t,\qquad \Psi_tAB_t=Q_t.
\end{equation}
Since $r_1\neq r_2$, the vectors $B_t$ and $AB_t$ are linearly independent, so these conditions uniquely determine $\Psi_t$. In particular, writing $R_t=R_t^1e_1+R_t^2e_2$ and $Q_t=Q_t^1e_1+Q_t^2e_2$, we obtain
\begin{equation}
\Psi_t=
\begin{pmatrix}
\dfrac{Q_t^1-r_2R_t^1}{(r_1-r_2)B_t^1}
&
\dfrac{r_1R_t^1-Q_t^1}{(r_1-r_2)B_t^2}\\[8pt]
\dfrac{Q_t^2-r_2R_t^2}{(r_1-r_2)B_t^1}
&
\dfrac{r_1R_t^2-Q_t^2}{(r_1-r_2)B_t^2}
\end{pmatrix}.
\end{equation}

By the pricing equation, $Q_t=AV(t,X_t)-D_xV(t,X_t)AX_t$. Consequently, the portfolio satisfies
\begin{equation}
V(t,X_t)=\Phi_tX_t+\Psi_tB_t,\qquad
dV(t,X_t)=\Phi_t\,dX_t+\Psi_t\,dB_t.
\end{equation}
Moreover, $S(-t)V(t,X_t)$ is a $\mathbb Q$-martingale, and $V(T,X_T)=G(X_T)$. Hence, the portfolio is admissible in the sense of Definition~\ref{selfstrong} and replicates the claim.

We can now express the portfolio in terms of the actual asset holdings in each market. In the first market, the portfolio holds $\partial_{x_1}V_1(t,X_t)$ units of the first risky asset and
\begin{equation}
\begin{aligned}
\beta_t^1
&=\psi_{11}(t)+\frac{\phi_{12}(t)X_t^2+\psi_{12}(t)B_t^2}{B_t^1}\\
&=\frac{V_1(t,X_t)-X_t^1\partial_{x_1}V_1(t,X_t)}{B_t^1}
\end{aligned}
\end{equation}
units of the first bank account. Similarly, in the second market, it holds $\partial_{x_2}V_2(t,X_t)$ units of the second risky asset and
\begin{equation}
\begin{aligned}
\beta_t^2
&=\psi_{22}(t)+\frac{\phi_{21}(t)X_t^1+\psi_{21}(t)B_t^1}{B_t^2}\\
&=\frac{V_2(t,X_t)-X_t^2\partial_{x_2}V_2(t,X_t)}{B_t^2}
\end{aligned}
\end{equation}
units of the second bank account.

Thus, the replicating portfolio admits the following market-by-market interpretation:
\begin{equation}
\begin{aligned}
V(t,X_t)
={}&\left(\partial_{x_1}V_1(t,X_t)X_t^1+\beta_t^1B_t^1\right)e_1\\
&+\left(\partial_{x_2}V_2(t,X_t)X_t^2+\beta_t^2B_t^2\right)e_2.
\end{aligned}
\end{equation}
Although the positions in each market may depend on the asset prices of both markets, the portfolio can be interpreted using only the risky asset and bank account belonging to each respective market. Note that the self-financing condition is imposed on the aggregate portfolio $V$, not separately on each market component, so the decomposition above describes the holdings in each market at each time rather than a self-financing strategy within each market.

\bibliographystyle{elsarticle-num-names}
\bibliography{sample}

@book{da2014stochastic,
  title={Stochastic equations in infinite dimensions},
  author={Da Prato, Giuseppe and Zabczyk, Jerzy},
  year={2014},
  publisher={Cambridge university press}
}

@book{hytonen2016analysis,
  title={Analysis in Banach spaces},
  author={Hyt{\"o}nen, Tuomas and Van Neerven, Jan and Veraar, Mark and Weis, Lutz},
  volume={12},
  year={2016},
  publisher={Springer}
}

@book{goldstein2017semigroups,
  title={Semigroups of linear operators and applications},
  author={Goldstein, Jerome A},
  year={2017},
  publisher={Courier Dover Publications}
}

@book{benth2023stochastic,
  title={Stochastic Models for Prices Dynamics in Energy and Commodity Markets: An Infinite-Dimensional Perspective},
  author={Benth, Fred Espen and Kr{\"u}hner, Paul},
  year={2023},
  publisher={Springer Nature}
}

@article{cont2021,
  title={A stochastic partial differential equation model for limit order book dynamics},
  author={Cont, Rama and M{\"u}ller, Marvin S},
  journal={SIAM Journal on Financial Mathematics},
  volume={12},
  number={2},
  pages={744--787},
  year={2021},
  publisher={SIAM}
}

@book{meyer2012banach,
  title={Banach lattices},
  author={Meyer-Nieberg, Peter},
  year={2012},
  publisher={Springer Science \& Business Media}
}

@article{desoer1963note,
  title={A note on pseudoinverses},
  author={Desoer, Charles A and Whalen, Barry H},
  journal={Journal of the Society for Industrial and Applied Mathematics},
  volume={11},
  number={2},
  pages={442--447},
  year={1963},
  publisher={SIAM}
}

@article{jarrow1999hedging,
  title={Hedging contingent claims on semimartingales},
  author={Jarrow, Robert and Madan, Dilip B},
  journal={Finance and Stochastics},
  volume={3},
  number={1},
  pages={111--134},
  year={1999},
  publisher={Springer}
}

@book{filipovic2001consistency,
  title={Consistency problems for Heath-Jarrow-Morton interest rate models},
  author={Filipovic, Damir},
  year={2001},
  publisher={Springer Science \& Business Media}
}

@article{heath1992bond,
  title={Bond pricing and the term structure of interest rates: A new methodology for contingent claims valuation},
  author={Heath, David and Jarrow, Robert and Morton, Andrew},
  journal={Econometrica: Journal of the Econometric Society},
  pages={77--105},
  year={1992},
  publisher={JSTOR}
}

@article{filipovic2010term,
  title={Term structure models driven by Wiener processes and Poisson measures: existence and positivity},
  author={Filipovi{\'c}, Damir and Tappe, Stefan and Teichmann, Josef},
  journal={SIAM Journal on Financial Mathematics},
  volume={1},
  number={1},
  pages={523--554},
  year={2010},
  publisher={SIAM}
}

@book{carmona2006interest,
  title={Interest rate models: an infinite dimensional stochastic analysis perspective},
  author={Carmona, Ren{\'e} A and Tehranchi, Michael R},
  year={2006},
  publisher={Springer}
}

@article{benth2014representation,
  title={Representation of infinite-dimensional forward price models in commodity markets},
  author={Benth, Fred Espen and Kr{\"u}hner, Paul},
  journal={Communications in Mathematics and Statistics},
  volume={2},
  number={1},
  pages={47--106},
  year={2014},
  publisher={Springer}
}

@article{benth2015derivatives,
  title={Derivatives pricing in energy markets: an infinite-dimensional approach},
  author={Benth, Fred Espen and Kr{\"u}hner, Paul},
  journal={SIAM Journal on Financial Mathematics},
  volume={6},
  number={1},
  pages={825--869},
  year={2015},
  publisher={SIAM}
}

@article{taflin2005bond,
  title={Bond market completeness and attainable contingent claims},
  author={Taflin, Erik},
  journal={Finance and Stochastics},
  volume={9},
  number={3},
  pages={429--452},
  year={2005},
  publisher={Springer}
}

@article{aihara2005stochastic,
  title={Stochastic Hyperbolic Dynamics For Infinite-Dimensional Forward Rates And Option Pricing},
  author={Aihara, Shin Ichi and Bagchi, Arunabha},
  journal={Mathematical Finance: An International Journal of Mathematics, Statistics and Financial Economics},
  volume={15},
  number={1},
  pages={27--47},
  year={2005},
  publisher={Wiley Online Library}
}

@article{chang1999european,
  title={The European option with hereditary price structures: basic theory},
  author={Chang, Mou-Hsiung and Youree, Roger K},
  journal={Applied Mathematics and Computation},
  volume={102},
  number={2-3},
  pages={279--296},
  year={1999},
  publisher={Elsevier}
}

@article{chang2007infinite,
  title={Infinite-dimensional Black-Scholes equation with hereditary structure},
  author={Chang, Mou-Hsiung and Youree, Roger K},
  journal={Applied Mathematics and Optimization},
  volume={56},
  number={3},
  pages={395--424},
  year={2007},
  publisher={Springer}
}

@article{arriojas2007delayed,
  title={A delayed Black and Scholes formula},
  author={Arriojas, Mercedes and Hu, Yaozhong and Mohammed, Salah-Eldin and Pap, Gyula},
  journal={Stochastic Analysis and Applications},
  volume={25},
  number={2},
  pages={471--492},
  year={2007},
  publisher={Taylor \& Francis}
}

@inproceedings{da2007kolmogorov,
  title={Kolmogorov equations for stochastic PDE's with multiplicative noise},
  author={Da Prato, Giuseppe},
  booktitle={Stochastic Analysis and Applications: The Abel Symposium 2005},
  pages={235--263},
  year={2007},
  organization={Springer}
}

@book{diestel1977vector,
  author    = {Diestel, Joseph and Uhl, J. Jerry, Jr.},
  title     = {Vector Measures},
  series    = {Mathematical Surveys},
  volume    = {15},
  publisher = {American Mathematical Society},
  address   = {Providence, RI},
  year      = {1977},
  doi       = {10.1090/surv/015},
  isbn      = {9780821815151}
}

@article{greenwood2018asset,
  title={Asset price dynamics in partially segmented markets},
  author={Greenwood, Robin and Hanson, Samuel G and Liao, Gordon Y},
  journal={The Review of Financial Studies},
  volume={31},
  number={9},
  pages={3307--3343},
  year={2018},
  publisher={Oxford University Press}
}

@article{goldstein2014speculation,
  title={Speculation and hedging in segmented markets},
  author={Goldstein, Itay and Li, Yan and Yang, Liyan},
  journal={The Review of Financial Studies},
  volume={27},
  number={3},
  pages={881--922},
  year={2014},
  publisher={Oxford University Press}
}

@article{kabanov2002hedging,
  title={Hedging under transaction costs in currency markets: A continuous-time model},
  author={Kabanov, Yuri M and Last, G{\"u}nter},
  journal={Mathematical Finance},
  volume={12},
  number={1},
  pages={63--70},
  year={2002},
  publisher={Wiley Online Library}
}

\end{document}